\providecommand{\keywords}[1]
{
  \small	
  \textbf{\textbf{Keywords---}} #1
}
\title{Approximate Byzantine Fault-Tolerance in \\Distributed Optimization
\thanks{\textbf{This updated version contains an alternative analysis of the proposed algorithm with CGE, making better use of the $2f$-redundancy property in Appendix~\ref{appdx:conv-guarantee-2}}. This report serves as the full version of the accepted paper with the same title \cite{publishedver}.}
}
\author{Shuo Liu \thanks{Georgetown University. Email: {\tt sl1539@georgetown.edu}.} \hspace{0.5in} Nirupam Gupta \thanks{École Polytechnique Fédérale de Lausanne (EPFL). Email: {\tt nirupam.gupta@epfl.ch}.} \hspace{0.5in} Nitin H. Vaidya \thanks{Georgetown University. Email: {\tt nitin.vaidya@georgetown.edu}.}
}
\date{}
\newtheorem{assumption}{\bfseries Assumption}
\newtheorem{definition}{\bfseries Definition}
\newtheorem{theorem}{\bfseries Theorem}
\newtheorem{lemma}{\bfseries Lemma}
\newtheorem*{lemma*}{Lemma}
\providecommand{\iprod}[2]{\ensuremath{\left\langle #1,\,#2  \right\rangle}}
\providecommand{\norm}[1]{\ensuremath{\left\lVert#1\right\rVert }}
\providecommand{\mnorm}[1]{\ensuremath{\left\lvert#1\right\rvert}}
\providecommand{\dist}[2]{\ensuremath{\mathrm{dist}\left( #1,\,#2 \right)}}
\providecommand{\rank}[1]{\ensuremath{\mathrm{rank}\left( #1 \right)}}
\providecommand{\comment}[1]{}
\providecommand{\nitin}[1]{}
\def\R{\mathbb{R}}
\def\H{\mathcal{H}}
\def\B{\mathcal{B}}
\def\O{\mathcal{O}}
\def\D{\mathsf{D}}
\def\L{\mathsf{L}}
\def\M{\mathsf{M}}
\def\W{\mathcal{W}}
\def\E{\mathbb{E}}
\def\gf{\mathsf{GradFilter}}
\begin{document}

\maketitle

\begin{abstract}
    This paper considers the problem of Byzantine fault-tolerance in distributed multi-agent optimization. In this problem, each agent has a local cost function, and in the fault-free case, the goal is to design a distributed algorithm that allows all the agents to find a minimum point of all the agents' aggregate cost function. We consider a scenario where some agents might be Byzantine faulty that renders the original goal of computing a minimum point of all the agents' aggregate cost vacuous.
    A more reasonable objective for an algorithm in this scenario is to allow all the non-faulty agents to compute the minimum point of only the non-faulty agents' aggregate cost. Prior work~\cite{gupta2020fault} shows that if there are up to $f$ (out of $n$) Byzantine agents then a minimum point of the non-faulty agents' aggregate cost can be computed {\em exactly} {if and only if} the non-faulty agents' costs satisfy a certain redundancy property called \textit{$2f$-redundancy}. However, $2f$-redundancy is an ideal property that can be satisfied only in systems free from noise or uncertainties, which can make the goal of exact fault-tolerance \textit{unachievable} in some
    applications.
    Thus, we introduce the notion of $(f,\epsilon)$-resilience, a generalization of exact fault-tolerance
    wherein the objective is to find an approximate minimum point of the non-faulty aggregate cost, with $\epsilon$ accuracy. This approximate fault-tolerance can be achieved under a weaker condition that is easier to satisfy in practice, compared to $2f$-redundancy.
    We obtain necessary and sufficient conditions for achieving $(f, \, \epsilon)$-resilience characterizing the correlation between relaxation in redundancy and approximation in resilience. In case when the agents' cost functions are differentiable, we obtain conditions for $(f, \, \epsilon)$-resilience of the distributed gradient-descent method when equipped with \textit{robust gradient aggregation}; such as \textit{comparative gradient elimination} or \textit{coordinate-wise trimmed mean}. 
\end{abstract}




\keywords{Distributed optimization; Approximate fault-tolerance; Distributed gradient-descent}

\newpage
\tableofcontents
\newpage
\comment{This version now incorporates all comments from Nitin's email on Aug 11.\\
TODO:\\
- New necessity proof\\
- Check numerical experiments and Fig. 2
}

\section{Introduction}

The problem of distributed optimization in multi-agent systems has gained significant attention in recent years~\cite{boyd2011distributed, nedic2009distributed, duchi2011dual}. In this problem, each agent has a {\em local cost function} and, when the agents are fault-free, the goal is to design algorithms that allow the agents to collectively minimize the aggregate of their cost functions. To be precise, suppose that there are $n$ agents in the system and let $Q_i(x)$ denote the {local} cost function of agent $i$, where $x$ is a $d$-dimensional vector of real values, i.e., $x\in \R^d$. A traditional distributed optimization algorithm outputs a {\em global minimum} $x^*$ such that
\begin{align}
    x^* \in \arg \min_{x \in \R^d} ~ \sum_{i = 1}^n Q_i(x). \label{eqn:obj}
\end{align}
As a simple example, $Q_i(x)$ may denote the cost for an agent $i$ (which may be a robot or a person) to travel to location $x$ from their current location,
and $x^*$ is a location that minimizes the total cost of meeting for all the agents. Such multi-agent optimization 
is of interest in many practical applications,
 including distributed machine learning~\cite{boyd2011distributed}, swarm robotics~\cite{raffard2004distributed}, and distributed sensing~\cite{rabbat2004distributed}.\\

We consider the distributed optimization problem in the presence of up to $f$ Byzantine faulty agents, originally introduced by Su and Vaidya \cite{su2016fault}. The Byzantine faulty agents may behave arbitrarily~\cite{lamport1982byzantine}. In particular, the non-faulty agents
may share arbitrary incorrect and inconsistent information in order to bias the output of a distributed optimization algorithm. For example, consider an application
of multi-agent optimization in the case of distributed sensing where the agents (or {\em sensors}) observe a common {\em object} in order to collectively identify the object. However, the faulty agents may send arbitrary observations concocted to prevent the non-faulty agents from making the correct identification~\cite{chen2018resilient, chong2015observability, pajic2014robustness, su2016non}. Similarly, in the case of distributed learning, which is another application of distributed optimization, the faulty agents may send incorrect information based on {\em mislabelled} or arbitrary concocted data points to prevent the non-faulty agents from learning a {\em good} classifier~\cite{alistarh2018byzantine, bernstein2018signsgd, blanchard2017machine, cao2019distributed, charikar2017learning, chen2017distributed, gupta2019byzantine_allerton, xie2018generalized}. 

\subsection{Background: Exact Fault-Tolerance}
\label{sub:background}

In the {\em exact fault-tolerance} problem, the goal is to design a distributed algorithm that allows all the non-faulty agents to compute a minimum point of the aggregate cost of only the non-faulty agents~\cite{gupta2020fault}. Specifically, suppose that in a given
execution, set $\B$ with $\mnorm{\B} \leq f$ is the set of 
Byzantine agents, where notation $\mnorm{\cdot}$ denotes the set cardinality, and $\H = \{1, \ldots, \, n\} \setminus \B$ denotes the set of non-faulty (i.e., honest) agents. Then, a distributed optimization algorithm has exact fault-tolerance if it outputs a point $x^*_{\H}$ such that
\begin{align}
    x^*_{\H} \in \arg \min_{x \in \R^d} ~ \sum_{i \in \H} Q_i(x). \label{eqn:honest_obj}
\end{align}
However, since the identity of the Byzantine agents is a priori unknown, in general, exact fault-tolerance is unachievable \cite{su2016fault}. Specifically, as shown in~\cite{gupta2020fault, gupta2020resilience}, exact fault-tolerance can be achieved {\em if and only if} the agents' cost functions satisfy the {\em $2f$-redundancy} property defined below. 


\begin{definition}[{\bf $2f$-redundancy}] 
\label{def:red}
The agents' cost functions are said to have {\em $2f$-redundancy} property if and only if  
for every pair of subsets $S, \,\widehat{S} \subseteq \{1, \ldots, \, n\}$ with $\widehat{S} \subseteq S$, $\mnorm{S} = n-f$, and $\mnorm{\widehat{S}} \geq n-2f$,
$$\arg \min_{x \in \R^d} ~ \sum_{i \in \widehat{S}}Q_i(x) = \arg \min_{x \in \R^d} ~ \sum_{i \in S} Q_i(x).$$
\end{definition}

In principle, the {$2f$-redundancy} property can be realized by design for many applications of multi-agent distributed optimization including distributed sensing and distributed learning (see~\cite{gupta2019byzantine, gupta2020fault}). However, practical realization of {$2f$-redundancy} can be difficult in the presence of {\em noise} in the real-world systems. Therefore, we propose a pragmatic generalization of exact fault-tolerance, namely {\em $(f,\epsilon)$-resilience}. 

\subsection{$(f,\epsilon)$-Resilience: A Relaxation of Exact Fault-Tolerance}
\label{sub:approx}

Intuitively, the proposed notion of {\em $(f,\epsilon)$-resilience} requires an algorithm to output {\em approximation} of a minimum point of the aggregate of the cost functions of sufficiently large subsets of non-faulty agents. We define {\em $(f, \, \epsilon)$-resilience} below, where $\epsilon \in \R_{\geq 0}$ is the measure of approximation and $\norm{\cdot}$ denotes the Euclidean norm. The Euclidean distance between a point $x$ and a non-empty set $X$ in space $\R^d$ is denoted by \dist{x}{X}, and is defined as
\begin{align}
    \dist{x}{X} = \inf_{y\in X}\norm{x-y}. \label{eqn:dist_pt_set}
\end{align}

\begin{definition}[\textbf{$(f, \, \epsilon)$-resilience}]
\label{def:approx_res}
A distributed optimization algorithm is said to be $(f, \, \epsilon)$-resilient if it outputs a point $\widehat{x} \in \R^d$ such that for every subset $S$ of non-faulty agents with $\mnorm{S} = n-f$, 
\[\dist{\widehat{x}}{\arg \min_{x \in \R^d} \sum_{i \in S}Q_i(x)} \leq \epsilon,\]
despite the presence of up to $f$ Byzantine agents.
\end{definition}

Thus, with {\em $(f, \, \epsilon)$-resilience}, the output is within distance $\epsilon$ of a minimum point of the aggregate cost function of any $n-f$ non-faulty agents. As there can be at most $f$ Byzantine faulty agents whose identity remains unknown, the following two scenarios are indistinguishable in general: (1) there are exactly $f$ Byzantine agents, and (2) there are less than $f$ Byzantine agents. Thus, estimation for the minimum point of the aggregate cost functions of $n-f$ non-faulty agents is indeed a reasonable goal~\cite{su2016fault}. 
Analogous resilience requirements have been previously studied in other contexts as well, such as robust statistics (e.g., robust mean estimation~\cite{charikar2017learning,steinhardt2017resilience}) and fault-tolerant linear state estimation~\cite{bhatia2015robust,fawzi2014secure, feng2014distributed,mishra2016secure, pajic2017attack, shoukry2017secure}. In this work, we address resilience in the context of distributed optimization.\\

In this paper, we only consider {\em deterministic} algorithms which, given a fixed set of inputs from the agents, always output the same point in $\R^d$. Thus, a deterministic $(f, \, \epsilon)$-resilient algorithm produces a unique output point in all of its executions with identical inputs from all the agents (including the faulty ones). 
{\bf Note that} in the deterministic framework, exact fault-tolerance is equivalent to {$(f, \, 0)$-resilience}, i.e., a deterministic {$(f, \, 0)$-resilient} algorithm achieves exact fault-tolerance, and vice-versa.\footnote{Refer to Appendix~\ref{app:exact_approx} for proof.} 
Therefore, results on $(f, \, \epsilon)$-resilience for arbitrary $\epsilon \geq 0$ have a wider application compared to results applicable only to exact fault-tolerance, e.g.,~\cite{blanchard2017machine, guerraoui2018hidden, gupta2020fault, su2018finite}. \\

We show that {\em $(f, \, \epsilon)$-resilience} requires a {\em weaker redundancy} condition, in comparison to $2f$-redundancy, named {\em $(2f, \, \epsilon)$-redundancy} defined in Definition \ref{def:approx_red} below. Recall that the {\em Euclidean Hausdorff distance} between two sets $X$ and $Y$ in $\R^d$, which we denote by $\dist{X}{Y}$, is defined as follows~\cite{munkres2000topology}:
\begin{equation}
    \dist{X}{Y}\triangleq\max\left\{\sup_{x\in X} \dist{x}{Y}, ~ \sup_{y\in Y}\dist{y}{X}\right\}. \label{eqn:hausdorff}
\end{equation}

\begin{definition}[\textbf{$(2f, \, \epsilon)$-redundancy}]
\label{def:approx_red}
The agents' cost functions are said to have {\em $(2f, \, \epsilon)$-redundancy} property if and only if for every pair of subsets $S, \,\widehat{S} \subseteq \{1, \ldots, \, n\}$ with
$\mnorm{S} = n-f$, $\mnorm{\widehat{S}} = n-2f$ and $\widehat{S} \subseteq S$,
\begin{equation}
    \dist{\arg\min_{x \in \R^d}\sum_{i\in S}Q_i(x)}{\arg\min_{x \in \R^d}\sum_{i\in \widehat{S}}Q_i(x)} \leq \epsilon. \label{eqn:red_dist}
\end{equation}
\end{definition}

It is easy to show that $2f$-redundancy (Definition~\ref{def:red}) is equivalent to $(2f, \, 0)$-redundancy (note that $\epsilon=0$ here). It is also obvious that $2f$-redundancy implies $(2f, \, \epsilon)$-redundancy for all $\epsilon \geq 0$. However, the converse need not be true. Thus, the $(2f, \, \epsilon)$-redundancy property with $\epsilon > 0$ is {\em weaker} than $2f$-redundancy. 

\subsection{Applications}
\label{sub:app}
Our results are applicable to a large class of distributed optimization problems; including distributed sensing~\cite{chong2015observability, pajic2017attack, pajic2014robustness, su2018finite},  distributed machine learning~\cite{bottou2018optimization, boyd2011distributed, chen2017distributed, yin2018byzantine}, and distributed linear regression (Section \ref{sec:experiments}).
We discuss below the specific case of distributed learning.\\

{\bf Distributed Learning:}
In this particular optimization problem, each agent has some local {\em data points} and the goal for the agents is to compute a learning parameter that best models the collective data points observed by all the agents~\cite{bottou2018optimization}. Specifically, given a learning parameter $x$, for each data point $z$, we define a loss function $\ell(x; \, z)$. Suppose that the data generating distribution of agent $i$ is $\mathcal{D}_i$,
and let $\E_{z \sim \mathcal{D}_i}$ denote the expectation with respect to the random data point $z$ over distribution $\mathcal{D}_i$.
Then, 
\[Q_i(x) \triangleq \E_{z \sim \mathcal{D}_i} ~ \ell (x; z)\]

When the distribution of data points is identical for all the agents then the $2f$-redundancy property holds true. However, in practice this is rarely the case~\cite{chen2017distributed, yin2018byzantine}. Indeed, different agents may have different data distributions in practice. Therefore, exact fault-tolerance in a pragmatic distributed learning framework is an extremely difficult (if not impossible) goal. 
In the context of distributed learning, our results on approximate fault-tolerance characterize the relationship between the correlation amongst different agents' data (i.e., degree of redundancy), and the fault-tolerance achieved.

\subsection{System architecture}

We consider {\em synchronous} systems.
Our results apply to the two architectures shown in Figure \ref{fig:sys}. In the server-based architecture, the server is assumed to be trustworthy, but up to $f$ agents may be Byzantine faulty.
In the peer-to-peer architecture, the agents are connected by a complete network, and up to $f$ of these agents may be Byzantine faulty. Provided that $f<\frac{n}{3}$, an algorithm for the server-based architecture can be simulated in the peer-to-peer system using
the well-known {\em Byzantine broadcast} primitive~\cite{lynch1996distributed}. For simplicity of presentation, the rest of this paper considers the server-based architecture. 

\begin{figure}[htb!]
\centering
\includegraphics[width = 0.5\linewidth]{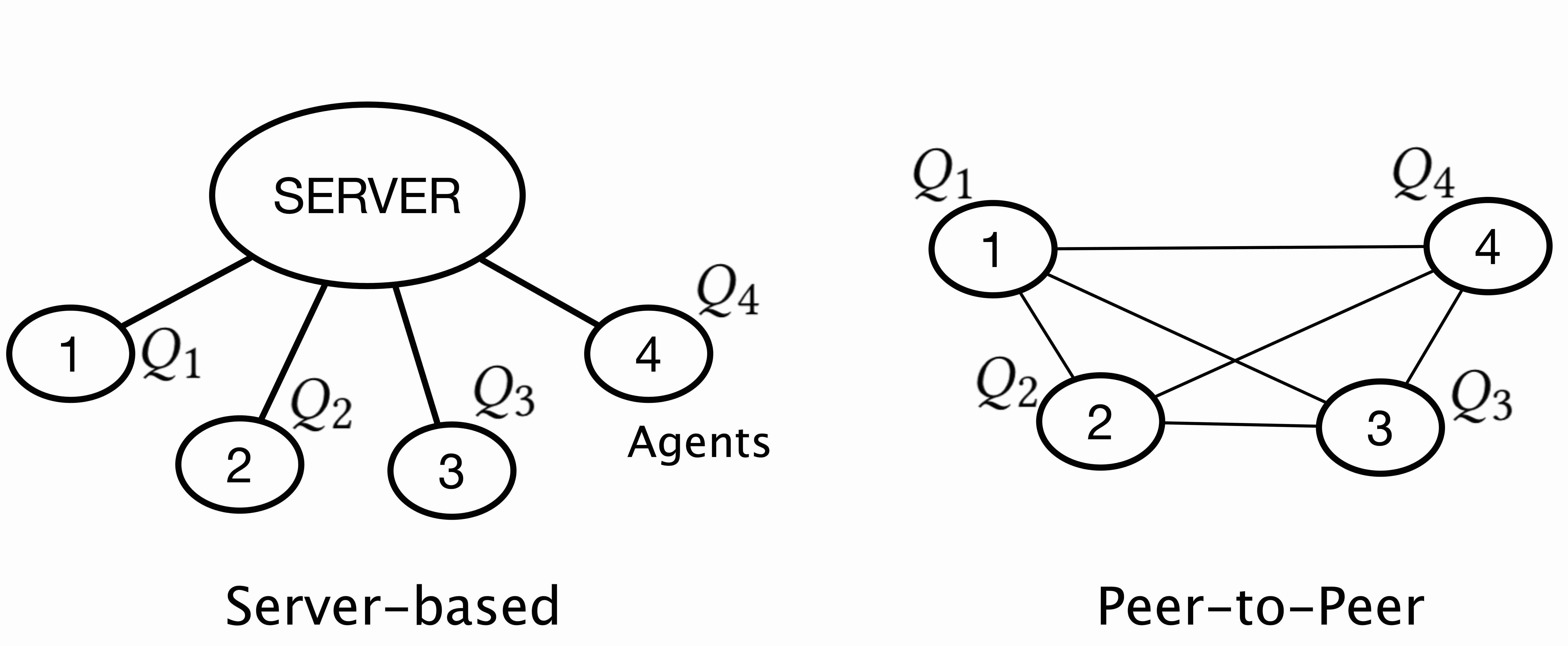}
\caption{\small{\it System architecture.}}
\label{fig:sys}
\end{figure}

\subsection{Summary of Our Contributions}
\label{sub:contri}

In the {\bf first part} of the paper, i.e., Section~\ref{sec:approx-fault-tolerance}, we obtain conditions on feasibility and achievability of approximate fault-tolerance of desirable accuracy. Specifically, we show that 

\begin{itemize}
\setlength{\itemsep}{0.3em}
    \item $(f, \epsilon)$-resilience is feasible only if $(2f, \epsilon)$-redundancy property holds true.
    \item If $(2f, \epsilon)$-redundancy property holds true then $(f, 2\epsilon)$-resilience is achievable.
\end{itemize}

In the {\bf second part}, i.e., Sections~\ref{sec:grad_des} and~\ref{sec:experiments}, we consider the case when agents' costs are differentiable, such as in machine learning~\cite{bottou2018optimization, yang2017byrdie}, or regression~\cite{gupta2019byzantine, su2018finite, su2016non}. We consider the distributed gradient-descent (DGD) method - an iterative distributed optimization algorithm commonly used in this particular case.

\begin{itemize}
\setlength{\itemsep}{0.3em}
    \item We propose a generic sufficient condition for convergence of the DGD method equipped with a {\em gradient-filter} (also referred as {\em robust gradient aggregation}), which is a common fault-tolerance mechanism, e.g., see~\cite{blanchard2017machine, chen2017distributed, gupta2020fault, yin2018byzantine}.
    
    \item Later, in Section~\ref{sec:grad_filters}, we utilize the above result to obtain approximate fault-tolerance properties of the following two specific gradient-filters, under $(2f, \, \epsilon)$-redundancy: (i) Comparative gradient elimination (CGE)~\cite{gupta2019byzantine}, and (ii) Coordinate-wise trimmed mean (CWTM)~\cite{su2018finite}.
    These two gradient-filters are both easy to implement and versatile~\cite{gupta2020byzantine, gupta2020fault, su2018finite, yin2018byzantine}. 
    
    \item Finally, in Section~\ref{sec:experiments}, we present empirical comparisons between approximate fault-tolerance of the two gradient-filters by simulating a problem of distributed linear regression.
\end{itemize}
~

{\bf Note:} As $(f, \, 0)$-resilience is equivalent to exact fault-tolerance (see Section~\ref{sub:approx}), our results on $(f, \, \epsilon)$-resilience encapsulate all the existing results applicable only to exact fault-tolerance, such as the ones in~\cite{blanchard2017machine, guerraoui2018hidden, gupta2020fault, su2018finite}.  
~\\

Compared to related works~\cite{karimireddy2020learning, kuwaranancharoen2020byzantine}, we present precise redundancy conditions needed for obtaining Byzantine fault-tolerance within a specified approximation error. Unlike them, our results on the impossibility and feasibility of approximate fault-tolerance are applicable to non-differentiable cost functions. Moreover, in the case when the cost functions are differentiable, we present a generic condition for convergence of the DGD method that can precisely model the approximate fault-tolerance property of a generic robust gradient-aggregation rule (a.k.a., gradient-filter). \\

This is the full version of the paper including proofs of theorems and additional experimental results and discussion.

\section{Other Related Work}
\label{sec:prior}

In the past, different notions of approximate fault-tolerance, besides $(f, \, \epsilon)$-resilience, have been used to analyze Byzantine fault-tolerance of different distributed optimization algorithms~\cite{diakonikolas2018sever, su2016fault}. As we discuss below in Section~\ref{sub:alt_approx}, the difference between these other definitions and our definition of $(f, \, \epsilon)$-resilience arises mainly due to the applicability of the distributed optimization problems. Later, in Section~\ref{sub:gf_prior}, we discuss some prior work on gradient-filters used for achieving Byzantine fault-tolerance in the distributed gradient-descent method.

\subsection{Alternate Notions of Approximation in Fault-Tolerance}
\label{sub:alt_approx}

As proposed by Su and Vaidya, 2016~\cite{su2016fault}, 
instead of a minimum point of the {\em uniformly weighted} aggregate of non-faulty agents' cost functions, a distributed optimization algorithm may output a minimum point of a {\em non-uniformly weighted} aggregate of non-faulty costs, i.e., $\sum_{i \in \H} \alpha_i \, Q_i(x)$,
where $\H$ denotes the set of at least $n-f$ non-faulty agents, and $\alpha_i \geq 0$ for all $i \in \H$. As is suggested in~\cite{su2016fault}, upon re-scaling the coefficients such that $\sum_{i \in \H} \alpha_i = 1$, we can measure approximation in fault-tolerance using two metrics: (1) the number of coefficients in $\{\alpha_i, \, i \in \H\}$ that are positive, and (2) the minimum positive value amongst the coefficients: $\min \left\{ \alpha_i; \, \alpha_i > 0, \, i \in \H \right\}$. Results on the achievability of this particular form of approximation for the scalar case (i.e., $d = 1$) can be found in~\cite{su2016fault, su2020byzantine}. However, we are unaware of similar results for the case of higher-dimensional optimization problem, i.e., when $d > 1$. There is some work on this particular notion of approximate fault-tolerance in high-dimensions, such as~\cite{su2018finite, yang2017byrdie}, however their results only apply to special cost functions, specifically, quadratic or strictly convex functions, as opposed to the generic cost functions (that need not even be differentiable) considered in this paper. \\

Another way of measuring approximation is by the value of the aggregate cost function, or its gradient. For instance, as discussed in~\cite{diakonikolas2018sever}, for the case of differentiable cost functions a resilient distributed optimization algorithm $\Pi$ may output a point $x_{\Pi} \in \R^d$ such that each element of the aggregate non-faulty gradient $\sum_{i \in \H} \nabla Q_i(x_{\Pi})$ is bounded by $\epsilon$. As yet another alternative, a resilient algorithm $\Pi$ may aim to output a point $x_\Pi$
such that the non-faulty aggregate cost $\sum_{i \in \H} Q_i(x_\Pi)$ is within $\epsilon$ of the true minimum cost $\min_x \sum_{i \in \H} Q_i(x)$. However, these definitions of approximate resilience are sensitive to scaling of the cost functions. In particular, if the elements of $\sum_{i \in \H} \nabla Q_i(x_\Pi)$ are bounded by $\epsilon$ then the elements of $\sum_{i \in \H} \alpha \nabla Q_i(x_\Pi)$ are bounded by $\alpha \epsilon$, where $\alpha$ is a positive scalar value. On the other hand, both $\sum_{i \in \H} Q_i(x)$ and $\sum_{i \in \H} \alpha Q_i(x)$ have identical minimum point regardless of the value of $\alpha$. Therefore, when the objective is to approximate a minimum point of the non-faulty aggregate cost $\arg \min_x \sum_{i \in \H} Q_i(x)$, which is indeed the case in this paper, use of function (or gradient) values to measure approximation is not a suitable choice.

\subsection{Gradient-Filters}
\label{sub:gf_prior}

In the past, several {gradient-filters} have been proposed to {\em robustify} the distributed gradient-descent (DGD) method against Byzantine faulty agents in a server-based architecture, e.g., see~\cite{alistarh2018byzantine, blanchard2017machine, damaskinos2018asynchronous, diakonikolas2018sever, gupta2020byzantine, prasad2018robust, su2016fault, yin2018byzantine}. 
A gradient-filter refers to {\em Byzantine robust aggregation} of agents' gradients that mitigates the detrimental impact of incorrect gradients sent by the Byzantine agents to the server. To name a few gradient-filters, that are provably effective against Byzantine agents, we have the comparative gradient elimination (CGE)~\cite{gupta2019byzantine, gupta2020byzantine}, coordinate-wise trimmed mean (CWTM)~\cite{su2016fault, yin2018byzantine}, geometric median-of-means (GMoM)~\cite{chen2017distributed}, KRUM~\cite{blanchard2017machine}, Bulyan~\cite{guerraoui2018hidden}, and other spectral gradient-filters~\cite{diakonikolas2018sever}. Different gradient-filters guarantee some fault-tolerance under different assumptions on non-faulty agents' cost functions. \\

In this paper, we propose a generic result, in Theorem~\ref{thm:upper-bound-D} in Section~\ref{sec:grad_des}, on the convergence of the DGD method equipped with a gradient-filter. The result holds true regardless of the gradient-filter used, and thus, can be utilized to obtain formal fault-tolerance property of a gradient-filter in context of the considered distributed optimization problem. We demonstrate this, in Section~\ref{sec:grad_filters}, by obtaining $(f, \, \epsilon)$-resilience properties of two specific gradient-filters; CGE and CWTM. As exact fault-tolerance is equivalent to $(f, \, 0)$-resilience (see Section~\ref{sub:approx}), our results generalize the prior work on exact fault-tolerance of these two filters, see~\cite{gupta2019byzantine, gupta2020byzantine, su2018finite}. Moreover, until now, exact fault-tolerance of the CWTM gradient-filter was only studied for special optimization problems of state estimation~\cite{su2018finite}, and machine learning~\cite{yin2018byzantine}. Our result presents the fault-tolerance property of CWTM for a much larger class of optimization problems.

\subsection{Robust Statistics with Arbitrary Outliers}
As noted earlier, there has been work on the problem of robust statistics with arbitrary outliers~\cite{charikar2017learning,feng2014distributed,steinhardt2017resilience}. In this problem, we are given a finite set of data points; $\alpha$ fraction of which are sampled independently and identically from a common distribution $\mathcal{D}$ in $\R^d$, and the remaining $1 - \alpha$ fraction of data points may be arbitrary. The identity of arbitrary data points is a priori unknown, otherwise the problem is trivialized. The objective in this problem is to estimate statistical measures of distribution $\mathcal{D}$, such as mean, or variance, despite the presence of arbitrary outliers. The problem robust mean estimation can potentially be modelled as a fault-tolerant distributed optimization problem where for each non-faulty agent $i$, $Q_i(x): (x, \, x_i) \mapsto y \in \R$ for all $x \in \R^d$ where $x_i \sim \mathcal{D}$. A faulty agent may choose an arbitrary cost function. The cost functions can be designed in a manner such that the minimum point of the aggregate of non-faulty cost functions is equal to the mean for the non-faulty data points. 
%
In particular, suppose that for each non-faulty agent $i$, $Q_i(x) \triangleq \norm{x - x_i}^2$ where $x_i \sim \mathcal{D}$.
In this case, the minimum point of the non-faulty aggregate cost function is equal to the average of the non-faulty data points sampled from distribution $\mathcal{D}$.\\

Prior work on robust statistics considers a centralized setting wherein, unlike a distributed setting, all the data points are accessible to a single machine. In this report, we also present distributed algorithms that do not require the agents to share their local data points. Moreover, in the centralized setting, our results are applicable to a larger class of cost functions, including non-convex functions.

\subsection{\bf Fault-tolerance in State Estimation} 
The problem of distributed optimization finds direct application in distributed state estimation~\cite{rabbat2004distributed}. In this problem, the system comprises multiple sensors, and each sensor makes partial observations about the system's state. The goal is to compute the entire state of the system using collective observations from all the sensors. However, if a sensor is faulty then it may share incorrect observations, preventing correct state estimation. The special case of distributed state estimation when the observations are {\em linear} in the system's state has gained significant attention in the past, e.g. see~\cite{bhatia2015robust, chong2015observability, mishra2016secure, pajic2017attack, pajic2014robustness, shoukry2015imhotep, shoukry2017secure, su2018finite}. These works have shown that the state can be determined despite up to $f$ (out of $n$) faulty observations {\em if and only if} the system is {\em $2f$-sparse observable}, i.e., the complete state can be determined using observations of only $n-2f$ non-faulty sensors. We note that, in this particular case, {\em $2f$-sparse observability} is equivalent to $2f$-redundancy. Additionally, some of these works, such as~\cite{mishra2016secure, su2018finite}, also consider the case of {\em approximate} linear state estimation when the observations are noisy. Our work is more general in that we consider the problem setting of distributed optimization, and our results apply to a larger class of cost functions.


\section{Necessary and Sufficient Conditions for $(f, \, \epsilon)$-Resilience}
\label{sec:approx-fault-tolerance}
Throughout this paper we assume, as stated below, that the non-faulty agents' cost functions and their aggregates have well-defined minimum points. Otherwise, the problem of optimization is rendered vacuous. 
\begin{assumption}
\label{asp:exists}
For every non-empty set of non-faulty agents $S$, we assume that the set \\$\arg \min_{x \in \R^d} \sum_{i \in S}Q_i(x)$ is non-empty and closed.
\end{assumption}

%




We also assume that $f < n/2$. Lemma~\ref{lem:n_2f} below shows that $(f, \, \epsilon)$-resilience is impossible in general when $f \geq n/2$. Proof of Lemma~\ref{lem:n_2f} is easy, and can be found in Appendix~\ref{app:n_2f}.

\begin{lemma}
\label{lem:n_2f}
If $f \geq n/2$ then there cannot exist a deterministic $(f, \, \epsilon)$-resilient algorithm for any $\epsilon \geq 0$.
\end{lemma}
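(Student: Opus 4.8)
The plan is to use a standard indistinguishability (adversary) argument: I will exhibit two executions that generate \emph{identical} views at the server, yet whose non-faulty minimizers are far apart, so that no single deterministic output can lie within $\epsilon$ of both. Concretely, I would fix two points $a, b \in \R^d$ with $\norm{a - b} > 2\epsilon$ (for $\epsilon = 0$ simply take any $a \neq b$) and use the two quadratic costs $\phi_a(x) = \norm{x-a}^2$ and $\phi_b(x) = \norm{x-b}^2$, whose unique minimizers are $a$ and $b$; these satisfy Assumption~\ref{asp:exists}. Throughout I work in the meaningful regime $n/2 \le f \le n-1$, so that $1 \le n-f \le f$.

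First I would split $\{1, \ldots, n\}$ into $A = \{1, \ldots, n-f\}$ and $C = \{n-f+1, \ldots, n\}$, with $\mnorm{A} = n-f$ and $\mnorm{C} = f$. In \emph{both} executions I arrange that every agent in $A$ sends to the server exactly the transcript an honest agent with cost $\phi_a$ would send, and every agent in $C$ sends exactly the transcript an honest agent with cost $\phi_b$ would send. In Execution~1 the agents in $A$ are genuinely non-faulty while the $f$ agents in $C$ are Byzantine (merely simulating $\phi_b$); the unique non-faulty subset of size $n-f$ is then $A$, so $(f,\epsilon)$-resilience forces the output to satisfy $\norm{\widehat{x} - a} \le \epsilon$. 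In Execution~2 I instead designate an $(n-f)$-element subset $H_2 \subseteq C$ to be non-faulty with true cost $\phi_b$, and make all remaining $f$ agents (namely $A$ together with $C \setminus H_2$) Byzantine; here the non-faulty minimizer is $b$, so resilience forces $\norm{\widehat{x} - b} \le \epsilon$.

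The key step where the hypothesis is used is the existence of $H_2$: selecting $n-f$ non-faulty agents from the $f$ agents of $C$ requires $f \ge n-f$, i.e.\ precisely $f \ge n/2$. Since each agent's message is, by construction, identical across the two executions, and the algorithm is deterministic, it must return the same point $\widehat{x}$ in both. Combining the two resilience requirements via the triangle inequality yields $\norm{a-b} \le \norm{a - \widehat{x}} + \norm{\widehat{x} - b} \le 2\epsilon$, contradicting $\norm{a-b} > 2\epsilon$.

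I expect the only delicate point to be arguing that the two views are truly indistinguishable. I must check that in each execution the Byzantine agents can reproduce the honest transcript of the cost they are mimicking — immediate here, since faulty agents may send arbitrary messages — and that both executions have exactly $f$ faulty agents (verified by $\mnorm{A} + \mnorm{C \setminus H_2} = (n-f) + (f-(n-f)) = f$), so that the resilience condition applies with $S$ equal to the unique non-faulty set of size $n-f$. Everything else, including the choice of $a, b$ and the final triangle-inequality contradiction, is routine.
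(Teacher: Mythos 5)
Your proposal is correct and takes essentially the same approach as the paper: an indistinguishability argument exhibiting two executions with identical server views whose sets of non-faulty agents have minimizers more than $2\epsilon$ apart, so that a deterministic algorithm's single output cannot be $\epsilon$-close to both. The only difference is one of completeness: you carry out the general $(n, f, d)$ construction explicitly (the subset $H_2 \subseteq C$ with $\mnorm{H_2} = n-f$ being exactly where $f \geq n/2$ enters), whereas the paper proves the case $n = 2$, $f = 1$, $d = 1$ and asserts that the extension to general parameters is easy.
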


\subsection{Necessary Condition}
\comment{Email on Aug 11: Update this necessity proof}

\begin{theorem}
\label{thm:necc}
Suppose that Assumption~\ref{asp:exists} holds true. There exists a deterministic $(f, \,\epsilon)$-resilient distributed optimization algorithm where $\epsilon \geq 0$ {\em only if} the agents' cost functions satisfy the $(2f,\epsilon)$-redundancy property.
\end{theorem}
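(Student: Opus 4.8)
The plan is to argue the contrapositive through a standard indistinguishability (scenario) argument. Suppose a deterministic $(f,\epsilon)$-resilient algorithm $\Pi$ exists, and fix an arbitrary pair $S,\widehat S$ with $\widehat S\subseteq S$, $\mnorm{S}=n-f$, and $\mnorm{\widehat S}\ge n-2f$ as in Definition~\ref{def:approx_red}. Writing $X_S=\arg\min_{x}\sum_{i\in S}Q_i(x)$ and $X_{\widehat S}=\arg\min_{x}\sum_{i\in\widehat S}Q_i(x)$, the goal is to bound the Hausdorff distance $\dist{X_S}{X_{\widehat S}}$ by $\epsilon$. The engine of the proof is that $\Pi$ is deterministic: if two executions with possibly different non-faulty sets present the server with identical transcripts, then $\Pi$ must return the same point $\widehat x$ in both, and the resilience guarantee of Definition~\ref{def:approx_res} can then be read off for each execution simultaneously.

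To set this up I would partition the agents as $\{1,\dots,n\}=\widehat S\,\sqcup\,P\,\sqcup\,R$, where $P=S\setminus\widehat S$ (so $\mnorm{P}=(n-f)-\mnorm{\widehat S}\le f$) and $R=\{1,\dots,n\}\setminus S$ (so $\mnorm{R}=f$). In the first execution the non-faulty set is $S=\widehat S\cup P$ and the $f$ agents of $R$ are Byzantine; since $\mnorm{S}=n-f$, applying Definition~\ref{def:approx_res} to the subset $S$ yields $\dist{\widehat x}{X_S}\le\epsilon$. In the second execution I would make the agents of $P$ Byzantine (there are at most $f$ of them) while keeping $\widehat S$, together with a suitable completion drawn from $R$, non-faulty, and have every Byzantine agent transmit exactly the cost information an honest copy would send. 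Because in both executions all agents emit their true cost information, the two transcripts coincide, so $\Pi$ outputs the same $\widehat x$; the resilience guarantee in the second execution then certifies that $\widehat x$ is within $\epsilon$ of the minimizer set associated with $\widehat S$. Combining the two guarantees and invoking Assumption~\ref{asp:exists} (non-emptiness and closedness of the minimizer sets) to control the relevant infima and suprema, I would conclude $\dist{X_S}{X_{\widehat S}}\le\epsilon$.

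The step I expect to be the main obstacle is the second execution. Resilience, as stated in Definition~\ref{def:approx_res}, only constrains the output relative to the minimizer of a non-faulty subset of size exactly $n-f$, whereas $\widehat S$ may be strictly smaller, of size as low as $n-2f$. The crux is therefore to engineer the second execution so that the resilience guarantee, read on an admissible size-$(n-f)$ non-faulty subset, actually pins $\widehat x$ to $X_{\widehat S}$ --- intuitively, by letting the Byzantine agents emulate honest agents whose declared costs do not displace the aggregate minimizer away from $X_{\widehat S}$, and by choosing the non-faulty completion so that its aggregate cost is minimized precisely on $X_{\widehat S}$. A secondary difficulty is sharpening the conclusion from ``a single output point lies near each set'' to the genuinely two-sided Hausdorff bound of Definition~\ref{def:approx_red}: this needs the scenario argument to be applied uniformly, so that every point of one minimizer set is witnessed to be within $\epsilon$ of the other, rather than merely exhibiting one common near-point.
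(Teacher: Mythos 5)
There is a genuine gap, and it sits exactly where you suspected. In your second execution every agent---faulty or not---transmits its \emph{true} cost function. Under that constraint, the resilience guarantee of Definition~\ref{def:approx_res} can only ever be read on size-$(n-f)$ subsets of the non-faulty agents, so what you obtain is $\dist{\widehat{x}}{X_{\widehat{S}\cup R'}}\leq\epsilon$, where $R'\subseteq R$ is your completion, and never anything about $X_{\widehat{S}}$ itself. The completion agents' cost functions are part of the given instance; ``choosing the non-faulty completion so that its aggregate cost is minimized precisely on $X_{\widehat S}$'' is not a move available to you, since in general $\sum_{i\in\widehat{S}\cup R'}Q_i$ minimizes nowhere near $X_{\widehat{S}}$. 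The only source of freedom is to let the agents completing $\widehat{S}$ hold \emph{fabricated} cost functions: agents that are Byzantine in one world (declaring functions of the adversary's choosing) and genuinely non-faulty in the other world, where those declared functions \emph{are} their true costs. This is precisely the role of the set $\B$ in the paper's proof, and it is irreconcilable with your premise that ``in both executions all agents emit their true cost information''---that premise is what makes your two transcripts coincide, so abandoning it forces you to rebuild the indistinguishability argument along the paper's lines.

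Even after that repair, the direct form of your argument is quantitatively too weak to prove the theorem. If the fabricated costs are chosen neutrally (so that the swapped-world aggregate minimizes on $X_{\widehat{S}}$), you conclude only that the single output $\widehat{x}$ lies within $\epsilon$ of both $X_S$ and $X_{\widehat{S}}$. With unique minimizers this gives $\norm{x_S-x_{\widehat{S}}}\leq 2\epsilon$ by the triangle inequality, i.e.\ necessity of $(2f,\,2\epsilon)$-redundancy---a factor of $2$ short of the claim---and with non-unique minimizers it gives no Hausdorff bound at all; your ``secondary difficulty'' is in fact fatal to the direct route. The paper avoids both problems by (a) reducing to scalar cost functions with unique minimum points, which legitimately suffices for a necessity claim, and (b) arguing by contradiction with an \emph{adversarially placed} fake minimizer: supposing redundancy fails by some margin $\delta>0$, the functions fabricated for $\B$ are chosen so that $\sum_{i\in\B\cup\widehat{S}}Q_i$ minimizes at the reflection of $x_S$ across $x_{\widehat{S}}$, hence at distance $2\epsilon+2\delta$ from $x_S$. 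The two resilience requirements then confine the common deterministic output to two disjoint intervals, a contradiction. The reflection is the step that converts the hypothesis ``$x_{\widehat{S}}$ is more than $\epsilon$ from $x_S$'' into ``the two scenarios' requirements are separated by more than $2\epsilon$,'' and without it (or an equivalent device) your plan cannot reach the claimed accuracy $\epsilon$.
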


\begin{proof}

To prove the theorem we present a scenario when the agents' cost functions (if non-faulty) are scalar functions, i.e., $d = 1$ and for all $i$, $Q_i: \R \to \R$, and the minimum point of an aggregate of one or more agents' cost functions is uniquely defined. Obviously, if a condition is necessary in this particular scenario then it is so in the general case involving vector functions with non-unique minimum points.\\


To prove the necessary condition, we also assume that the server has full knowledge of all the agents' cost functions. This may not hold true in practice, where instead the server may only have partial information about the agents' cost functions. Indeed, this assumption forces the Byzantine faulty agents to a priori fix their cost functions. However, in reality the Byzantine agents may send arbitrary information over time to the server that need not be consistent with a fixed cost function. Thus, necessity of $(2f,\epsilon)$-redundancy under this strong assumption implies its necessity in general. \\

The proof is by contradiction. Specifically, we show that {\em If the cost functions of non-faulty agents do not satisfy the $(2f,\epsilon)$-redundancy property then there cannot exist a {\em deterministic} $(f,\epsilon)$-resilient distributed optimization algorithm.}\\
%
%

Recall that we have assumed that for a non-empty set of agents $T$ the aggregate cost function $\sum_{i \in T} Q_i(x)$ has a unique minimum point. To be precise, for each non-empty subset of agents $T$, we define
\[x_T = \arg \min_x \sum_{i \in T} Q_i(x).\]

Suppose that the agents' cost functions {\bf do not} satisfy the $(2f,\epsilon)$-redundancy property stated in Definition~\ref{def:approx_red}. 
Then, there exists a real number $\delta > 0$ and a pair of subsets $S, \, \widehat{S}$ with $\widehat{S} \subset S$, $\mnorm{S} = n-f$, and $n-2f \leq \mnorm{\widehat{S}} < n-f$ such that
\begin{align}
    \norm{x_{\widehat{S}} - x_S} \geq \epsilon + \delta. \label{eqn:no_red_asp}
\end{align}
Now, suppose that $n - f - \mnorm{\widehat{S}}$ agents in the remainder set $\{1, \ldots, \, n\} \setminus S$ are Byzantine faulty. Let us denote the set of faulty agents by $\B$. Note that $\B$ is non-empty with $\mnorm{\B} = n - f - \mnorm{\widehat{S}} \leq f$. 
Similar to the non-faulty agents, the faulty agents send to the server cost functions that are scalar, and the aggregate of one or more agents' cost functions in the set $S \cup \B$ is unique. However, 
the aggregate cost function of the agents in the set $\B \cup \widehat{S}$ minimizes at a unique point $x_{\B \cup \widehat{S}}$ which is $\norm{x_{\widehat{S}} - x_{S}}$ distance away from $x_{\widehat{S}}$, similar to $x_S$, but lies on the other side of $x_{\widehat{S}}$ as shown in the figure below. Note that it is always possible to pick such functions for the faulty agents.


\begin{center}
    \includegraphics{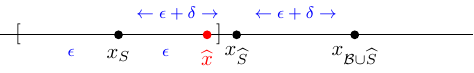}
\end{center}

\noindent Note that the distance between the two points $x_S$ and $x_{\B \cup \widehat{S}}$ is $2\epsilon + 2\delta$, i.e.,
\begin{align}
    \norm{x_S - x_{\B \cup \widehat{S}}} = 2\epsilon + 2\delta. \label{eqn:dist_x1_x2}
\end{align}

Now, suppose, toward a contradiction, that there exists an $(f, \, \epsilon)$-resilient deterministic optimization algorithm named $\Pi$. As the identity of Byzantine faulty agents is a priori unknown to the server, and the cost functions sent by the Byzantine faulty agents have similar properties as the non-faulty agents, 
the server cannot distinguish between the following two possible scenarios; i) $S$ is the set of non-faulty agents, and ii) $\B \cup \widehat{S}$ is the set of non-faulty agents. Note that both the sets $S$ and $\B \cup \widehat{S}$ contain $n-f$ agents. \\

As the cost functions received by the server are identical in both of the above scenarios, being a deterministic algorithm, $\Pi$ should have identical output in both the cases. We let $\widehat{x}$ denote the output of $\Pi$. In scenario (i) when the set of honest agents is given by $S$ with $\mnorm{S} = n-f$, as $\Pi$ is assumed $(f,\epsilon)$-resilient, by Definition~\ref{def:approx_res} the output
\begin{align}
    \widehat{x} \in  [x_S - \epsilon, \, x_S + \epsilon] \label{eqn:scn_1}
\end{align}
as shown in the figure above. Similarly, in scenario (ii) when the set of honest agents is $\B \cup \widehat{S}$ with $\mnorm{\B \cup \widehat{S}} = n-f$,
\begin{align}
    \widehat{x} \in  [x_{\B\cup \widehat{S}} - \epsilon, \, x_{\B\cup \widehat{S}} + \epsilon]. \label{eqn:scn_2}
\end{align}
However,~\eqref{eqn:dist_x1_x2} implies that~\eqref{eqn:scn_1} and~\eqref{eqn:scn_2} cannot be satisfied simultaneously. That is, if $\Pi$ is $(f,\epsilon)$-resilient in scenario (i) then it cannot be so in scenario (ii), and vice-versa. This contradicts the assumption that $\Pi$ is $(f, \, \epsilon)$-resilient. \qedhere
\end{proof}



\subsection{Sufficient Condition}



\begin{theorem}
\label{thm:suff}
Suppose that Assumption~\ref{asp:exists} holds true. 
For a real value $\epsilon \geq 0$, if the agents' cost functions satisfy the $(2f,\epsilon)$-redundancy property then $(f,2\epsilon)$-resilience is achievable.
\end{theorem}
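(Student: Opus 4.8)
The plan is to exhibit a single deterministic algorithm and show it is $(f,2\epsilon)$-resilient. The server receives the (possibly corrupted) cost functions of all $n$ agents. For each subset $T\subseteq\{1,\dots,n\}$ with $\mnorm{T}=n-f$, it forms the set $X_T \triangleq \arg\min_{x\in\R^d}\sum_{i\in T}Q_i(x)$, which is non-empty and closed by Assumption~\ref{asp:exists}. I would call such a $T$ \emph{admissible} if $\dist{X_T}{X_{\widehat S}}\le\epsilon$ for every $\widehat S\subseteq T$ with $\mnorm{\widehat S}\ge n-2f$; since this is a test on the received functions alone, the server can carry it out. The algorithm then selects one admissible subset $T^*$ (say, the lexicographically first) and outputs any point $\widehat x\in X_{T^*}$.

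First I would check that the selection is well defined, i.e.\ an admissible subset always exists. Let $\H$ denote the honest set; since at most $f$ agents are faulty, $\mnorm{\H}\ge n-f$, so there is some $T\subseteq\H$ with $\mnorm{T}=n-f$. For any $\widehat S\subseteq T$ with $\mnorm{\widehat S}\ge n-2f$ we have $\widehat S\subseteq T\subseteq\H$, so both subsets are honest and the pair meets the hypotheses of $(2f,\epsilon)$-redundancy; hence $\dist{X_T}{X_{\widehat S}}\le\epsilon$ and $T$ is admissible. Thus the set of admissible subsets is non-empty and $T^*$ is well defined.

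The core of the argument is to bound $\dist{\widehat x}{X_S}$ for an arbitrary honest subset $S$ with $\mnorm{S}=n-f$. The key observation is that any two $(n-f)$-subsets of $\{1,\dots,n\}$ overlap in at least $n-2f$ agents, so $\widehat S\triangleq T^*\cap S$ satisfies $\mnorm{\widehat S}\ge n-2f$. I would then route two bounds through $X_{\widehat S}$: admissibility of $T^*$ applied to $\widehat S\subseteq T^*$ gives $\dist{X_{T^*}}{X_{\widehat S}}\le\epsilon$, hence $\dist{\widehat x}{X_{\widehat S}}\le\epsilon$ because $\widehat x\in X_{T^*}$; meanwhile $(2f,\epsilon)$-redundancy applied to the \emph{honest} pair $\widehat S\subseteq S\subseteq\H$ gives $\dist{X_S}{X_{\widehat S}}\le\epsilon$. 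Using the definition of the Hausdorff distance and the closedness of the minimizer sets (Assumption~\ref{asp:exists}) to realize the relevant infima by actual points, a triangle inequality then yields $\dist{\widehat x}{X_S}\le 2\epsilon$, which is exactly the $(f,2\epsilon)$-resilience guarantee.

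The main obstacle is conceptual rather than computational: the selection rule for $T^*$ must be expressible purely in terms of the data the server possesses (the received functions), while simultaneously guaranteeing both that an admissible subset exists and that its minimizer is controlled relative to \emph{every} honest $(n-f)$-subset $S$, whose identity the server cannot determine. Admissibility resolves this tension because it certifies consistency against all candidate intersections $\widehat S=T^*\cap S$ at once. The remaining care is entirely in correctly distinguishing point-to-set from set-to-set (Hausdorff) distances when chaining the triangle inequality, and closedness of the $X_T$ is precisely what guarantees the nearest points used there exist.
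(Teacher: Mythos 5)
Your proof is correct and takes essentially the same route as the paper's: both exhaustively search the $(n-f)$-subsets, use $(2f,\epsilon)$-redundancy to show every all-honest subset passes the selection criterion, and then exploit $\mnorm{T^*\cap S}\geq n-2f$ to chain two $\epsilon$-bounds through a closest point of $X_{T^*\cap S}$ (using closedness from Assumption~\ref{asp:exists}) into the final $2\epsilon$ bound via the triangle inequality. The only substantive differences are that the paper's rule selects the subset minimizing a score $r_T$ built from point-to-set distances of a single computed minimizer $x_T$ — so its algorithm needs no knowledge of $\epsilon$ — whereas your admissibility test thresholds Hausdorff distances at $\epsilon$; also, Assumption~\ref{asp:exists} guarantees non-empty, closed $X_T$ only for all-honest $T$, so subsets containing faulty agents whose received functions yield an empty minimizer set should simply be declared inadmissible (a point the paper's own proof likewise glosses over).
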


\begin{proof}

The proof is constructive where we assume that all the agents send their individual cost functions to the server. We assume that $f > 0$ to avoid the trivial case of $f = 0$. Throughout the proof we write the notation $\arg \min_{x \in \R^d}$ simply as $\arg \min$, unless otherwise stated.
We begin by presenting an algorithm below, comprising three steps.


\begin{description}
    \item[Step 1:] Each agent sends their cost function to the server. An honest agent sends its actual cost function, while a faulty agent may send an arbitrary function.
    \item[Step 2:] For each set $T$ of received functions, $\mnorm{T}=n-f$, the server computes a point
        \begin{equation*}
            x_T\in\arg\min \sum_{i\in T}Q_i(x).
        \end{equation*}
        For each subset $\widehat{T}\subset T$, $\mnorm{\widehat{T}}=n-2f$, the server computes 
        \begin{equation}
            r_{T\widehat{T}}\triangleq\dist{x_T}{\arg\min \sum_{i\in\widehat{T}}Q_i(x)}, \label{eqn:r_SS}
        \end{equation}
        and 
        \begin{equation}
            r_T=\max_{\begin{subarray}{c}\widehat{T}\subset T,\\\mnorm{\widehat{T}}=n-2f\end{subarray}}r_{T\widehat{T}.} \label{eqn:r_S}
        \end{equation}
    \item[Step 3:] The server outputs $x_S$ such that 
    \begin{align}
        S = \underset{\begin{subarray}{c}T \subset \{1, \ldots, \, n\},\\ \mnorm{T} = n-f\end{subarray}}{\arg \min} r_T. \label{eqn:out_S}
    \end{align}
\end{description}

\noindent We show that above algorithm is $(f, \, 2\epsilon)$-resilient under $(2f, \, \epsilon)$-redundancy. For a non-empty set of agents $T$, we denote 
\[X_T = \arg\min \sum_{i\in T}Q_i(x).\]
~

Consider an arbitrary set of non-faulty agents $G$ with $\mnorm{G} = n-f$. Such a set is guaranteed to exist as there are at most $f$ faulty agents, and therefore, at least $n-f$ non-faulty agents exist in the system.
Consider an arbitrary set $\widehat{T}$ such that $\widehat{T}\subset G$ and $\mnorm{\widehat{T}}=n-2f$. By Definition~\ref{def:approx_red} of $(2f, \, \epsilon)$-redundancy, 
\begin{align}
    \dist{X_G}{X_{\widehat{T}}}\leq\epsilon. \label{eqn:G_T_eps}
\end{align}
Recall from~\eqref{eqn:r_SS} that $r_{G \widehat{T}} = \dist{x_G}{X_{\widehat{T}}}$. As $x_G \in X_G$, by Definition~\eqref{eqn:hausdorff} of Hausdorff set distance, $\dist{x_G}{X_{\widehat{T}}} \leq \dist{X_G}{X_{\widehat{T}}}$. Therefore, $r_{G \widehat{T}} \leq \dist{X_G}{X_{\widehat{T}}}$, and
substituting from~\eqref{eqn:G_T_eps} implies that
\begin{align}
    r_{G \widehat{T}} \leq \epsilon.  \label{eqn:r_G_T}
\end{align}
Now, recall from~\eqref{eqn:r_S} that $r_G = \max\left\{r_{G\widehat{T}} ~ \vline ~ \widehat{T}\subset G, \, \mnorm{\widehat{T}}=n-2f \right\}$.
As $\widehat{T}$ in~\eqref{eqn:r_G_T} is an arbitrary subset of $G$ with $\mnorm{\widehat{T}}=n-2f$, 
\begin{align}
    r_G = \max_{\begin{subarray}{c}\widehat{T}\subset G,\\\mnorm{\widehat{T}} = n-2f\end{subarray}} r_{G\widehat{T}}  ~ \leq ~ \epsilon. \label{eqn:r_S_eps}
\end{align}
From~\eqref{eqn:out_S} and~\eqref{eqn:r_S_eps} we obtain that
\begin{align}
    r_S \leq r_G \leq \epsilon. \label{eqn:rsg}
\end{align}

As $\mnorm{G} = n-f$, for every set of agents $T$ with $\mnorm{T} = n-f$, $\mnorm{T \cap G} \geq n-2f$. Therefore, for the set $S$ defined in~\eqref{eqn:out_S}, there exists a subset $\widehat{G}$ of $G$ such that $\widehat{G} \subset S$ and $\mnorm{\widehat{G}} = n-2f$. For such a set $\widehat{G}$, by definition of $r_S$ in~\eqref{eqn:r_S}, we obtain that
\begin{align*}
    r_{S\widehat{G}} \triangleq \dist{x_S}{X_{\widehat{G}}} \leq r_S.
\end{align*}
Substituting from~\eqref{eqn:rsg} above, we obtain that 
\begin{align}
    \dist{x_S}{X_{\widehat{G}}}\leq\epsilon. \label{eqn:dist_bnd_eps}
\end{align}
As $\widehat{G}$ is a subset of $G$, all the agents in $\widehat{G}$ are non-faulty. Therefore, by Assumption~\ref{asp:exists}, $X_{\widehat{G}}$ is a closed set. Recall that $\dist{x_S}{X_{\widehat{G}}} = \inf_{x \in X_{\widehat{G}}} \norm{x_S - x}$. The closedness of $X_{\widehat{G}}$ implies that there exists a point $z \in X_{\widehat{G}}$ such that 
\[\norm{x_S - z} = \inf_{x \in X_{\widehat{G}}} \norm{x_S - x} = \dist{x_S}{X_{\widehat{G}}}.\]
The above, in conjunction with~\eqref{eqn:dist_bnd_eps}, implies that
\begin{align}
    \norm{x_S - z} \leq \epsilon. \label{eqn:pt_1}
\end{align}
Moreover, as $z \in X_{\widehat{G}}$ where $\widehat{G} \subset G$ with $\mnorm{\widehat{G}} = n-2f$ and $\mnorm{G} = n-f$, the $(2f, \, \epsilon)$-redundancy condition stated in Definition~\ref{def:approx_red} implies that $\dist{z}{X_G} \leq \epsilon$.
Similar to an argument made above, under Assumption~\ref{asp:exists}, $X_G$ is a closed set, and therefore, there exists $x^* \in X_G$ such that
\begin{align}
    \norm{z - x^*} = \dist{z}{X_G} \leq \epsilon.  \label{eqn:pt_2}
\end{align}
By triangle inequality,~\eqref{eqn:pt_1} and~\eqref{eqn:pt_2} implies that $\norm{x_S - x^*} \leq \norm{x_S-z} + \norm{z - x^*} \leq 2\epsilon$.
Recall that set $G$ here is an arbitrary set of $n-f$ non-faulty agents. 
\end{proof}




It is worth noting that the algorithm constructed in the proof of Theorem~\ref{thm:suff} only shows sufficiency; it is not a very practical algorithm due to being computationally expensive. \\

In the next part of the paper, i.e., Sections~\ref{sec:grad_des} and~\ref{sec:experiments}, we consider the case when the (non-faulty) agents' cost functions are differentiable. Specifically, we study approximate fault-tolerance in the distributed gradient-descent (DGD) method. 

\section{Distributed Gradient-Descent (DGD) Method}
\label{sec:grad_des}

In this section, we consider a setting wherein the non-faulty agents' cost functions are differentiable. In this particular case, we study the approximate fault-tolerance of the distributed gradient-descent method coupled with a {\em gradient-filter}, described below. We consider the server-based system architecture, shown in Fig.~\ref{fig:sys}, assuming a synchronous system.\\

The DGD method is an iterative algorithm wherein the server maintains an {\em estimate} of a minimum point, and 
updates it iteratively using gradients sent by the agents. Specifically, in each iteration $t \in \{0, \, 1, \ldots \}$, the server starts with an estimate $x^t$ and broadcasts to all the agents. Each non-faulty agent $i$ sends back to the sever the gradient of its cost function at $x^t$, i.e., $\nabla Q_i(x^t)$. However, Byzantine faulty agents may send arbitrary incorrect vectors as their gradients to the server. The initial estimate, named $x^0$, is chosen arbitrarily by the server. \\

A {\em gradient-filter} is a vector function, denoted by $\gf$, that maps the $n$ gradients received by the server from all the $n$ agents to a $d$-dimensional vector, i.e., $\gf: \R^{d \times n} \to \R^d$. For example, an average of all the gradients as in the case of the traditional distributed gradient-descent method is technically a gradient-filter. However, averaging is not quite robust against Byzantine faulty agents~\cite{blanchard2017machine, su2016fault}. The real purpose of a gradient-filter is to mitigate the detrimental impact of incorrect gradients sent by the Byzantine faulty agents. In other words, a gradient-filter {\em robustifies} the traditional gradient-descent method against Byzantine faults. We show that if a gradient-filter satisfies a certain property then it can confer fault-tolerance to the distributed gradient-descent method.\\

We first formally describe below the steps in each iteration of the distributed gradient-descent method implemented on a synchronous server-based system. Note that we constrain the estimates computed by the server to a compact convex set $\W \subset \R^d$. The set $\W$ can be arbitrarily large. For a vector $x \in \R^d$, its projection onto $\W$, denoted by $[x]_\W$, is defined to be                                   
\begin{align}
    [x]_\W = \arg \min_{y \in \W} \norm{x - y}. \label{eqn:proj_W}
\end{align}
As $\W$ is a convex and compact set, $[x]_\W$ is unique for each $x$ (see~\cite{boyd2004convex}).





\subsection{Steps in $t$-th iteration}
\label{sub:steps}

In each iteration $t \in \{0, \,1 , \ldots\}$ the server updates its current estimate $x^t$ to $x^{t+1}$ using Steps {S1} and {S2} described as follows. 

\begin{enumerate}[label=\textbf{S\arabic*:}]
    \item The server requests from each agent the gradient of its local cost function at the current estimate $x^t$. Each non-faulty agent $i$ will then send to the server the gradient $\nabla Q_i(x^t)$, whereas a faulty agent may send an incorrect arbitrary value for the gradient.
    
    The gradient received by the server from agent $i$ is denoted as $g_i^t$. If no gradient is received from some agent $i$, agent $i$ must be faulty (because the system is assumed to be synchronous) -- in this case, the server eliminates the agent $i$ from the system, updates the values of $n$, $f$, and re-assigns the agents indices from $1$ to $n$.
    
    \item \textbf{[Gradient-filtering]} The server applies a gradient-filter $\gf$ to the $n$ received gradients and computes 
    
    $\gf\left(g^t_1, \ldots, \, g^t_n\right) \in \R^d$. Then, the server updates its estimate to
    \begin{equation}
        x^{t+1}=\left[x^t-\eta_t \, \gf\left(g^t_1, \ldots, \, g^t_n\right) \right]_\W \label{eqn:update}
    \end{equation}
    where $\eta_t$ is the step-size of positive value for iteration $t$.
\end{enumerate}

We derive in Theorem~\ref{thm:upper-bound-D} a generic convergence result for the above algorithm. This result is similar to a prior result in \cite[Section 5.2]{bottou1998online}, and will be used to prove the correctness of the algorithm assuming redundancy.

\begin{theorem}
\label{thm:upper-bound-D}
    Consider the update rule~\eqref{eqn:update} in the above iterative algorithm, with diminishing step-sizes $\{\eta_t, \, t = 0, \, 1, \ldots \}$ satisfying $\sum_{t=0}^{\infty}\eta_t=\infty\textrm{ and }\sum_{t=0}^{\infty}\eta_t^2<\infty$.
    Suppose that 
    \[\norm{\gf \left(g^t_1, \ldots, \, g^t_n\right)} < \infty\]
    for all $t$. For some point $x^* \in \W$, if there exists real-valued constants $\D^*\in[0,\max_{x\in\mathcal{W}}\norm{x-x^*})$ and $\xi > 0$ such that for each iteration $t$,
    \begin{equation}
        \phi_t=\iprod{x^t-x^*}{\gf \left(g^t_1, \ldots, \, g^t_n\right)} \geq \xi \textrm{ when }\norm{x^t-x^*} \geq \D^*, \label{eqn:lem_cond}
    \end{equation}
    then $\lim_{t\rightarrow\infty}\norm{x^t - x^*} \leq \D^*$.
\end{theorem}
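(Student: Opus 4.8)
The plan is to show that the squared distance $\norm{x^t - x^*}^2$ behaves like a supermartingale-style recursion (deterministic here) that is driven downward whenever the iterate is outside the ball of radius $\D^*$, and then to invoke a standard diminishing-step-size argument to conclude the limit cannot exceed $\D^*$.

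First I would expand the one-step update. Since projection onto the convex compact set $\W$ is non-expansive and $x^* \in \W$, I have
\begin{align*}
    \norm{x^{t+1} - x^*}^2
    &= \norm{[x^t - \eta_t\,\gf(g^t_1,\ldots,g^t_n)]_\W - [x^*]_\W}^2 \\
    &\leq \norm{x^t - \eta_t\,\gf(g^t_1,\ldots,g^t_n) - x^*}^2 \\
    &= \norm{x^t - x^*}^2 - 2\eta_t\,\phi_t + \eta_t^2\,\norm{\gf(g^t_1,\ldots,g^t_n)}^2,
\end{align*}
where $\phi_t = \iprod{x^t - x^*}{\gf(g^t_1,\ldots,g^t_n)}$ as in~\eqref{eqn:lem_cond}. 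The hypothesis $\norm{\gf(g^t_1,\ldots,g^t_n)} < \infty$ for all $t$, together with the compactness of $\W$, lets me bound the gradient-filter term uniformly; call the bound $M$, so the last term is at most $\eta_t^2 M^2$.

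Next I would split on whether $\norm{x^t - x^*} \geq \D^*$. Inside the ball, I would control the one-step increase crudely (it is at most $\eta_t^2 M^2$ plus a bounded correction), using compactness to keep everything finite; outside the ball, the key hypothesis~\eqref{eqn:lem_cond} gives $\phi_t \geq \xi > 0$, so the $-2\eta_t\phi_t$ term forces a strict decrease of order $\eta_t$ once $\eta_t$ is small enough that it dominates the $\eta_t^2 M^2$ term. The standard technique is to argue by contradiction: if $\limsup_t \norm{x^t - x^*} > \D^*$, then either the iterate stays bounded away from the ball for all large $t$ — in which case summing the recursion and using $\sum_t \eta_t = \infty$ drives $\norm{x^t - x^*}^2$ to $-\infty$, an absurdity — or it crosses in and out infinitely often, in which case the $\sum_t \eta_t^2 < \infty$ condition ensures the overshoot on each excursion above $\D^*$ shrinks to zero, pinning the limit at $\D^*$.

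The hard part will be the second, oscillatory case: making rigorous that the excursions above radius $\D^*$ cannot maintain a fixed positive gap. The cleanest route is to fix a small $\rho > 0$, show that whenever $\norm{x^t - x^*} \geq \D^* + \rho$ the decrement $2\eta_t\phi_t - \eta_t^2 M^2 \geq \eta_t\xi$ for all large $t$ (since $\eta_t \to 0$), and that each single step can increase the distance by at most $O(\eta_t)$; combining these with $\sum_t \eta_t = \infty$ and $\sum_t \eta_t^2 < \infty$ shows the iterate must return below $\D^* + \rho$ infinitely often and, once there, cannot escape far because the step perturbation vanishes. Letting $\rho \to 0$ then yields $\lim_{t\to\infty}\norm{x^t - x^*} \leq \D^*$. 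I would expect this limiting/excursion bookkeeping — rather than the one-step inequality — to carry the real technical weight.
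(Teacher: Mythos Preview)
Your approach is correct and genuinely different from the paper's. The paper constructs a smooth Lyapunov function $\psi(y)=(\max\{0,\,y-(\D^*)^2\})^2$, sets $h_t=\psi(\norm{x^t-x^*}^2)$, and shows $h_t\to 0$ via Bottou's non-negative almost-supermartingale lemma: the point of $\psi$ is that $\psi'(e_t^2)=0$ whenever $e_t<\D^*$, so the awkward ``inside the ball'' case is absorbed automatically and one gets a clean inequality $h_{t+1}-h_t\le -2\eta_t\phi_t\psi'_t+O(\eta_t^2)$ with $\phi_t\psi'_t\ge 0$ always. Your route is more elementary: you work directly with $e_t=\norm{x^t-x^*}$, use the one-step bound $|e_{t+1}-e_t|\le\eta_t M$ together with the decrement $e_{t+1}^2\le e_t^2-\eta_t\xi$ whenever $e_t\ge\D^*$ (for large $t$), and run a trapping argument---once $e_t<\D^*+\rho$ for some $t$ past a threshold, it stays there, and $\sum\eta_t=\infty$ forces such a $t$ to exist. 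This avoids the auxiliary function and the sequence-convergence lemma entirely; in fact your argument only needs $\eta_t\to 0$ rather than the full $\sum\eta_t^2<\infty$. The paper's approach is more systematic (and would generalize more readily to stochastic settings), while yours is shorter and exposes the mechanism more transparently.

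One remark you share with the paper: the assumption ``$\norm{\gf(\cdot)}<\infty$ for all $t$'' does not by itself yield a \emph{uniform} bound $M$, and compactness of $\W$ does not help since Byzantine inputs are unconstrained. The paper's proof makes the same leap. In the downstream applications (Theorems~\ref{thm:conv-guarantee} and~\ref{thm:conv-guarantee_cwtm}) the filter output is in fact uniformly bounded, so the intended reading is ``uniformly bounded''; just be aware that this is what you are really using.
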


The values $\D^*$ and $\xi$ in Theorem~\ref{thm:upper-bound-D} may be interdependent. Proof of Theorem~\ref{thm:upper-bound-D} is deferred to Appendix \ref{apdx:lemma-uppr-bnd-D}. \\

Using Theorem~\ref{thm:upper-bound-D} we can obtain conditions under which a gradient-filter guarantees the approximate fault-tolerance property of $(f, \, \epsilon)$-resilience with $\epsilon \geq 0$, of which exact fault-tolerance is a special case. On the other hand, the prior results on the convergence of DGD method with a gradient-filter, e.g., see~\cite{blanchard2017machine, guerraoui2018hidden}, apply  only to exact fault-tolerance.\\

We demonstrate below the utility of Theorem~\ref{thm:upper-bound-D} to obtain the fault-tolerance properties of two commonly used gradient-filters in the literature; namely {\em Comparative Gradient Elimination}~\cite{gupta2020byzantine} and {\em Coordinate-Wise Trimmed Mean}~\cite{su2018finite}.

\subsection{Gradient-Filters and their Fault-Tolerance Properties}
\label{sec:grad_filters}

In this subsection, we present precise approximate fault-tolerance properties of two specific gradient-filters; the Comparative Gradient Elimination (CGE)~\cite{gupta2020byzantine, gupta2019byzantine}, and the Coordinate-Wise Trimmed Mean (CWTM)~\cite{su2018finite, yin2018byzantine}. Note that differentiability of non-faulty agents' cost functions, which is already assumed for the DGD method, implies Assumption~\ref{asp:exists} (see~\cite{boyd2004convex}). We additionally make Assumptions~\ref{assum:lipschitz},~\ref{assum:strongly-convex} and~\ref{assum:compact} about the non-faulty agents' cost functions. Similar assumptions are made in prior work on fault-free distributed optimization~\cite{bertsekas1989parallel, boyd2011distributed, nedic2009distributed}. 


\begin{assumption}[Lipschitz smoothness]
    \label{assum:lipschitz}
    For each non-faulty agent $i$, we assume that the gradient of its cost function $\nabla Q_i(x)$ is Lipschitz continuous, i.e., there exists a finite real value $\mu>0$ such that 
    \begin{equation*}
        \norm{\nabla Q_i(x)-\nabla Q_i(x')}\leq\mu\norm{x-x'}, \quad \forall x,x' \in \W.
    \end{equation*}
\end{assumption}

\begin{assumption}[Strong convexity]
    \label{assum:strongly-convex}
    For a non-empty set of non-faulty agents $\H$, let $Q_{\H}(x)$ denote the average cost function of the agents in $\H$, i.e.,
    \begin{equation*}
        Q_{\H}(x)=\frac{1}{\mnorm{\H}}\sum_{i\in\H}Q_i(x).
    \end{equation*}
    For each such set $\H$ with $\mnorm{\H} = n-f$, we assume that $Q_{\H}(x)$ is strongly convex, i.e., there exists a finite real value $\gamma > 0$ such that
    \begin{equation*}
        \iprod{\nabla Q_{\H}(x)-\nabla Q_{\H}(x')}{x-x'}\geq\gamma\norm{x-x'}^2, \quad \forall x,x' \in \W.
    \end{equation*}
\end{assumption}

Note that, under Assumptions~\ref{assum:lipschitz} and~\ref{assum:strongly-convex}, $\gamma \leq \mu$. This inequality is proved in Appendix~\ref{app:gamma_mu}.
Now, recall that the iterative estimates of the algorithm in Section~\ref{sub:steps} are constrained to a compact convex set $\W \subset \R^d$.

\begin{assumption}[Existence]
    For each set of non-faulty agents $\H$ with $\mnorm{\H} = n-f$, we assume that there exists a point $x_{\H} \in \arg \min_{x \in \R^d} \sum_{i \in \H} Q_i(x)$ such that $x_{\H} \in \W$.
    \label{assum:compact}
\end{assumption}

We describe below the CGE and CWTM gradient-filters. Later, we obtain the fault-tolerance properties of these filters using the result stated in Theorem~\ref{thm:upper-bound-D}, under $(2f, \, \epsilon)$-redundancy.\\

{\bf CGE Gradient-Filter:} To apply the CGE gradient-filter in Step S2, the server sorts the $n$ gradients received from the $n$ agents at the completion of Step S1 as per their Euclidean norms (ties broken arbitrarily):
\begin{equation*}
    \norm{g_{i_1}^t}\leq \ldots \leq \norm{g_{i_{n-f}}^t} \leq \norm{g_{i_{n-f+1}}^t}\leq \ldots \leq \norm{g_{i_n}^t}.
\end{equation*}
That is, the gradient with the smallest norm, $g_{i_1}^t$, is received from agent $i_1$, and the gradient with the largest norm, $g_{i_n}^t$, is received from agent $i_n$. Then, the output of the CGE gradient-filter is the vector sum of the $n-f$ gradients with smallest $n-f$ Euclidean norms. Specifically,
\begin{align}
    \gf\left(g^t_1, \ldots, \, g^t_n \right) = \sum_{j=1}^{n-f}g_{i_j}^t. \label{eqn:cge_gf}
\end{align}
~

{\bf CWTM Gradient-Filter:} To implement this particular gradient-filter in Step S2, the server sorts the $n$ gradients received from the $n$ agents at the completion of Step S1 as per their individual elements. For a vector $v \in \R^d$, we let $v[k]$ denote its $k$-th element. Specifically, for each $k \in \{1, \ldots, \, d\}$, the server sorts the $k$-th elements of the gradients by breaking ties arbitrarily:
\begin{equation*}
    g_{i_1[k]}^t[k] \leq \ldots \leq g_{i_{f + 1}[k]}^t[k] \leq \ldots \leq g_{i_{n-f}[k]}^t[k] \leq  \ldots \leq g_{i_n[k]}^t[k].
\end{equation*}
The gradient with the smallest of the $k$-th element, $g_{i_1[k]}^t$, is received from agent $i_1[k]$, and the gradient with the largest of the $k$-th element, $g_{i_n[k]}^t$, is received from agent $i_n[k]$. For each $k$, the server eliminates the largest $f$ and the smallest $f$ of the $k$-th elements of the gradients received. Then, the output of the CWTM gradient-filter is a vector whose $k$-th element is equal to the average of the remaining $n-2f$ gradients' $k$-th elements. That is, for each $k \in \{1, \ldots, \, d\}$,
\begin{align}
    \gf\left(g^t_1, \ldots, \, g^t_n \right)[k] = \frac{1}{n-2f}\sum_{j=f+1}^{n-f}g_{i_j[k]}^t[k]. \label{eqn:cwtm_gf}
\end{align}
~

We present the precise fault-tolerance properties of the two gradient-filters in Theorems~\ref{thm:conv-guarantee} and~\ref{thm:conv-guarantee_cwtm} below. However, the {\bf reader may skip to Section~\ref{sec:experiments} without loss of continuity}.




Note that, under Assumptions~\ref{assum:strongly-convex} and~\ref{assum:compact}, for each non-empty set of non-faulty agents $\H$ with $\mnorm{\H} = n-f$, the aggregate cost function $\sum_{i \in \H} Q_i(x)$ has a unique minimum point, denoted by $x_{\H}$, in the set $\W$. Specifically,
\begin{align}
    \left\{x_{\H}\right\} = \W \cap \arg \min_{x \in \R^d}\sum_{i \in \H} Q_i(x).  \label{eqn:uniq_min}
\end{align}
We first show below, in Theorem~\ref{thm:conv-guarantee}, that when the fraction of Byzantine faulty agents $f/n$ is bounded then the DGD method with the CGE gradient-filter is $(f, \, \O(\epsilon))$-resilient, under $(2f, \,\epsilon)$-redundancy and the above assumptions. 

\begin{theorem}
    \label{thm:conv-guarantee}
    Suppose that the non-faulty agents' cost functions satisfy the $(2f, \, \epsilon)$-redundancy property, and the Assumptions~\ref{assum:lipschitz},~\ref{assum:strongly-convex} and~\ref{assum:compact} hold true. Consider the algorithm in Section~\ref{sub:steps} with the CGE gradient-filter defined in~\eqref{eqn:cge_gf}. The following holds true:
    
    \begin{enumerate}
        \item $\norm{\gf \left(g^t_1, \ldots, \, g^t_n\right)} < \infty$ for all $t$.
        \item If
            \begin{equation}
                \alpha = 1-\dfrac{f}{n}\left(1+\dfrac{2\mu}{\gamma}\right) > 0 \label{eqn:alpha}
            \end{equation}
            then for each set of $n-f$ non-faulty agents $\H$, for each $\delta > 0$,
            \begin{align*}
                & \phi_t =\iprod{x^t-x_{\H}}{\gf \left(g^t_1, \ldots,  g^t_n\right)} \geq \alpha n \gamma \delta \left( \left( \dfrac{4\mu f}{ \alpha  \gamma} \right) \epsilon + \delta \right) \\
                & \textrm{ when }\norm{x^t-x^*} \geq \left( \dfrac{4\mu f}{ \alpha \,  \gamma} \right)\, \epsilon + \delta. 
            \end{align*}
    \end{enumerate}
    
    
\end{theorem}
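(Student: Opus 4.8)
The plan is to verify the two claims in turn, treating the boundedness claim as a warm-up and devoting the bulk of the effort to the lower bound on $\phi_t$. For the first claim, I would observe that the CGE output is a sum of $n-f$ of the received gradients, each of norm at most the $(n-f)$-th smallest norm among all $n$ gradients. Since at least $n-f$ agents are non-faulty and, by Assumption~\ref{assum:lipschitz}, each non-faulty $\nabla Q_i$ is continuous and hence bounded on the compact set $\W$, there are at least $n-f$ gradients of norm at most $M^t := \max_{i \in \H}\norm{\nabla Q_i(x^t)} < \infty$. Hence the $(n-f)$-th smallest norm is at most $M^t$, so $\norm{\gf(g_1^t, \ldots, g_n^t)} \le (n-f)M^t < \infty$, regardless of what the faulty agents send.

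For the second claim, fix a set $\H$ of $n-f$ non-faulty agents, write $x^* = x_{\H}$, and let $\S^t$ denote the set of $n-f$ agents whose gradients CGE retains. The decomposition I would use compares the filter output against the true non-faulty aggregate: since $\mnorm{\S^t} = \mnorm{\H} = n-f$, exactly $k := \mnorm{\S^t \cap \B}$ non-faulty agents (those in $\H \setminus \S^t$) are dropped and replaced by $k$ faulty ones, so
\[ \gf(g_1^t, \ldots, g_n^t) - \sum_{i \in \H}\nabla Q_i(x^t) = \sum_{j \in \S^t \cap \B} g_j^t - \sum_{i \in \H\setminus\S^t}\nabla Q_i(x^t). \]
I would then split $\phi_t$ into the inner product of $x^t-x^*$ with $\sum_{i\in\H}\nabla Q_i(x^t)$ plus the inner product with this corruption term. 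For the first piece, strong convexity of $Q_{\H}$ (Assumption~\ref{assum:strongly-convex}) together with $\nabla Q_{\H}(x^*)=0$ (Assumption~\ref{assum:compact} makes $x^*$ an unconstrained minimizer) gives $\iprod{x^t-x^*}{\sum_{i\in\H}\nabla Q_i(x^t)} \ge (n-f)\gamma\norm{x^t-x^*}^2$; for the second piece, Cauchy--Schwarz gives a lower bound of $-\norm{x^t-x^*}$ times the norm of the corruption term.

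The crux is to bound the corruption term by something of order $\mu(\norm{x^t-x^*} + \epsilon)$. Here I would exploit the defining property of CGE: every retained gradient has norm at most that of every dropped gradient, and since the $k$ dropped non-faulty agents lie among the $f$ largest norms, $\sum_{j\in\S^t\cap\B}\norm{g_j^t} \le \sum_{i\in\H\setminus\S^t}\norm{\nabla Q_i(x^t)}$. Thus the corruption norm is at most $2\sum_{i\in\H\setminus\S^t}\norm{\nabla Q_i(x^t)}$, a sum of at most $f$ non-faulty gradient norms, and Lipschitz smoothness gives $\norm{\nabla Q_i(x^t)} \le \mu\norm{x^t-x^*} + \norm{\nabla Q_i(x^*)}$. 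It remains to bound the individual non-faulty gradients at $x^*=x_{\H}$, and this is the step I expect to be the main obstacle, since individual minimizers need not coincide. This is where $(2f,\epsilon)$-redundancy enters: for any $\widehat S\subseteq\H$ with $\mnorm{\widehat S} = n-2f$, Definition~\ref{def:approx_red} places a minimizer $x_{\widehat S}$ within $\epsilon$ of $x_{\H}$, and since $\sum_{j\in\widehat S}\nabla Q_j(x_{\widehat S})=0$, Lipschitz smoothness yields $\norm{\sum_{j\in\widehat S}\nabla Q_j(x_{\H})}\le(n-2f)\mu\epsilon$. Comparing two such subsets differing in a single index bounds $\norm{\nabla Q_i(x_{\H})-\nabla Q_{i'}(x_{\H})}\le 2(n-2f)\mu\epsilon$ for all $i,i'\in\H$; averaging against $\sum_{i\in\H}\nabla Q_i(x_{\H})=0$ then delivers the per-agent bound $\norm{\nabla Q_i(x_{\H})}\le 2(n-2f)\mu\epsilon$.

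Finally I would assemble the pieces. Writing $r = \norm{x^t-x^*}$, the bounds above give $\phi_t \ge (n-f)\gamma r^2 - 2r\bigl(f\mu r + 2f(n-2f)\mu\epsilon\bigr)$. Collecting the quadratic terms produces the coefficient $(n-f)\gamma - 2f\mu$, which is exactly $\alpha n\gamma$ by the definition of $\alpha$ in~\eqref{eqn:alpha}, so $\phi_t \ge \alpha n\gamma\, r^2 - 4f(n-2f)\mu\epsilon\, r \ge \alpha n\gamma\, r\bigl(r - \tfrac{4\mu f}{\alpha\gamma}\epsilon\bigr)$, using $n-2f \le n$. Since $r \mapsto r(r-\tfrac{4\mu f}{\alpha\gamma}\epsilon)$ is increasing beyond $\tfrac{4\mu f}{\alpha\gamma}\epsilon$, evaluating at the hypothesis $r \ge \tfrac{4\mu f}{\alpha\gamma}\epsilon + \delta$ yields $\phi_t \ge \alpha n\gamma\,\delta\bigl(\tfrac{4\mu f}{\alpha\gamma}\epsilon + \delta\bigr)$, as claimed. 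The remaining care is bookkeeping: confirming $n-2f-1\ge 0$ (guaranteed by $\alpha>0$, which forces $n>3f$ since $\mu\ge\gamma$) so that the single-index subset comparison is valid, and noting that the case $k=0$ makes the corruption term vanish and only strengthens the bound.
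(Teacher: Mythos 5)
Your proposal is correct and follows essentially the same route as the paper's own proof: the same decomposition of $\phi_t$ into the honest aggregate term (handled by strong convexity and $\nabla Q_{\H}(x_{\H})=0$) plus a corruption term, the same use of CGE's sorting to dominate the retained Byzantine gradients by the dropped non-faulty ones via the pairing $\mnorm{\S^t \cap \B} = \mnorm{\H \setminus \S^t}$, and the same redundancy-plus-Lipschitz mechanism for bounding individual non-faulty gradient norms at $x_{\H}$, assembled into the identical quadratic bound $\alpha n \gamma\, r\left(r - \tfrac{4\mu f}{\alpha\gamma}\epsilon\right)$. Your two local variations — establishing part 1 by continuity of non-faulty gradients on the compact set $\W$ rather than the paper's explicit $(n-f)(2n\mu\epsilon + \mu\Gamma)$ bound, and deriving the per-agent bound $\norm{\nabla Q_i(x_{\H})} \leq 2(n-2f)\mu\epsilon$ via single-index swaps of $(n-2f)$-subsets and averaging against the zero aggregate gradient, rather than the paper's nested sets $S_1 \subset S_2 = S_1 \cup \{i\}$ of sizes $n-2f$ and $n-2f+1$ — are cosmetic and in fact yield marginally tighter constants, which are then absorbed into $2n\mu\epsilon$ exactly as in the paper.
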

~

Proof of Theorem~\ref{thm:conv-guarantee} is deferred to Appendix~\ref{app:proof_1}. 

Let $\H$ denote an arbitrary set of $n-f$ non-faulty agents. If the step-size $\eta_t$ in~\eqref{eqn:update} is diminishing, i.e., $\sum_{t=0}^{\infty}\eta_t=\infty\textrm{ and }\sum_{t=0}^{\infty}\eta_t^2<\infty$, then Theorem~\ref{thm:conv-guarantee}, in conjunction with Theorem~\ref{thm:upper-bound-D} implies that, under the said conditions, 
\begin{equation*}
    \lim_{t\rightarrow\infty}\norm{x^t - x_{\H}} \leq \left( \dfrac{4\mu f}{ \alpha \,  \gamma} \right)\, \epsilon + \delta, \quad \forall \delta > 0.
\end{equation*}
The above implies that $\lim_{t\rightarrow\infty}\norm{x^t - x_{\H}} \leq \left( 4 \mu f/ \alpha \gamma \right) \, \epsilon$. Thus, Theorem~\ref{thm:conv-guarantee} shows that under $(2f, \, \epsilon)$-redundancy, and Assumptions~\ref{assum:lipschitz},~\ref{assum:strongly-convex} and~\ref{assum:compact}, if $\alpha > 0$, or the fraction of Byzantine faulty agents $f/n$ is less than $1/\left(1 + 2(\mu/\gamma) \right)$, then the DGD method with the CGE gradient-filter is asymptotically {\em $(f, \, \D \epsilon)$-resilient} (by Definition~\ref{def:approx_res}) where 
\begin{align}
    \D  = \frac{4 \mu f}{\alpha \gamma} = \frac{4 \mu \, n}{  (n/f) \, \gamma -  ( \gamma  + 2 \mu) }. \label{eqn:D_cge}
\end{align}
A smaller number $f$ of Byzantine faulty agents implies a smaller value of $\D$, and therefore, better fault-tolerance of the algorithm. Moreover, $\D = 0$ when $f = 0$, i.e., the algorithm indeed converges to the actual minimum point of all the agents' aggregate cost function in the fault-free case. Note that under Assumptions~\ref{assum:lipschitz} and~\ref{assum:strongly-convex}, $\gamma \leq \mu$
. So, the fault-tolerance guarantee of the CGE gradient-filter, presented in Theorem~\ref{thm:conv-guarantee}, requires $f/n < 1/3$, or $f < n/3$.




On a side note, we have also obtained an alternative bound for the DGD method with CGE by making better use of the $2f$-redundancy property. The bound is stated in the following theorem, and the proof can be found in Appendix~\ref{appdx:conv-guarantee-2}.
\begin{theorem}
    \label{thm:conv-guarantee-alt}
    Suppose that the non-faulty agents' cost functions satisfy the $(2f, \, \epsilon)$-redundancy property, and the Assumptions~\ref{assum:lipschitz},~\ref{assum:strongly-convex} and~\ref{assum:compact} hold true. Consider the algorithm in Section~\ref{sub:steps} with the CGE gradient-filter defined in~\eqref{eqn:cge_gf}. The following holds true:
    
    \begin{enumerate}
        \item $\norm{\gf \left(g^t_1, \ldots, \, g^t_n\right)} < \infty$ for all $t$.
        \item If $f\leq n/3$ and 
            \begin{equation}
                \alpha = 1-\dfrac{f}{n}\left(1+\dfrac{\mu}{\gamma}\right) > 0 \label{eqn:alpha-alt}
            \end{equation}
            then for each set of $n-f$ non-faulty agents $\H$, for each $\delta > 0$,
            \begin{align*}
                & \phi_t =\iprod{x^t-x_{\H}}{\gf \left(g^t_1, \ldots,  g^t_n\right)} \geq \alpha n \gamma \delta \left( \left( \dfrac{(1+2f)(n-2f)\mu}{ \alpha n \gamma} \right) \epsilon + \delta \right) \\
                & \textrm{ when }\norm{x^t-x^*} \geq \left( \dfrac{(1+2f)(n-2f)\mu}{ \alpha n \gamma} \right)\, \epsilon + \delta. 
            \end{align*}
    \end{enumerate}
\end{theorem}

Next, we show that when the separation between the gradients of the non-faulty agents' cost functions is sufficiently small then the CWTM gradient-filter can guarantee some approximate fault-tolerance under $(2f, \, \epsilon)$-redundancy. To present the fault-tolerance of the CWTM gradient-filter, we make the following additional assumption.

\begin{assumption}
\label{assum:corr_grad}
For two non-faulty agents $i$ and $j$, we assume that there exists $\lambda > 0$ such that for all $x \in \W$,
\[\norm{\nabla Q_i(x) - \nabla Q_j(x)} \leq \lambda \max \left\{ \norm{\nabla Q_i(x)},  \norm{\nabla Q_j(x)}\right\}.\]
\end{assumption}
Due to the triangle triangle inequality, Assumption~\ref{assum:corr_grad} trivially holds true when $\lambda = 2$. However, we can presently guarantee fault-tolerance of CWTM gradient-filter when $\lambda < \gamma/(\mu \sqrt{d})$ where $\mu$ and $\gamma$ are the Lipschitz smoothness and strong convexity coefficients, respectively defined in Assumption~\ref{assum:lipschitz} and~\ref{assum:strongly-convex}. Recall the definition of point $x_{\H} \in \R^d$ from~\eqref{eqn:uniq_min} where $\H$ denotes an arbitrary set of $n-f$ non-faulty agents.

\begin{theorem}
\label{thm:conv-guarantee_cwtm}
    Suppose that the non-faulty agents' cost functions satisfy the $(2f, \, \epsilon)$-redundancy property, and the Assumptions~\ref{assum:lipschitz},~\ref{assum:strongly-convex},~\ref{assum:compact} and~\ref{assum:corr_grad} hold true. Consider the algorithm in Section~\ref{sub:steps} with the CWTM gradient-filter defined in~\eqref{eqn:cwtm_gf}. The following holds true:
    \begin{enumerate}
        \item $\norm{\gf \left(g^t_1, \ldots, \, g^t_n\right)} < \infty$ for all $t$.
        \item If $\lambda < \gamma/(\mu \sqrt{d})$ then for each set of $n-f$ non-faulty agents $\H$, for each $\delta > 0$,
    \begin{align*}
        \phi_t & =\iprod{x^t-x_{\H}}{\gf \left(g^t_1, \ldots, g^t_n\right)} \geq \left( 2 \sqrt{d} n \mu \lambda \epsilon +  \left(\gamma -  \sqrt{d} \lambda \mu \right)  \delta \right) \delta 
    \end{align*}
    when \(
        \norm{x^t - x_{\H}} \geq \cfrac{2 \sqrt{d} n \mu \lambda}{(\gamma - \sqrt{d} \mu \lambda)} \, \epsilon  + \delta.
    \)
    \end{enumerate}
    
\end{theorem}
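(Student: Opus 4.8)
The plan is to instantiate the generic convergence result of Theorem~\ref{thm:upper-bound-D} with $x^* = x_{\H}$ and $\D^* = \frac{2\sqrt{d}\,n\mu\lambda}{\gamma - \sqrt{d}\mu\lambda}\,\epsilon$; the two claims then correspond exactly to the two hypotheses of Theorem~\ref{thm:upper-bound-D} (finiteness of the filtered gradient, and the inner-product lower bound~\eqref{eqn:lem_cond}). Claim~1 is the easy part: by Assumption~\ref{assum:lipschitz} and compactness of $\W$ (Assumption~\ref{assum:compact}) each non-faulty gradient is bounded on $\W$, and the coordinate-wise trimmed mean outputs in each coordinate a value bracketed by two non-faulty agents' coordinates (see below), so $\norm{\gf(g^t_1,\dots,g^t_n)} < \infty$. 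The substance is Claim~2. Writing $r = \norm{x^t - x_{\H}}$ and letting $Q_{\H} = \frac{1}{\mnorm{\H}}\sum_{i\in\H}Q_i$ denote the average non-faulty cost, I would decompose
\[ \phi_t = \iprod{x^t - x_{\H}}{\nabla Q_{\H}(x^t)} + \iprod{x^t - x_{\H}}{\gf(g^t_1,\dots,g^t_n) - \nabla Q_{\H}(x^t)}. \]
For the first term I use strong convexity (Assumption~\ref{assum:strongly-convex}) together with $\nabla Q_{\H}(x_{\H}) = 0$ (valid because $x_{\H}$ is the unconstrained minimizer lying in $\W$ by Assumption~\ref{assum:compact} and~\eqref{eqn:uniq_min}) to get $\iprod{x^t - x_{\H}}{\nabla Q_{\H}(x^t)} \geq \gamma\, r^2$, and for the second term Cauchy--Schwarz gives the bound $-\,r\,\norm{\gf - \nabla Q_{\H}(x^t)}$. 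Everything then reduces to bounding the filtering error $\norm{\gf - \nabla Q_{\H}(x^t)}$.

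The key structural fact about CWTM is a coordinate-wise sandwiching property: since at most $f$ of the $n$ agents are faulty, every surviving coordinate value (at a sorted rank between $f+1$ and $n-f$) has at least $f$ received values above and at least $f$ below it, hence is bracketed by two non-faulty coordinate values. Consequently, for each coordinate $k$ both $\gf[k]$ and $\nabla Q_{\H}(x^t)[k]$ lie in the interval between $\min_i \nabla Q_i(x^t)[k]$ and $\max_i \nabla Q_i(x^t)[k]$ over the non-faulty agents, so their difference is at most the spread of those coordinate values. Assumption~\ref{assum:corr_grad} bounds this spread: for non-faulty $i,j$, $\mnorm{\nabla Q_i(x^t)[k] - \nabla Q_j(x^t)[k]} \le \norm{\nabla Q_i(x^t) - \nabla Q_j(x^t)} \le \lambda \max_l \norm{\nabla Q_l(x^t)}$. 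Summing the squared per-coordinate bounds over the $d$ coordinates yields $\norm{\gf - \nabla Q_{\H}(x^t)} \le \sqrt{d}\,\lambda \max_{l}\norm{\nabla Q_l(x^t)}$, the maximum ranging over non-faulty agents, which is where the $\sqrt{d}$ factor originates.

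What remains, and what I expect to be the main obstacle, is to control $\max_l \norm{\nabla Q_l(x^t)}$ by $\mu\, r$ plus an $\epsilon$-proportional term. Writing $\norm{\nabla Q_l(x^t)} \le \mu\, r + \norm{\nabla Q_l(x_{\H})}$ via Lipschitz smoothness, the task is to bound each non-faulty gradient magnitude at the common point $x_{\H}$ using only $(2f,\epsilon)$-redundancy. This is the delicate conversion of redundancy, which is a statement about the \emph{locations} of subset minimizers, into a statement about gradient \emph{magnitudes}: one compares $x_{\H}$ to the minimizer of an $(n-2f)$-subset, or of another $(n-f)$ non-faulty set containing $l$, whose distance is controlled by Definition~\ref{def:approx_red} through the Hausdorff bound, and then pushes that displacement through $\mu$-Lipschitz smoothness. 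Tracking the constant here is what fixes the factor multiplying $\epsilon$, yielding $\max_l \norm{\nabla Q_l(x^t)} \le \mu\, r + 2n\mu\epsilon$. Combining the three bounds gives
\[ \phi_t \ge \gamma\, r^2 - \sqrt{d}\,\lambda\,r\,(\mu\, r + 2n\mu\epsilon) = (\gamma - \sqrt{d}\lambda\mu)\, r^2 - 2\sqrt{d}\,n\mu\lambda\epsilon\, r. \]

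Finally I would close with a short algebraic optimization. Setting $A = \gamma - \sqrt{d}\mu\lambda > 0$ (positive precisely because $\lambda < \gamma/(\mu\sqrt{d})$), the right-hand side equals $r\,(A r - 2\sqrt{d}n\mu\lambda\epsilon)$, which is increasing in $r$ on the relevant range; evaluating at the threshold $r = \tfrac{2\sqrt{d}n\mu\lambda}{A}\epsilon + \delta$ and using $r \ge \delta$ there reproduces exactly $\bigl(2\sqrt{d}n\mu\lambda\epsilon + (\gamma - \sqrt{d}\lambda\mu)\delta\bigr)\delta$, i.e.\ the stated lower bound on $\phi_t$ whenever $\norm{x^t - x_{\H}} \ge \tfrac{2\sqrt{d}n\mu\lambda}{\gamma - \sqrt{d}\mu\lambda}\epsilon + \delta$. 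Plugging this, with $\D^*$ as above, into Theorem~\ref{thm:upper-bound-D} then yields the asymptotic $(f,\D^*)$-resilience of DGD with CWTM. The two points that must be handled with care are the exact coordinate-wise sandwiching for the trimmed mean and, above all, the redundancy-to-gradient-magnitude conversion that determines the constant in front of $\epsilon$.
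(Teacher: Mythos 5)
Your proposal follows the paper's own proof essentially step by step: the same instantiation of Theorem~\ref{thm:upper-bound-D} at $x^* = x_{\H}$, the same decomposition of $\phi_t$ into a strong-convexity term and a filtering-error term, the same coordinate-wise sandwiching of the trimmed mean between non-faulty coordinate values, the same use of Assumption~\ref{assum:corr_grad} with the $\sqrt{d}$ aggregation, and the same closing algebra. The one step you leave underspecified---and it is precisely the step where a literal execution of your sketch would stall---is the bound $\norm{\nabla Q_l(x_{\H})} \leq 2n\mu\epsilon$ for an \emph{individual} non-faulty agent $l$. Comparing $x_{\H}$ with the minimizer $x_1$ of a single $(n-2f)$-subset $S_1$ and pushing the displacement $\norm{x_1 - x_{\H}} \leq \epsilon$ through $\mu$-Lipschitz smoothness only bounds the \emph{aggregate} gradient, $\norm{\sum_{j \in S_1} \nabla Q_j(x_{\H})} \leq (n-2f)\mu\epsilon$, because it is the sum, not any single agent's gradient, that vanishes at $x_1$. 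The paper's device (Eq.~\eqref{eqn:honest-bound-everywhere} in Appendix~\ref{app:proof_1}, which Appendix~\ref{app:proof_2} reuses as Eq.~\eqref{eqn:pf_2_lip}) is to run this argument for \emph{two nested} subsets, $S_1 \subset \H \setminus \{l\}$ with $\mnorm{S_1} = n-2f$ and $S_2 = S_1 \cup \{l\}$: the $(2f,\epsilon)$-redundancy property of Definition~\ref{def:approx_red} applies to both (each has size at least $n-2f$ and sits inside the $(n-f)$-set $\H$), giving $\norm{\sum_{j \in S_1} \nabla Q_j(x_{\H})} \leq (n-2f)\mu\epsilon$ and $\norm{\sum_{j \in S_2} \nabla Q_j(x_{\H})} \leq (n-2f+1)\mu\epsilon$, and then $\nabla Q_l(x_{\H})$ is recovered as the difference of the two aggregate gradients, yielding $(2n-4f+1)\mu\epsilon \leq 2n\mu\epsilon$. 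With that lemma filled in, your plan coincides with the paper's proof, including the final threshold evaluation at $\norm{x^t - x_{\H}} \geq \frac{2\sqrt{d}n\mu\lambda}{\gamma - \sqrt{d}\mu\lambda}\epsilon + \delta$.
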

~

Proof of Theorem~\ref{thm:conv-guarantee_cwtm} is deferred to Appendices~\ref{app:proof_2}. 

By similar arguments as in the case of CGE, Theorem~\ref{thm:conv-guarantee_cwtm}, in conjunction with Theorem~\ref{thm:upper-bound-D}, implies that the DGD method with CWTM gradient-filter and diminishing step-sizes is asymptotically {\em $(f, ~ \D' \, \epsilon )$-resilient} where
\[\D' = \frac{2 \sqrt{d} n \mu \lambda}{(\gamma - \sqrt{d} \mu \lambda)} = \left( \frac{2 n }{(\gamma/ \mu \lambda \sqrt{d}) - 1} \right),\]
under the conditions stated in Theorem~\ref{thm:conv-guarantee_cwtm}.
The smaller the value of $\lambda$ is, i.e., the closer non-faulty gradients to each other are, the smaller is the value of $\D'$, and therefore, better is the approximate fault-tolerance guarantee of the CWTM gradient-filter. Unlike the CGE gradient-filter, resilience of CWTM presented in Theorem~\ref{thm:conv-guarantee_cwtm} is independent of $f$, as long as $\lambda < \gamma/(\mu \sqrt{d})$. However, the condition on $\lambda$ to guarantee the resilience of CWTM gradient-filter depends upon the dimension $d$ of the optimization problem. Larger dimension result in a tighter bound on $\lambda$.

\section{Numerical Experiments}
\label{sec:experiments}

We present simulation results for an empirical evaluation of the CGE and CWTM gradient-filters applied to the problem of {\em distributed linear regression}~\cite{amemiya1985advanced}. More details about the experiments can be found in Appendix~\ref{apdx:experiments}. \\

We consider the synchronous server-based system in Figure~\ref{fig:sys}. We assume that $n = 6$ and $f = 1$. Each agent $i$ knows a row vector $A_i$ of dimension $d = 2$. Each agent $i$ makes a real-valued (scalar) observation denoted by $B_i$ such that $B_i=A_ix^*+N_i$, where $x^* = (1,1)^T$ for all $i$, and $N_i$ is a randomly chosen noise. The value of $A_i$, $B_i$ and $N_i$ are omitted here for brevity. 
To solve the linear regression problem distributedly, each agent $i$'s cost function is defined as $Q_i(x)=\left(B_i-A_ix\right)^2$. For a non-empty set of agents $S$, we denote by $A_S$ a matrix of dimension $\mnorm{S} \times 2$ obtained by stacking rows $\{A_i, \, i \in S\}$.
Similarly, we obtain column vector $B_S$ by stacking the values $\{B_i, \, i \in S\}$. Thus for every such non-empty set $S$, $Q_S(x) \triangleq \sum_{i\in S}\left(B_i-A_ix\right)^2=\norm{B_S-A_Sx}^2$. 
The rows $A_1, \ldots, \, A_n$ are chosen specifically to ensure that the system has $2f$-redundancy if $N_i = 0$ for all $i$. 
That is, each matrix $A_S$ with $|S|\geq n-2f=4$ is column full-rank or $\rank{A_S}=d=2$. Consequentially, the cost function $Q_S(x)$ has a unique minimum point for each set $S$ with $|S|\geq4$. \\



We simulate the distributed gradient-descent algorithm described in Section~\ref{sec:grad_des} by assuming agent 1 to be Byzantine faulty, i.e., the set of honest agents is $\mathcal{H}=\{2,3,4,5,6\}$. The minimum point of $\sum_{i\in\H}Q_i(x)$, denoted by $x_\H$, can be obtained by solving $B_\H=A_\H x$. Specifically, $x_\H=(1.0780, 0.9825)^T$. The goal of fault-tolerant distributed linear regression is to estimate $x_\H$. 
In our simulations, it can be verified that the agents' cost functions satisfy the {\em $(2f, \, \epsilon)$-redundancy} property, stated in Definition~\ref{def:approx_red}, with $\epsilon = 0.0890$. It can also be verified that the non-faulty agents' cost functions satisfy Assumptions~\ref{assum:lipschitz} and~\ref{assum:strongly-convex} with $\mu=2$ and $\gamma=0.712$, respectively. We simulate the following fault behaviors for the Byzantine agents:
\begin{itemize}
    \item \textit{gradient-reverse}: the faulty agent \textit{reverses} its true gradient. Suppose the correct gradient of a faulty agent $i$ at step $t$ is $s_i^t$, the agent $i$ will send the incorrect gradient $g_i^t=-s_i^t$ to the server.
    \item \textit{random}: the faulty agent sends a randomly chosen vector in $\mathbb{R}^d$. In our experiments, the faulty agent in each step chooses i.i.d. Gaussian random vector with mean 0 and an isotropic covariance matrix of standard deviation $200$.
\end{itemize}

In the simulations, we apply a diminishing step size $\eta_t$, and a convex compact $\W$ as described in previous sections. For comparison purpose, all experiments have the same initial estimate $x^0=(-0.0085,-0.5643)^T$. In every execution, the estimates practically converge after 400 iterations. We document the output of the algorithm to be $x_{\textrm{out}}=x^{500}$. The outputs for the two gradient-filters, CGE and CWTM, under different faulty behaviors, are shown in Table~\ref{tab:results}. Note that $\dist{x_\H}{x_{\textrm{out}}}=\norm{x_\H-x_{\textrm{out}}}$. In all the executions, the distance $\norm{x_\H - x_{\textrm{out}}} < \epsilon$. \\

\comment{Email on Aug 11: On Figure 2: why doesn't "plain GD" curve show any variations at all. \\
\\Plotted again.}

For the said executions, we plot in Figure~\ref{fig:fault-comparison} the values of the aggregate cost function $\sum_{i \in \H}Q_i(x^t)$ (referred as {\em loss}) and the approximation error $\norm{x^t - x_{\H}}$ (referred as {\em distance}) for iteration $t$ ranging from $0$ to $1500$. We also show the plots of the fault-free DGD method where the faulty agent is omitted, 
and the DGD method without any gradient-filter when agent $1$ is Byzantine faulty. The details for iteration $t$ ranging from 0 to 80 are also highlighted in Figure~\ref{fig:fault-comparison-detail}. \\

We also conducted experiments for distributed learning with support vector machine with faulty agents in the distributed learning system (see Section~\ref{sub:app}). We observed that the DGD method with the said gradient-filters reaches comparable performance to the fault-free case, and that DGD cannot reach convergence if it uses plain averaging to aggregate the gradients, including the faulty ones. We also observed that the accuracy of the learning process depends upon the correlation between the data points of non-faulty agents. Details of those results can be found in Appendix~\ref{appdx:exp-learning}.

\begin{table}[tb!]
    \centering
    \caption{{\it For the distributed linear regression problem, our algorithm's outputs with gradient-filters CGE and CWTM, and the approximation errors, corresponding to executions when the faulty agent $1$ exhibits two different types of Byzantine faults; \emph{gradient-reverse} and \emph{random}. 
    }}
    \small
    \begin{tabular}{c|cc|cc}
         & \multicolumn{2}{c}{{\bf gradient-reverse}} & \multicolumn{2}{c}{{\bf random}} \\
         & $x_{\mathrm{out}}$ & $\dist{x_\H}{x_{\mathrm{out}}}$ & $x_{\mathrm{out}}$ & $\dist{x_\H}{x_{\mathrm{out}}}$ \\
        \hline
        {\bf CGE} & $\begin{pmatrix}1.0541\\0.9826\end{pmatrix}$ & 0.0239& $\begin{pmatrix}1.0779\\0.9826\end{pmatrix}$ & $4.72\times10^{-5}$ \\
        {\bf CWTM} & $\begin{pmatrix}1.0645\\0.9924\end{pmatrix}$ & 0.0167 & $\begin{pmatrix}1.0775\\0.9840\end{pmatrix}$ & $1.51\times10^{-3}$ \\
    \end{tabular}
\label{tab:results}
\end{table}

\begin{figure}[t]
    \centering
    \includegraphics[width=.65\linewidth]{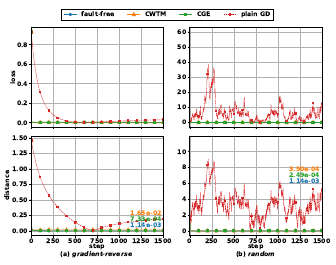}
    \caption{    
    \small{\it The losses, i.e., $\sum_{i \in \H}Q_i(x^t)$, and distances, i.e., $\norm{x^t - x_{\H}}$, versus the number of iterations in the algorithm. The final approximation errors, i.e., $\norm{x^{5000} - x_{\H}}$, are annotated in the same colors as their corresponding plots. For the executions shown, agent $1$ is assumed to be Byzantine faulty. The different columns show the results when the faulty agent exhibits the different types of faults: \emph{(a)}~gradient-reverse, and \emph{(b)}~random. Apart from the plots with CGE (in {\em green}) and CWTM (in {\em yellow}) gradient-filters, we also plot the fault-free DGD method where the faulty agent is omitted 
    (in {\em blue}), and the DGD method without any gradient-filters when agent $1$ is Byzantine faulty (in {\em red}), both using averaging for aggregation.
    }}
    \label{fig:fault-comparison}
\end{figure}
\begin{figure}[t]
    \centering
    \includegraphics[width=.65\linewidth]{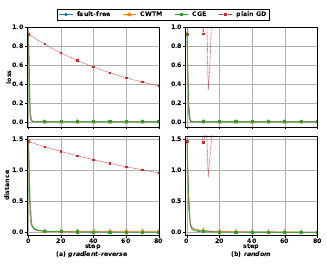}
    \caption{\small{\it The losses, i.e., $\sum_{i \in \H}Q_i(x^t)$, and distances, i.e., $\norm{x^t - x_{\H}}$, versus the number of iterations in the algorithm, magnified for the initial 80 iterations in the training process. The interpretation of the plots is same as that in Figure~\ref{fig:fault-comparison}.
    }}
    \label{fig:fault-comparison-detail}
\end{figure}

\section{Summary}
We have considered the problem of {\em approximate} Byzantine fault-tolerance -- a generalization of the {\em exact} fault-tolerance problem studied in prior work~\cite{gupta2020fault}. Unlike the exact fault-tolerance, the goal in approximate fault-tolerance is to design a distributed optimization algorithm that approximates a minimum point of the aggregate cost function of (at least $n-f$) non-faulty agents, in the presence of up to $f$ (out of $n$) Byzantine faulty agents. We have defined approximate fault-tolerance formally as $(f, \, \epsilon)$-resilience where $\epsilon \in \R_{\geq 0}$ represents the approximation error. In the first part of the paper, i.e, Section~\ref{sec:approx-fault-tolerance}, we have obtained necessary and sufficient conditions for achieving $(f, \, \epsilon)$-resilience. 
In the second part of the paper, i.e., Sections~\ref{sec:grad_des} and~\ref{sec:experiments}, we have considered the case when agents' cost functions are differentiable. In this particular case, we have obtained a generic approximate fault-tolerance property of the distributed gradient-descent method equipped with Byzantine robust gradient aggregation or {\em gradient-filter}, and have demonstrated the utility of this property by considering two specific well-known gradient-filters; comparative gradient elimination and coordinate-wise trimmed mean. In Section~\ref{sec:experiments}, we have demonstrated the applicability of our results through experiments.

\section*{Acknowlegements}
Research reported in this paper was supported in part by the Army Research Laboratory under Cooperative Agreement W911NF- 17-2-0196, and by the National Science Foundation award 1842198. The views and conclusions contained in this document are those of the authors and should not be interpreted as representing the official policies, either expressed or implied, of the Army Research Laboratory, National Science Foundation or the U.S. Government. Research reported in this paper is also supported in part by a Fritz Fellowship from Georgetown University.


\bibliographystyle{acm}
\bibliography{ref}





\appendix
\section{Appendix: Proof of Lemma~\ref{lem:n_2f}}
\label{app:n_2f}

\begin{lemma*}[Restated]
If $f \geq n/2$ then there cannot exist a deterministic $(f, \, \epsilon)$-resilient algorithm for any $\epsilon \geq 0$.
\end{lemma*}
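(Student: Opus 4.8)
The plan is to use a standard indistinguishability argument, exploiting the fact that $f \geq n/2$ permits two \emph{disjoint} candidate sets of non-faulty agents, each of size $n-f$. Concretely, the inequality $f \geq n/2$ is equivalent to $2(n-f) \leq n$, so I can choose disjoint subsets $A, B \subseteq \{1, \ldots, n\}$ with $\mnorm{A} = \mnorm{B} = n-f$ (assuming, as is standard, $f < n$ so that $n-f \geq 1$). The complement of each of these sets has cardinality exactly $f$, so either set can legitimately play the role of the non-faulty agents in an execution with up to $f$ Byzantine faults.

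Next I would design the agents' cost functions. It suffices to take scalar functions ($d=1$), chosen so that $\arg\min \sum_{i \in A} Q_i(x) = \{x_A\}$ and $\arg\min \sum_{i \in B} Q_i(x) = \{x_B\}$ are singletons with $\norm{x_A - x_B} > 2\epsilon$; the cost functions of the agents outside $A \cup B$ are irrelevant and may be fixed arbitrarily. I then construct two executions: in execution (i) the agents in $A$ are non-faulty and all others are Byzantine, while in execution (ii) the agents in $B$ are non-faulty and all others are Byzantine. In each case at most $f$ agents are faulty, so both executions are admissible.

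The key step is to render the two executions indistinguishable to the algorithm. In both executions I have every faulty agent report exactly the information consistent with its fixed cost function $Q_i$ chosen above; since Byzantine agents may send arbitrary information, this impersonation is always possible. Consequently the data received by the server is identical in executions (i) and (ii), so a deterministic algorithm must output the same point $\widehat{x}$ in both. But $(f,\epsilon)$-resilience applied with $S = A$ in execution (i) forces $\norm{\widehat{x} - x_A} \leq \epsilon$, and applied with $S = B$ in execution (ii) it forces $\norm{\widehat{x} - x_B} \leq \epsilon$. Combining these via the triangle inequality yields $\norm{x_A - x_B} \leq 2\epsilon$, contradicting $\norm{x_A - x_B} > 2\epsilon$, and no such algorithm can exist.

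The main obstacle, and the only place where the hypothesis $f \geq n/2$ is genuinely used, is securing two candidate non-faulty sets $A$ and $B$ that are simultaneously disjoint, each of size $n-f$, and each with a complement of size at most $f$; this is precisely the counting inequality $2(n-f) \leq n$. Everything else is routine: selecting well-separated minima and verifying that the faulty agents can present identical views across the two scenarios.
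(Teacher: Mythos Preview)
Your proposal is correct and follows essentially the same indistinguishability argument as the paper: two candidate non-faulty sets whose aggregate minima are more than $2\epsilon$ apart, combined with Byzantine agents mimicking honest behavior so that a deterministic algorithm cannot tell the scenarios apart. The only cosmetic difference is that the paper first spells out the base case $n=2$, $f=1$ and then remarks that the argument extends to general $n$ and $f\geq n/2$, whereas you handle the general case directly via the counting inequality $2(n-f)\leq n$.
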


\begin{proof}
We prove the lemma by contradiction. We consider a case when $n = 2$, $d = 1$, i.e., $Q_i: \R \to \R$ for all $i$, and all the cost functions have unique minimum points. Suppose that $f = 1$, and that there exists a deterministic $(f, \, \epsilon)$-resilient algorithm $\Pi$ for some $\epsilon \geq 0$. Without loss of generality, we suppose that agent $2$ is Byzantine faulty. We denote $x_1 = \arg \min_{x \in \R}Q_1(x)$. \\

The Byzantine agent $2$ can choose to behave as a non-faulty agent with cost function $\widetilde{Q}_2(x) = Q_1(x - x_1 - 2\epsilon - \delta)$ where $\delta$ is some positive real value. Now, note that the minimum point of $\widetilde{Q}_2(x)$, which we denoted by $x_2$, is unique and equal to $x_1 + 2\epsilon + \delta$. Therefore, $\mnorm{x_1 - x_2} = 2 \epsilon + \delta > 2 \epsilon$. As the identity of Byzantine agent is a priori unknown, the server cannot distinguish between scenarios; (i) agent $1$ is non-faulty, and (ii) agent $2$ is non-faulty. Now, being deterministic algorithm, $\Pi$ should produce the same output in both the scenarios. In scenario (i), as $\Pi$ is assumed $(f, \, \epsilon)$-resilient, its output must lie in the interval $[x_1 - \epsilon, \, x_1 + \epsilon]$. Similarly, in scenario (ii), the output of $\Pi$ must lie in the interval $[x_2 - \epsilon, \, x_2 + \epsilon]$. However, as $\mnorm{x_1 - x_2} > 2\epsilon$, the two intervals  $[x_1 - \epsilon, \, x_1 + \epsilon]$ and $[x_2 - \epsilon, \, x_2 + \epsilon]$ do not overlap. Therefore, $\Pi$ cannot be $(f, \, \epsilon)$-resilient in both the scenarios simultaneously, which is a contradiction to the assumption that $\Pi$ is $(f, \, \epsilon)$-resilient. \\

The above argument extends easily for the case when $n > 2$, and $f > n/2$.
\end{proof}

\section{Appendix: The Special Case of $(f, \, 0)$-Resilience}
\label{app:exact_approx}
We show that {\em $(f, \, 0)$-resilience}, stated in Definition~\ref{def:approx_res}, and {\em exact fault-tolerance}, defined in Section~\ref{sub:background}, are equivalent in the deterministic framework. Specifically, we show that a deterministic $(f, \, 0)$-resilient algorithm achieves exact fault-tolerance, and a deterministic exact fault-tolerant algorithm is $(f, \, 0)$-resilient. 
We consider the server-based system architecture. Notation $\arg \min_{x \in \R^d}$ is simply written as $\arg \min$, unless otherwise stated.\\

First, we show that $(f, \, 0)$-resilience implies exact fault-tolerance. Suppose that there exists a deterministic $(f, \, 0)$-resilient algorithm $\Pi$. Consider an arbitrary execution $E_{\H}$ of $\Pi$ wherein $\H \subseteq \{1, \ldots, \, n\}$ denotes the set of all the non-faulty agents, and let $\widehat{x}$ denote the output. Recall that, as there are at most $f$ faulty agents, $\mnorm{\H} \geq n-f$. To prove that $\Pi$ has exact fault-tolerance, it suffices to show that, in execution $E_{\H}$, $\widehat{x}$ is a minimum point of the aggregate cost function of all non-faulty agents $\sum_{i \in \H} Q_i(x)$.\\

By Definition~\ref{def:approx_res} of $(f, \, 0)$-resilience, for every set $S \subseteq \H$ with $\mnorm{S} = n-f$,
\begin{align*}
    \widehat{x} \in \arg \min \sum_{i \in S} Q_i(x). 
\end{align*}
Therefore, for every set $S$ with $S \subseteq \H$ and $\mnorm{S} = n-f$,
\begin{align}
    \sum_{i \in S} Q_i(\widehat{x}) \leq  \sum_{i \in S} Q_i(x), \quad \forall x \in \R^d. \label{eqn:e_a_1}
\end{align}
Now, note that there are ${\mnorm{\H}} \choose {n-f}$ subsets in $\H$ of size $n-f$, and each agent $i \in \H$ is contained in $\mnorm{\H} - 1 \choose n-f-1$ of those subsets. Therefore,
\begin{align}
    \sum_{\begin{subarray}{c}S \subseteq \H,\\ \mnorm{S} = n-f \end{subarray}} \sum_{i \in S} Q_i(x) = {\mnorm{\H} - 1 \choose n-f-1} \, \sum_{i \in \H} Q_i(x). \label{eqn:sum_choose_sum}
\end{align}
Substituting from~\eqref{eqn:e_a_1} in~\eqref{eqn:sum_choose_sum} we obtain that 
\begin{align*}
    \sum_{i \in \H} Q_i(\widehat{x}) \leq \sum_{i \in \H} Q_i(x), \quad \forall x \in \R^d
\end{align*}
The above implies that
\begin{align*}
    \widehat{x} \in \arg \min \sum_{i \in \H} Q_i(x).
\end{align*}
The above proves that $\Pi$ has exact fault-tolerance in execution $E_{\H}$.\\

Now, we show that exact fault-tolerance implies $(f, \, 0)$-resilience. Suppose that $\Pi$ is a deterministic algorithm with exact fault-tolerance. Similar to above, consider an arbitrary execution $E_{\H}$ of $\Pi$ wherein set $\H$ comprises all the non-faulty agents, and $\widehat{x}$ is its output. Therefore, 
\begin{align*}
    \widehat{x} \in \arg \min \sum_{i \in \H} Q_i(x). 
\end{align*}
To prove that $\Pi$ is $(f, \, 0)$-resilient, it suffices to show that in execution $E_{\H}$ for every set $S \subseteq \H$ with $\mnorm{S} = n-f$, $\widehat{x}$ is a minimum point of the aggregate cost function $\sum_{i \in S} Q_i(x)$. This is trivially true when $\mnorm{\H} = n-f$. We assume below that $\mnorm{\H} > n-f$.\\

Consider an arbitrary subset $S$ of $\H$ with $\mnorm{S} = n-f$. Consider an execution $E_{S}$ wherein $S$ is the set of all non-faulty agents, with the remaining agents in $\{1, \ldots, \, n\} \setminus S$ being Byzantine faulty. Suppose that the inputs from all the agents to the server in $E_S$ are identical to their inputs in $E_{\H}$. Therefore, as $\Pi$ is a deterministic algorithm, its output in execution $E_{S}$ is same as that in execution $E_{\H}$, i.e., $\widehat{x}$. Moreover, as $\Pi$ is assumed to have exact fault-tolerance, 
\[\widehat{x} \in \arg \min \sum_{i \in S} Q_i(x).\]
As $S$ is an arbitrary subset $S$ of $\H$ with $\mnorm{S} = n-f$, the above proves that $\Pi$ is $(f, \, 0)$-resilient in execution $E_{\H}$. 

\section{Appendix: Proof of $\gamma \leq \mu$}
\label{app:gamma_mu}

We show below that if Assumptions~\ref{assum:lipschitz} and~\ref{assum:strongly-convex} hold true simultaneously then $\gamma \leq \mu$.\\

Consider an arbitrary set of $n-f$ non-faulty agents $\H$, and two arbitrary non-identical points $x, \, y \in \R^d$, i.e., $x \neq y$. If Assumption~\ref{assum:lipschitz} holds true then
\begin{align*}
    \norm{\nabla Q_i(x) - \nabla Q_i(y)} \leq \mu \norm{x - y}, \quad \forall i \in \H. 
\end{align*}
Therefore, owing to the Cauchy-Schwartz inequality, for all $i \in \H$,
\begin{align}
    \iprod{x - y}{\nabla Q_i(x) - \nabla Q_i(y)} \leq \norm{x - y} \, \norm{\nabla Q_i(x) - \nabla Q_i(y)} \leq  \mu \norm{x - y}^2. \label{eqn:gm_1}
\end{align}
From~\eqref{eqn:gm_1} we obtain that
\begin{align}
    \sum_{i \in \H} \iprod{x - y}{\nabla Q_i(x) - \nabla Q_i(y)} \leq  \mu \mnorm{\H} \, \norm{x - y}^2. \label{eqn:gm_3}
\end{align}
If Assumption~\ref{assum:strongly-convex} holds true then
\begin{align}
    \sum_{i \in \H} \iprod{x - y}{\nabla Q_i(x) - \nabla Q_i(y)} \geq  \gamma \mnorm{\H} \, \norm{x - y}^2. \label{eqn:gm_2}
\end{align}
As $x, \,y$ are arbitrary non-identical points,~\eqref{eqn:gm_3} and~\eqref{eqn:gm_2} together imply that $\gamma \leq \mu$.

\section{Appendix: Proof of Theorem~\ref{thm:upper-bound-D}}
\label{apdx:lemma-uppr-bnd-D}

The proof of Theorem~\ref{thm:upper-bound-D} relies on the following sufficient criterion for the convergence of non-negative sequences.\\

\noindent \fbox{\begin{minipage}{\linewidth}
\begin{lemma}[Bottou, 1998 \cite{bottou1998online}]
    \label{lem:seq_conv}
    Consider a sequence of real values $\{u_t, \, t = 0, \, 1, \ldots \}$. If $u_t \geq 0, \, \forall t$ then 
    \begin{align}
        \sum_{t = 0}^\infty (u_{t+1} - u_t)_{+} < \infty \implies \left\{\begin{array}{c} u_t \underset{t \to \infty}{\longrightarrow} u_\infty < \infty \\ \\ \sum_{t = 0}^\infty (u_{t+1} - u_t)_{-} > -\infty \end{array}\right.
    \end{align}
    where the operators $(\cdot)_{+}$ and $(\cdot)_{-}$ are defined as follows for a real scalar $x$,
    \begin{align*}
        (x)_{+} = \left\{\begin{array}{ccc} x &, & x > 0\\ 0 &, & \text{otherwise} \end{array}\right. \text{, and } (x)_{-} = \left\{\begin{array}{ccc} 0 &, & x > 0\\ x &, & \text{otherwise} \end{array}\right.
    \end{align*}
\end{lemma}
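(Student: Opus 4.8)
The plan is to exploit the elementary decomposition of each increment into its positive and negative parts and then telescope, using the non-negativity of the sequence as the single piece of information that couples the two desired conclusions together. First I would record that every real number $x$ satisfies $x = (x)_+ + (x)_-$, so that for each $t$,
\[
u_{t+1} - u_t = (u_{t+1} - u_t)_+ + (u_{t+1} - u_t)_-.
\]
Summing this identity over $t = 0, \ldots, T$ and telescoping the left-hand side gives
\[
u_{T+1} = u_0 + P_T + N_T, \qquad P_T \triangleq \sum_{t=0}^T (u_{t+1}-u_t)_+, \quad N_T \triangleq \sum_{t=0}^T (u_{t+1}-u_t)_-.
\]

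Next I would analyze the two partial sums separately through monotonicity. Since each term $(u_{t+1}-u_t)_+$ is non-negative, $P_T$ is non-decreasing in $T$; the hypothesis $\sum_{t=0}^\infty (u_{t+1}-u_t)_+ < \infty$ says precisely that it is bounded above, so $P_T$ converges to a finite limit $P \geq 0$. Symmetrically, each term $(u_{t+1}-u_t)_-$ is non-positive, so $N_T$ is non-increasing. The crucial step—and the only place the hypothesis $u_t \geq 0$ is invoked—is to bound $N_T$ from below: rearranging the telescoped identity and using $u_{T+1} \geq 0$ yields $N_T = u_{T+1} - u_0 - P_T \geq -u_0 - P_T \geq -u_0 - P$. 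Hence $N_T$ is a non-increasing sequence that is bounded below, and by the monotone convergence theorem for real sequences it converges to a finite limit $N_\infty > -\infty$. This is exactly the second conclusion, namely $\sum_{t=0}^\infty (u_{t+1}-u_t)_- > -\infty$.

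Finally I would combine the two limits: passing to the limit in $u_{T+1} = u_0 + P_T + N_T$ shows that $u_T$ converges to $u_\infty \triangleq u_0 + P + N_\infty$, which is finite because both $P$ and $N_\infty$ are finite, and satisfies $u_\infty \geq 0$ since each $u_T \geq 0$. This delivers the first conclusion. The argument is entirely elementary, so the main (and essentially only) obstacle is the conceptual one of recognizing that the non-negativity of $\{u_t\}$ is exactly what forbids the negative-part partial sums $N_T$ from drifting to $-\infty$; this single observation simultaneously yields both the convergence of $u_t$ to a finite limit and the summability of the negative increments.
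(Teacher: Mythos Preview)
Your proof is correct and entirely standard. Note, however, that the paper does not actually prove this lemma: it is stated (boxed) as a quoted result from Bottou~\cite{bottou1998online} and invoked without proof in the argument for Theorem~\ref{thm:upper-bound-D}. So there is no ``paper's own proof'' to compare against; your telescoping argument with the positive/negative-part decomposition is exactly the classical justification and would serve as a self-contained replacement for the citation.
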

\end{minipage}}
~\\

Recall from the statement of Theorem~\ref{thm:upper-bound-D} that $x^* \in \W$ where $\W$ is a compact convex set.
We define, for all $t \in \{0, \, 1, \ldots \}$,
\begin{align}
    e_t = \norm{x^t-x^*}. \label{eqn:def_et}
\end{align}
Next, we define a univariate real-valued function $\psi: \R \to \R$:
\begin{equation}
    \psi(y)=\left\{\begin{array}{cc}
        0, & y < (\D^*)^2, \\
        \left(y-(\D^*)^2\right)^2,  & y \geq (\D^*)^2.
    \end{array}\right.
    \label{eqn:def_psi}
\end{equation}
Let $\psi'(y)$ denote the derivative of $\psi$ at $y \in \R$. Specifically, 
\begin{equation}
    \psi'(y)=\max\left\{0, 2\left(y-(\D^*)^2\right)\right\}.
    \label{eqn:psi_x}
\end{equation}
We show below that $\psi'(\cdot)$ is a Lipschitz continuous function with Lipschitz coefficient of $2$. From~\eqref{eqn:psi_x}, we obtain that
\begin{equation}
    \mnorm{\psi'(y) - \psi'(z)} = \left\{\begin{array}{ccc} 2 \mnorm{y - z} & , &  \text{ both } y, \, z \geq (\D^*)^2\\ 2 \mnorm{y - (D^*)^2} & , & y \geq (\D^*)^2, \, z < (\D^*)^2 \\0 & , & \text{ both } y, \, z < (\D^*)^2 \end{array}\right. \label{eqn:lip_psi_1}
\end{equation}
Note from~\eqref{eqn:lip_psi_1} that for the case when $y \geq (\D^*)^2, \, z < (\D^*)^2$,  
\[\mnorm{\psi'(y) - \psi'(z)} = 2 \mnorm{y - (D^*)^2} < 2 \mnorm{y - z}.\]
Similarly, due to symmetry, when $y < (\D^*)^2, \, z \geq (\D^*)^2$ then $\mnorm{\psi'(y) - \psi'(z)} = 2 2 \mnorm{z - (D^*)^2} < 2 \mnorm{y - z}.$ 
Therefore, from~\eqref{eqn:psi_x} we obtain that
\begin{align}
    \mnorm{\psi'(y) - \psi'(z)} \leq 2 \mnorm{y - z}, \quad \forall y, \, z \in \R. \label{eqn:psi_lipschitz}
\end{align}
The Lipschitz continuity of $\psi'(\cdot)$, shown in~\eqref{eqn:psi_lipschitz}, implies that~\cite[Section 4.1]{bottou2018optimization}
\begin{equation}
    \psi(z)-\psi(y)\leq(z-y)\psi'(y)+(z-y)^2, \quad \forall y, \, z \in \R.
    \label{eqn:psi_bnd}
\end{equation}\\


Now, for each $t \in \{0, \, 1, \ldots \}$, we define 
\begin{equation}
    h_t=\psi(e^2_t).
    \label{eqn:ht_def_noise}
\end{equation}
From~\eqref{eqn:psi_bnd} and~\eqref{eqn:ht_def_noise}, for all $t$, we obtain that
\begin{equation*}
    h_{t+1} - h_t = \psi \left( e_{t+1}^2\right) - \psi \left( e_t^2\right) \leq \left(e_{t+1}^2 - e_t^2\right) \cdot \psi'\left( e_t^2\right) + \left( e_{t+1}^2 - e_t^2 \right)^2.
\end{equation*}
From now on we use $\psi_t'$ as a shorthand for $\psi'(e_t^2)$. From above we have
\begin{equation}
    h_{t+1} - h_t \leq\left(e_{t+1}^2 - e_t^2\right)\psi'_t + \left( e_{t+1}^2 - e_t^2 \right)^2.
    \label{eqn:ht_1_noise}
\end{equation}
~

Now, recall from~\eqref{eqn:update} that for all $t \in \{0, \, 1, \ldots\}$,
\begin{equation}
    x^{t+1}=\left[x^t-\eta_t\, \gf \left(g^t_1, \ldots, \, g^t_n\right)\right]_{\W}
    \label{eqn:appendix-update}
\end{equation}
By the non-expansion property of Euclidean projection onto a closed convex set, 
\begin{equation*}
    \norm{x^{t+1}-x^*}\leq\norm{x^t-x^*-\eta_t \, \gf \left(g^t_1, \ldots, \, g^t_n\right) }.
\end{equation*}
Recall from~\eqref{eqn:def_et} that $e_t$ denotes $\norm{x^t - x^*}$ for all $t$. Upon squaring the both sides in the above inequality, we obtain that
\begin{align}
    e_{t+1}^2\leq e_t^2&-2\eta_t\iprod{x_t-x^*}{\gf \left(g^t_1, \ldots, \, g^t_n\right)} \nonumber \\
        &+ \eta_t^2 \norm{\gf \left(g^t_1, \ldots, \, g^t_n\right)}^2. \label{eqn:above_phi_t}
\end{align}
Recall, from~\eqref{eqn:lem_cond} in the statement of Theorem~\ref{thm:upper-bound-D}, that 
\[\phi_t = \iprod{x_t-x^*}{\gf \left(g^t_1, \ldots, \, g^t_n\right)}, \quad \forall t.\] 
Substituting from the above in~\eqref{eqn:above_phi_t}, we obtain that
\begin{equation}
    e_{t+1}^2\leq e_t^2-2\eta_t\phi_t+\eta_t^2\norm{\gf \left(g^t_1, \ldots, \, g^t_n\right)}^2.
    \label{eqn:proj-bound}
\end{equation}
As $\psi'_t\geq0$, $\forall t$, substituting from \eqref{eqn:proj-bound} in \eqref{eqn:ht_1_noise} we get
\begin{equation}
    h_{t+1}-h_t\leq\left(-2\eta_t\phi_t+\eta_t^2\norm{\gf \left(g^t_1, \ldots, \, g^t_n\right)}^2\right)\psi_t'+\left(e_{t+1}^2-e_t^2\right)^2.
    \label{eqn:ht_2_noise}
\end{equation}
~

Note that for an arbitrary $t$,
\begin{equation}
    \mnorm{e_{t+1}^2-e_t^2}=(e_{t+1}+e_t)\mnorm{e_{t+1}-e_t}.
    \label{eqn:a2-b2}
\end{equation}
As $\mathcal{W}$ is assumed compact, there exists $\Gamma=\max_{x\in\mathcal{W}}\norm{x-x^*} < \infty$. We assume $\Gamma>0$, otherwise $\W = \{x^*\}$ and the theorem is trivial. Recall from the update rule~\eqref{eqn:update}, which is re-stated above in~\eqref{eqn:appendix-update}, that $x^t\in\mathcal{W}$ for all $t$, and that $x^* \in \W$. Therefore,
\begin{equation}
    e_t = \norm{x^t - x^*} \leq \max_{x\in\mathcal{W}}\norm{x-x^*} = \Gamma, \quad \forall t. \label{eqn:et_gamma}
\end{equation}
From~\eqref{eqn:et_gamma}, for all $t$, we obtain that
\begin{equation*}
    e_{t+1}+e_t\leq2\Gamma.
\end{equation*}
Substituting from above in \eqref{eqn:a2-b2} implies that
\begin{equation}
    \mnorm{e_{t+1}^2-e_t^2}\leq2\Gamma\mnorm{e_{t+1}-e_t}, \quad \forall t.
    \label{eqn:ab}
\end{equation}
From triangle inequality, we get
\begin{equation*}
    \mnorm{e_{t+1}-e_t}=\mnorm{\norm{x^{t+1}-x^*}-\norm{x^t-x^*}}\leq\norm{x^{t+1}-x^t}.
\end{equation*}
Substituting from above in~\eqref{eqn:ab} we obtain that
\begin{equation}
    \mnorm{e_{t+1}^2-e_t^2}\leq2\Gamma\norm{x^{t+1}-x^t}.
    \label{eqn:ab-2}
\end{equation}
Due to the non-expansion property of Euclidean projection onto a closed convex set, from~\eqref{eqn:appendix-update} we obtain that
\begin{align*}
    \norm{x^{t+1}-x^t}&=\norm{\left[x^t-\eta_t \, \gf \left(g^t_1, \ldots, \, g^t_n\right)\right]_\mathcal{W}-x^t} \leq \eta_t\norm{\gf \left(g^t_1, \ldots, \, g^t_n\right)}.
\end{align*}
Substituting from above in~\eqref{eqn:ab-2} we obtain that
\begin{equation*}
    \mnorm{e_{t+1}^2-e_t^2}\leq2\eta_t\Gamma\norm{\gf \left(g^t_1, \ldots, \, g^t_n\right)}.
\end{equation*}
Thus, 
\begin{equation}
    \left(e_{t+1}^2-e_t^2\right)^2\leq4\eta_t^2 \, \Gamma^2 \, \norm{\gf \left(g^t_1, \ldots, \, g^t_n\right)}^2.
    \label{eqn:ab-f}
\end{equation}
Substituting from~\eqref{eqn:ab-f} in~\eqref{eqn:ht_2_noise} we obtain that, for all $t$,
\begin{align}
     h_{t+1}-h_t&\leq\left(-2\eta_t\phi_t+\eta_t^2\norm{\gf \left(g^t_1, \ldots, \, g^t_n\right)}^2\right)\psi_t' + 4\eta_t^2 \Gamma^2 \norm{\gf \left(g^t_1, \ldots, \, g^t_n\right)}^2 \nonumber \\
     &=-2\eta_t\phi_t\psi_t'+\eta_t^2\left(\psi_t'+4\Gamma^2\right)\norm{\gf \left(g^t_1, \ldots, \, g^t_n\right)}^2.
     \label{eqn:ht-ht-noise}
\end{align}
~

Recall from~\eqref{eqn:et_gamma} that $e_t \leq \Gamma$. Also, by assumption, $\D^* < \max_{x \in \W} \norm{x - x^*} = \Gamma$. Recall that $\psi'_t$ is short for $\psi'(e^2_t)$. Therefore, from~\eqref{eqn:psi_x} we obtain that 
\begin{equation}
    0\leq\psi_t'\leq2\left(\Gamma^2-\left(\D^*\right)^2\right)\leq2\Gamma^2, \quad \forall t.
    \label{eqn:d-psi-bound}
\end{equation}
As the statement of Theorem~\ref{thm:upper-bound-D} assumes that $\norm{\gf \left(g^t_1, \ldots, \, g^t_n\right)} < \infty$ for all $t$, there exists a real value $\M < \infty$ such that
\begin{equation}
    \norm{\gf \left(g^t_1, \ldots, \, g^t_n\right)} \leq \M, \quad \forall t.
    \label{eqn:m-bound}
\end{equation}
Substituting from~\eqref{eqn:d-psi-bound} and~\eqref{eqn:m-bound} in \eqref{eqn:ht-ht-noise} we obtain that
\begin{equation}
    h_{t+1}-h_t\leq-2\eta_t\phi_t\psi_t'+6\eta_t^2\Gamma^2\M^2.
    \label{eqn:ht_M}
\end{equation}
We now use Lemma~\ref{lem:seq_conv} to prove that $h_\infty=0$ as follows. \\

For an iteration $t$, we consider below two possible cases: (i) $e_t < \D^*$, and (ii) $e_t = \D^* + \delta$ for some $\delta \geq 0$.
\begin{enumerate}[label=Case \roman*), nosep]
    \item In this particular case, $\psi_t'=0$. Therefore, due to Cauchy-Schwartz inequality,
    \begin{align*}
        \mnorm{\phi_t}&=\mnorm{\iprod{x^t-x^*}{\gf \left(g^t_1, \ldots, \, g^t_n\right)}} \leq e_t \, \norm{\gf \left(g^t_1, \ldots, \, g^t_n\right)}.
    \end{align*}
    Substituting from~\eqref{eqn:m-bound} above, we obtain that $\mnorm{\phi_t}\leq\Gamma\M<\infty$. Therefore, 
    \begin{equation}
        \phi_t\psi_t'=0.
        \label{eqn:case-1}
    \end{equation}
    \item In this particular case, from~\eqref{eqn:psi_x}, we obtain that
    \[\psi_t' =  2 \left( (\D^* + \delta)^2 - (\D^*)^2\right) = 2 \delta (2 \D^* + \delta).\] 
    Now, by assumption, $\phi_t \geq \xi$ when $e_t \geq \D^*$ where $\xi > 0$. Therefore, 
    \begin{equation}
        \phi_t\psi_t' \geq 2 \xi \delta (2 \D^* + \delta) .
        \label{eqn:case-2}
    \end{equation}
\end{enumerate}
From~\eqref{eqn:case-1} and~\eqref{eqn:case-2} above, we obtain that
\begin{equation}
    \phi_t\psi_t'\geq0, \quad \forall t. \label{eqn:positive_phi_psi}
\end{equation}
Substituting the above in~\eqref{eqn:ht_M} implies that
\begin{equation*}
    h_{t+1}-h_t \leq 6\eta_t^2\Gamma^2\M^2, \quad \forall t.
\end{equation*}
Recall that notation $(\cdot)_+$ from Lemma \ref{lem:seq_conv}. The above inequality implies that
\begin{equation*}
    \left(h_{t+1}-h_t\right)_+\leq6\eta_t^2\Gamma^2\M^2.
\end{equation*}
As $\sum_{t=0}^\infty\eta_t^2<\infty, ~ \forall t$, and constants $\Gamma, \, \M<\infty$, the above implies that
\begin{equation*}
    \sum_{t=0}^\infty\left(h_{t+1}-h_t\right)_+\leq6\Gamma^2\M^2 \sum_{t=0}^\infty\eta_t^2<\infty.
\end{equation*}
As $h_t\geq0$ for all $t$, the above in conjunction with Lemma~\ref{lem:seq_conv} implies that
\begin{align}
\begin{split}
    h_t\xrightarrow[]{t\rightarrow\infty}h_\infty<\infty, ~\textrm{and} \\
    \sum_{t=0}^\infty\left(h_{t+1} - h_t\right)_->-\infty. \label{eqn:conv-noise}
\end{split}
\end{align}
Note that $h_\infty-h_0=\sum_{t=0}^\infty\left(h_{t+1}-h_t\right)$. Thus, from \eqref{eqn:ht_M} we obtain that
\begin{equation}
    h_\infty-h_0\leq-2\sum_{t=0}^\infty\eta_t\phi_t\psi_t'+6\Gamma^2\M^2 \, \sum_{t=0}^\infty\eta_t^2. \label{eqn:h_t_h_0_bnd}
\end{equation}
Therefore, form above we obtain that
\begin{equation}
    2\sum_{t=0}^\infty\eta_t\phi_t\psi_t' \leq h_0- h_{\infty} + 6\Gamma^2\M^2 \, \sum_{t=0}^\infty\eta_t^2 . \label{eqn:h_t_h_0_bnd_2}
\end{equation}
By assumption, $\sum_{t=0}^\infty\eta_t^2 < \infty$. From~\eqref{eqn:conv-noise}, $0\leq h_{\infty} < \infty$. Substituting from~\eqref{eqn:et_gamma} that $e_t < \infty, \, \forall t$ in Definition of $h_t$~\eqref{eqn:ht_def_noise}, we obtain that $h_0 = \psi (e^2_0) < \infty$. Therefore,~\eqref{eqn:h_t_h_0_bnd_2} implies that
\begin{equation*}
    2\sum_{t=0}^\infty\eta_t\phi_t\psi_t' < h_0 + 6\Gamma^2\M^2 \, \sum_{t=0}^\infty\eta_t^2 <\infty .
\end{equation*}
Or simply,
\begin{equation}
    \sum_{t=0}^\infty\eta_t\phi_t\psi_t' < \infty. \label{eqn:contra_0}
\end{equation}

Finally, we reason below by contradiction that $h_\infty=0$. 
Note that for any $\zeta>0$, there exists a unique positive value $\beta$ such that $\zeta = 2\beta\left(2\D^*+\sqrt{\beta}\right)^2$.
Suppose that $h_\infty = 2\beta(2\D^*+\sqrt{\beta})^2$ for some positive value $\beta$. As the sequence $\left\{h_t\right\}_{t=0}^\infty$ converges to $h_\infty$ (see~\eqref{eqn:conv-noise}), there exists some finite $\tau \in \mathbb{Z}_{\geq0}$ such that for all $t \geq \tau$,
\begin{equation*}
    \mnorm{h_t-h_\infty}\leq\beta\left(2\D^*+\sqrt{\beta}\right)^2 \implies h_t \geq h_\infty - \beta\left(2\D^*+\sqrt{\beta}\right)^2.
\end{equation*}
As $h_\infty = 2\beta(2\D^*+\sqrt{\beta})^2$, the above implies that
\begin{equation}
    h_t\geq\beta\left(2\D^*+\sqrt{\beta}\right)^2, \quad ~\forall t\geq\tau.
    \label{eqn:contra}
\end{equation}
Therefore (cf. \eqref{eqn:def_psi} and \eqref{eqn:ht_def_noise}), for all $t\geq\tau$,
\begin{gather*}
    \left(e_t^2-\left(\D^*\right)^2\right)^2\geq\beta\left(2\D^*+\sqrt{\beta}\right)^2,~\textrm{or} \\
    \mnorm{e_t^2-\left(\D^*\right)^2}\geq\sqrt{\beta}\left(2\D^*+\sqrt{\beta} \right).
\end{gather*}
Thus, for each $t \geq \tau$, either
\begin{equation}
    e_t^2\geq\left(\D^*\right)^2+\sqrt{\beta}\left(2\D^*+\sqrt{\beta}\right)=\left(\D^*+\sqrt{\beta}\right)^2,
    \label{eqn:et-opt1}
\end{equation}
or
\begin{equation}
    e_t^2\leq\left(\D^*\right)^2-\sqrt{\beta}\left(2\D^*+\sqrt{\beta}\right) < (\D^*)^2.
    \label{eqn:et-opt2}
\end{equation}
If the latter, i.e. \eqref{eqn:et-opt2}, holds true for some $t' \geq \tau$ then
\begin{equation*}
    h_{t'}=\psi\left(e^2_{t'}\right)=0,
\end{equation*}
which contradicts \eqref{eqn:contra}. Therefore,~\eqref{eqn:contra} implies~\eqref{eqn:et-opt1} for all $t \geq \tau$.\\

From above we obtain that if $h_\infty=2\beta(2\D^*+\sqrt{\beta})^2$ then there exists $\tau<\infty$ such that for all $t\geq\tau$,
\begin{equation*}
    e_t\geq\D^*+\sqrt{\beta}.
\end{equation*}
Thus, from~\eqref{eqn:case-2} we obtain that 
\begin{equation*}
    \phi_t\psi'_t \geq 2 \xi \sqrt{\beta} (2 \D^* + \sqrt{\beta}), \quad \forall t \geq \tau.
\end{equation*}
Therefore, 
\[\sum_{t=\tau}^\infty\eta_t\phi_t\psi_t' \geq 2 \xi \sqrt{\beta} (2 \D^* + \sqrt{\beta}) \sum_{t=\tau}^\infty\eta_t = \infty.\]
This is a contradiction to~\eqref{eqn:contra_0}. Therefore, $h_\infty=0$, and by definition of $h_t$ in~\eqref{eqn:ht_def_noise}, 
\begin{equation*}
    h_{\infty} = \lim_{t\rightarrow\infty} \psi(e^2_t) = 0.
\end{equation*}
Hence, by definition of $\psi(\cdot)$ in~\eqref{eqn:def_psi},
\begin{align*}
    \lim_{t\rightarrow\infty}\norm{x^t-x^*} \leq \D^*.
\end{align*}

\section{Appendix: Lemma~\ref{lemma:upbd-vector}} 
\label{app:lemma}

\begin{lemma}
    \label{lemma:upbd-vector}
    For $p$ vectors $a_1,...,a_p$ and a positive integer $q\leq p/2$, if there exists some real number $r\geq0$, such that for any set $S\subset[p]$ with $\mnorm{S}=q$ we have
    \begin{align}
        \norm{\sum_{i\in S}a_i}\leq r,
        \label{eqn:lemma-condition}
    \end{align}
    then for all $i\in[p]$ we also have
    \begin{align}
        \norm{a_i}\leq 2r.
    \end{align}
\end{lemma}
Let us first introduce some notations. For two vectors $a,b$, denote by $[a]_b$ the vector projection of $a$ onto $b$. Also, denote by $\boldsymbol{e}_b$ the unit vector in the direction of $b$; in other words, $b = \norm{b}\cdot\boldsymbol{e}_b$.
\begin{proof}
    The result is trivially true for $q=1$, as the size of set $S$ is 1. In the remainder of this proof, we assume that $q>1$, and consequently, $p>3$. The proof is by contradiction. 
    
    Suppose for some $k$, $\norm{a_k}>2r$. Define $s$ such that $a_k = s\cdot\boldsymbol{e}_{a_k}$. We have $s=\mnorm{s}>2r$. By \eqref{eqn:lemma-condition}, for any set $S'\subset[p]\backslash\{k\}$ with $\mnorm{S'}=q-1$, we have
    \begin{align}
        \norm{\sum_{i\in S'}a_i + a_{k}}\leq r.
    \end{align}
    Due to the non-expansion property of vector projection, 
    \begin{align}
        \norm{\left[\sum_{i\in S'}a_i + a_{k}\right]_{a_{k}}} 
        \leq \norm{\sum_{i\in S'}a_i+a_{k}}\leq r.
    \end{align}
    Let $t$ be a real value such that $\left[\sum_{i\in S'}a_i\right]_{a_{k}} = t\cdot \boldsymbol{e}_{a_k}$. It follows that 
    \begin{align}
        \norm{\left[\sum_{i\in S'}a_i + a_{k}\right]_{a_{k}}} = \norm{\left[\sum_{i\in S'}a_i\right]_{a_{k}}+a_{k}} = \norm{(t + s)\boldsymbol{e}_{a_k}} = \mnorm{t+s} \leq r.
    \end{align}
    Therefore, $t+s\leq r$. So $t\leq r-s<r-2r=-r$.

    On the other hand, for any $j\in[p]$, $j\notin S'$ and $j\neq k$, by \eqref{eqn:lemma-condition}, we also have
    \begin{align}
        \norm{\sum_{i\in S'}a_i + a_j}\leq r.
    \end{align}
    Similarly, due to the non-expansion property of vector projection,
    \begin{align}
        \norm{\left[\sum_{i\in S'}a_i + a_j\right]_{a_{k}}}\leq \norm{\sum_{i\in S'}a_i+a_j}\leq r.
    \end{align}
    Let $t'$ be a real value such that ${\left[a_j\right]_{a_{k}}} = t'\cdot \boldsymbol{e}_{a_k}$. It follows that 
    \begin{align}
        \norm{\left[\sum_{i\in S'}a_i + a_j\right]_{a_{k}}} = \norm{(t+t')\boldsymbol{e}_{a_k}} = \mnorm{t+t'}\leq r.
        \label{eqn:lemma-5-1}
    \end{align}
    Since $t<-r$, we have $t'>0$.

    Since $q\leq p/2$, there are at least $q$ different element that are in $[p]$ but not in $S'\cup\{k\}$. Let $q-1$ of such elements form a set $S''$. For each $i\in S''$, let $t'_i$ be a real value such that ${\left[a_i\right]_{a_{k}}} = t'_i\cdot \boldsymbol{e}_{a_k}$. By \eqref{eqn:lemma-5-1}, $t'_i>0$ for all $i\in S''$. Also, recall that $s=\norm{a_k}>2r\geq0$. Therefore, we have
    \begin{align}
        \norm{\sum_{i\in S''}a_i+a_{k}}&\geq \norm{\left[\sum_{i\in S''}a_i+a_{k}\right]_{a_{k}}} = \norm{\left[\sum_{i\in S''}a_i\right]_{a_{k}}+a_{k}} =
        \norm{\left(\sum_{i\in S''}t'_i+s\right)\cdot\boldsymbol{e}_{a_{k}}}
        \nonumber \\
        & =\mnorm{\sum_{i\in S''}t'_i + s} = \sum_{i\in S''}t'_i + s > \sum_{i\in S''}t'_i + 2r > 2r.
    \end{align}
    However, since $\mnorm{S''\cup\{k\}}=q$, by \eqref{eqn:lemma-condition}, 
    \begin{align}
        \norm{\sum_{i\in S''}a_i+a_{k}}\leq r.
    \end{align}
    This is a contradiction. Hence the proof.
\end{proof}

\section{Appendix: Lemma~\ref{lemma:bounding-gradients-at-xH} on the upper bound of the norm of gradients with $(2f,\epsilon)$-redundancy}

We assume $f > 0$ to ignore the trivial case of $f = 0$. We also assume $f\leq n/3$. 

Consider an arbitrary set $\H$ of non-faulty agents with $\mnorm{\H} = n-f$. Recall that under Assumptions~\ref{assum:strongly-convex} and~\ref{assum:compact}, the aggregate cost function $\sum_{i \in \H} Q_i(x)$ has a unique minimum point in set $\W$, which we denote by $x_{\H}$. Specifically, 
\begin{align}
    \left\{x_{\H} \right\} = \W \cap \arg \min_{x \in \R^d} \sum_{i \in \H} Q_i(x). \label{eqn:x_H_W}
\end{align}

\begin{lemma}
    \label{lemma:bounding-gradients-at-xH}
    Let $\H$ be a set of non-faulty agents with $\mnorm{\H}=n-f$. Suppose a group of $n$ cost functions $Q_i(x)$'s satisfies $(2f,\epsilon)$-redundancy, and Assumptions~\ref{assum:lipschitz},~\ref{assum:strongly-convex}, and~\ref{assum:compact} hold. Also, suppose $f\leq n/3$.
    Then, in any execution, for any subset $T\subset\H$ of $f$ agents, we have
    \begin{align}
        \norm{\sum_{j\in T}\nabla Q_j(x_{\H})} \leq (n-2f)\mu\epsilon.
        \label{eqn:lipschitz-distance-S1-complement}
    \end{align}
    Also, for any $j\in\H$ we have
    \begin{align}
        \norm{\nabla Q_j(x_\H)} \leq 2(n-2f)\mu\epsilon.
    \label{eqn:honest-norm-bound}
    \end{align}
\end{lemma}

\begin{proof}
    Consider any subset $S_1 \subset \H$ with $\mnorm{S_1} = n-2f$. From triangle inequality, $\forall x \in \R^d$,
\begin{align*}
    \norm{\sum_{j\in S_1}\nabla Q_j(x)-\sum_{j\in S_1}\nabla Q_j(x_{\H})} \leq \sum_{j \in S_1}\norm{\nabla Q_j(x) - \nabla Q_j(x_{\H})}.
\end{align*}
Under Assumption~\ref{assum:lipschitz}, i.e., Lipschitz continuity of non-faulty gradients, for each non-faulty agent $j$, $\norm{\nabla Q_j(x) - \nabla Q_j(x_{\H})} \leq \mu \norm{x - x_{\H}}$. Substituting this above implies that 
\begin{equation}
    \norm{\sum_{j\in S_1}\nabla Q_j(x)-\sum_{j\in S_1}\nabla Q_j(x_{\H})} \leq \mnorm{S_1} \mu \, \norm{x-x_{\H}}.
    \label{eqn:lipschitz-distance}
\end{equation}
As $\mnorm{S_1} = n-2f$, the $(2f,\epsilon)$-redundancy property defined in Definition~\ref{def:approx_red} and the fact \eqref{eqn:x_H_W} that $x_{\H}$ is the unique minimum of $\sum_{i\in\H}Q_i(x)$ in $\W$ imply that for all $x_1\in\arg\min_x\sum_{j\in S_1}Q_j(x)$,
\[\norm{x_1-x_{\H}} \leq \epsilon.\]
Substituting from above in~\eqref{eqn:lipschitz-distance} implies that, for all $x_1\in\arg\min_x\sum_{j\in S_1}Q_j(x)$,
\begin{equation}
    \norm{\sum_{j\in S_1}\nabla Q_j(x_1)-\sum_{j\in S_1}\nabla Q_j(x_{\H})} \leq \mnorm{S_1} \mu \, \norm{x_1-x_{\H}} \leq \mnorm{S_1} \mu \epsilon.
    \label{eqn:lipschitz-distance-2}
\end{equation}
For all $x_1\in\arg\min_x\sum_{j\in S_1}Q_j(x)$, $\sum_{j\in S_1}\nabla Q_j(x_1) = 0$. Thus,~\eqref{eqn:lipschitz-distance-2} implies that
\begin{equation}
    \norm{\sum_{j\in S_1}\nabla Q_j(x_{\H})} \leq \mnorm{S_1} \mu \epsilon.
    \label{eqn:lipschitz-distance-S1}
\end{equation}
On the other hand, for $x_{\H}$ we have $\sum_{j\in\H}\nabla Q_j(x_{\H})=0$. Thus, with $\mnorm{S_1}=n-2f$, \eqref{eqn:lipschitz-distance-S1} implies that 
\begin{align}
    \norm{\sum_{j\in\H\backslash S_1}\nabla Q_j(x_{\H})} = \norm{\sum_{j\in\H}\nabla Q_j(x_{\H}) - \sum_{j\in S_1}\nabla Q_j(x_{\H})} = \norm{\sum_{j\in S_1}\nabla Q_j(x_{\H})} \leq (n-2f)\mu\epsilon.
\end{align}
Note that $\mnorm{\H\backslash S_1}=f$. Since the choices of $\H$ and $S_1$ are arbitrary, the above implies that for any subset of agents $T\subset\H$ with $\mnorm{T}=f$, we have
\begin{align*}
    \norm{\sum_{j\in T}\nabla Q_j(x_{\H})} \leq (n-2f)\mu\epsilon.
\end{align*}
Since we assumed that $f\leq n/3$, $f\leq (n-f)/2$. By Lemma~\ref{lemma:upbd-vector}, letting $p=n-f$ and $q=f$, from above we have for all $j\in\H$,
\begin{align*}
    \norm{\nabla Q_j(x_{\H})} \leq 2(n-2f)\mu\epsilon.
\end{align*}
\end{proof}

\section{Appendix: Proof of Theorem~\ref{thm:conv-guarantee}}
\label{app:proof_1}

Throughout, we assume $f > 0$ to ignore the trivial case of $f = 0$.\\

Consider an arbitrary set $\H$ of non-faulty agents with $\mnorm{\H} = n-f$. Recall that under Assumptions~\ref{assum:strongly-convex} and~\ref{assum:compact}, the aggregate cost function $\sum_{i \in \H} Q_i(x)$ has a unique minimum point in set $\W$, which we denote by $x_{\H}$. Specifically, 
\begin{align}
    \left\{x_{\H} \right\} = \W \cap \arg \min_{x \in \R^d} \sum_{i \in \H} Q_i(x). \label{eqn:x_H_W}
\end{align}
Recall from~\eqref{eqn:cge_gf} that for CGE gradient-filter, in update rule~\eqref{eqn:update},
\begin{align}
    \gf\left(g^t_1, \ldots, \, g^t_n \right) = \sum_{j=1}^{n-f}g_{i_j}^t, \quad \forall t. \label{eqn:cge_grad_filter_proof}
\end{align}
~

First, we show that $\norm{\sum_{j=1}^{n-f}g_{i_j}^t} < \infty, ~ \forall t$. 
For all $x$ and $i\in\H$, by Assumption \ref{assum:lipschitz},
\begin{equation*}
    \norm{\nabla Q_i(x)-\nabla Q_i(x_{\H})}\leq\mu\norm{x-x_{\H}}.
\end{equation*}
By triangle inequality,
\begin{equation*}
    \norm{\nabla Q_i(x)}\leq\norm{\nabla Q_i(x_{\H})}+\mu\norm{x-x_{\H}}.
\end{equation*}
Substituting from~\eqref{eqn:honest-norm-bound} in Lemma~\ref{lemma:bounding-gradients-at-xH} above, we obtain that
\begin{equation}
    \norm{\nabla Q_i(x)}\leq2(n-2f)\mu\epsilon+\mu\norm{x-x_{\H}}\leq2n\mu\epsilon+\mu\norm{x-x_{\H}}.
    \label{eqn:honest-bound-everywhere}
\end{equation}
We use the above inequality~\eqref{eqn:honest-bound-everywhere} to show below that $\norm{\sum_{j=1}^{n-f}g_{i_j}^t}$ is bounded for all $t$. Recall that for each iteration $t$, 
\begin{equation*}
    \norm{g_{i_1}^t}\leq...\leq\norm{g_{i_{n-f}}^t}\leq\norm{g_{i_{n-f+1}}^t}\leq...\leq\norm{g_{i_n}^t}.
\end{equation*}
As there are at most $f$ Byzantine agents, for each $t$ there exists $\sigma_t\in\H$ such that
\begin{equation}
    \norm{g_{i_{n-f}}^t}\leq\norm{g_{i_{\sigma_t}}^t}.
    \label{eqn:honest-bound}
\end{equation}
As $g_j^t=\nabla Q_j(x^t)$ for all $j\in\H$, from~\eqref{eqn:honest-bound} we obtain that
\begin{equation*}
    \norm{g_{i_j}^t}\leq\norm{\nabla Q_{\sigma_t}(x^t)}, \quad \forall j \in \{1, \ldots, n-f\}, ~ t.
\end{equation*}
Substituting from~\eqref{eqn:honest-bound-everywhere} above we obtain that for every $j \in \{1, \ldots, n-f\}$,
\begin{equation*}
    \norm{g_{i_j}^t}\leq\norm{g_{i_{n-f}}^t}\leq2n\mu\epsilon+\mu\norm{x^t-x_{\H}}.
\end{equation*}
Therefore, from triangle inequality,
\begin{equation}
    \norm{\sum_{j=1}^{n-f}g_{i_j}^t}\leq\sum_{j=1}^{n-f}\norm{g_{i_j}^t}\leq(n-f)\left(2n\mu\epsilon+\mu\norm{x^t-x_{\H}}\right). \label{eqn:filtered-upperbound}
\end{equation}
Recall from~\eqref{eqn:x_H_W} that $x_{\H} \in \W$. Let $\Gamma = \max_{x \in \W} \norm{x - x_{\H}}$. As $\W$ is a compact set, $\Gamma < \infty$. Recall from the update rule~\eqref{eqn:update} that $x^t \in \W$ for all $t$. Thus, $\norm{x^t - x_{\H}} \leq \max_{x \in \W} \norm{x - x_{\H}} = \Gamma < \infty$. Substituting this in~\eqref{eqn:filtered-upperbound} implies that
\begin{equation}
    \norm{\sum_{j=1}^{n-f}g_{i_j}^t} \leq (n-f) \left( 2n \mu \epsilon + \mu \Gamma\right) < \infty. 
\end{equation}
Recall that in this particular case, $\sum_{j=1}^{n-f}g_{i_j}^t = \gf\left(g^t_1, \ldots, \, g^t_n \right)$ (see~\eqref{eqn:cge_grad_filter_proof}). Therefore, from above we obtain that
\begin{align}
    \norm{\gf\left(g^t_1, \ldots, \, g^t_n \right)} < \infty, \quad \forall t. \label{eqn:cge_bnd_grd}
\end{align}
~


Next, we show that for an arbitrary $\delta>0$, 
\begin{align}
    \phi_t\triangleq\iprod{x^t-x_{\H}}{\sum_{j=1}^{n-f}g_{i_j}^t} \geq \alpha n\gamma\delta\left(\cfrac{4\mu f}{\alpha\gamma}\epsilon+\delta\right)>0 ~ \text{ when } ~ \norm{x^t-x_{\H}} \geq \cfrac{4\mu f}{\alpha\gamma}\epsilon + \delta, \nonumber
\end{align}
where $\alpha$ is defined by \eqref{eqn:alpha}.


Consider an arbitrary iteration $t$. Note that, as $\mnorm{\H} = n-f$, there are at least $n-2f$ agents that are common to both sets $\H$ and $\{i_1,...,i_{n-f}\}$. We let $\H^t = \{i_1,...,i_{n-f}\} \cap \H$. The remaining set of agents $\B^t = \{i_1,...,i_{n-f}\} \setminus \H^t$ comprises of only faulty agents. Note that $\mnorm{\H^t} \geq n-2f $ and $\mnorm{\B^t} \leq f$. Therefore,
\begin{equation}
    \phi_t=\iprod{x^t-x_{\H}}{\sum_{j\in\H^t}g_j^t}+\iprod{x^t-x_{\H}}{\sum_{k\in\B^t}g_k^t}.
    \label{eqn:phi-t-two-parts}
\end{equation}
Consider the first term in the right-hand side of~\eqref{eqn:phi-t-two-parts}. Note that
\begin{align*}
    \iprod{x^t-x_{\H}}{\sum_{j\in\H^t}g_j^t}&=\iprod{x^t-x_{\H}}{\sum_{j\in\H^t}g_j^t+\sum_{j\in\H\backslash\H^t}g_j^t-\sum_{j\in\H\backslash\H^t}g_j^t} \nonumber \\
    &=\iprod{x^t-x_{\H}}{\sum_{j\in\H}g_j^t}-\iprod{x^t-x_{\H}}{\sum_{j\in\H\backslash\H^t}g_j^t}.
\end{align*}
Recall that $g_j^t=\nabla Q_j(x^t)$, $\forall j\in\H$. Therefore,
\begin{align}
    \iprod{x^t-x_{\H}}{\sum_{j\in\H^t}g_j^t}=&\iprod{x^t-x_{\H}}{\sum_{j\in\H}\nabla Q_j(x^t)} -\iprod{x^t-x_{\H}}{\sum_{j\in\H\backslash\H^t}\nabla Q_j(x^t)}. \label{eqn:first_phi_1}
\end{align}
~

\noindent Due to the strong convexity assumption (i.e., Assumption~\ref{assum:strongly-convex}), for all $x, \, y \in \W$,
\begin{equation}
    \iprod{x- y}{\nabla \sum_{j\in\H}Q_j(x)-\nabla\sum_{j\in\H} Q_j(y)} \geq \mnorm{\H}\, \gamma\norm{x-y}^2.
    \label{eqn:assum-strong-convex-restate}
\end{equation}
By the projection on $\W$ in update rule \eqref{eqn:update}, $x^t\in\W$ for all $t$. By \eqref{eqn:x_H_W}, $x_{\H}$ is also in $\W$. As $x_{\H}$ is minimum point of $\sum_{j \in \H}Q_j(x)$, $\nabla \sum_{j \in \H} Q_j(x_{\H}) = 0$. Thus, by \eqref{eqn:assum-strong-convex-restate} it follows that
\begin{align}
    \iprod{x^t-x_{\H}}{\sum_{j\in\H}\nabla Q_j(x^t)} & = \iprod{x^t-x_{\H}}{\nabla\sum_{j\in\H}Q_j(x^t)-\nabla\sum_{j\in\H}Q_j(x_{\H})} \geq \mnorm{\H} \, \gamma\norm{x^t-x_{\H}}^2.
    \label{eqn:inner-prod-h}
\end{align}
Now, due to the Cauchy-Schwartz inequality, 
\begin{align}
    \iprod{x^t-x_{\H}}{\sum_{j\in\H\backslash\H^t}\nabla Q_j(x^t)}&=\sum_{j\in\H\backslash\H^t}\iprod{x^t-x_{\H}}{\nabla Q_j(x^t)} \leq\sum_{j\in\H\backslash\H^t}\norm{x^t-x_{\H}}\, \norm{\nabla Q_j(x^t)}.
    \label{eqn:inner-prod-h-ht}
\end{align}
Substituting from~\eqref{eqn:inner-prod-h} and~\eqref{eqn:inner-prod-h-ht} in~\eqref{eqn:first_phi_1} we obtain that
\begin{equation}
    \iprod{x^t-x_{\H}}{\sum_{j\in\H^t}g_j^t} \geq \gamma\mnorm{\H}\, \norm{x^t-x_{\H}}^2-\sum_{j\in\H\backslash\H^t}\norm{x^t-x_{\H}}\, \norm{\nabla Q_j(x^t)}.
    \label{eqn:phi-t-first-part}
\end{equation}
~

Next, we consider the second term in the right-hand side of~\eqref{eqn:phi-t-two-parts}. From the Cauchy-Schwartz inequality, 
\begin{equation*}
    \iprod{x^t-x_{\H}}{g_k^t}\geq-\norm{x^t-x_{\H}}\, \norm{g_k^t}.
\end{equation*}
Substituting from~\eqref{eqn:phi-t-first-part} and above in~\eqref{eqn:phi-t-two-parts} we obtain that
\begin{equation}
    \phi_t\geq\gamma\mnorm{\H}\, \norm{x^t-x_{\H}}^2-\sum_{j\in\H\backslash\H^t}\norm{x^t-x_{\H}}\, \norm{\nabla Q_j(x^t)}-\sum_{k\in\B^t}\norm{x^t-x_{\H}}\, \norm{g_k^t}.
    \label{eqn:phi-t-2}
\end{equation}

Recall that, due to the sorting of the gradients, for an arbitrary $k \in \B^t$ and an arbitrary $j \in \H\backslash\H^t$,
\begin{equation}
    \norm{g_k^t}\leq\norm{g_j^t}=\norm{\nabla Q_j(x^t)}. \label{eqn:k_B_j_H}
\end{equation}
Recall that $\B^t = \{i_1, \ldots, \, i_{n-f}\} \setminus \H^t$. Thus, $\mnorm{\B^t} = n-f-\mnorm{\H^t}$. Also, as $\mnorm{\H} = n-f$, $\mnorm{\H\backslash\H^t} = n- f - \mnorm{\H^t}$. That is, $\mnorm{\B^t} = \mnorm{\H\backslash\H^t}$. Therefore,~\eqref{eqn:k_B_j_H} implies that
\begin{align*}
    \sum_{k \in \B^t} \norm{g_k^t} \leq \sum_{j \in \H\backslash\H^t} \norm{\nabla Q_j(x^t)}.
\end{align*}
Substituting from above in~\eqref{eqn:phi-t-2}, we obtain that
\begin{align*}
    \phi_t&\geq\gamma\mnorm{\H}\, \norm{x^t-x_{\H}}^2-2\sum_{j\in\H\backslash\H^t}\norm{x^t-x_{\H}}\, \norm{\nabla Q_j(x^t)}.
\end{align*}
Substituting from \eqref{eqn:honest-bound-everywhere}, i.e., $\norm{\nabla Q_i(x)}\leq2n\mu\epsilon+\mu\norm{x-x_{\H}}$, above we obtain that
\begin{align*}
    \phi_t&\geq\gamma\mnorm{\H}\, \norm{x^t-x_{\H}}^2 - 2\mnorm{\H\backslash\H^t}\, \norm{x^t-x_{\H}}\, (2n\mu\epsilon+\mu\norm{x^t-x_{\H}}) \nonumber\\
    & \geq \left(\gamma\mnorm{\H}-2\mu\mnorm{\H\backslash\H^t}\right)\norm{x^t-x_{\H}}^2 - 4n\mu\epsilon \mnorm{\H\backslash\H^t}\, \norm{x^t-x_{\H}}.
\end{align*}
As $\mnorm{\H} = n-f$ and $\mnorm{\H\backslash\H^t}\leq f$, the above implies that
\begin{align}
\begin{split}
\label{eqn:phi-t-pre-final}
    \phi_t & \geq\left(\gamma(n-f)-2\mu f\right)\norm{x^t-x_{\H}}^2-4n\mu\epsilon f\, \norm{x^t-x_{\H}} \\
        & = \left(\gamma(n-f)-2\mu f\right)\norm{x^t-x_{\H}}\, \left(\norm{x^t-x_{\H}}-\dfrac{4n\mu\epsilon f}{\gamma(n-f)-2\mu f} \right) \\
        & = n \gamma \, \left( 1 - \frac{f}{n} \left( 1 + \frac{2\mu}{\gamma}\right)\right) \norm{x^t-x_{\H}} \, \left(\norm{x^t-x_{\H}} - \frac{4 \mu f \, \epsilon}{\gamma\left( 1 - \frac{f}{n}\left(1 + \frac{2 \mu}{\gamma} \right)\right)} \right).
\end{split}
\end{align}
Recall from~\eqref{eqn:alpha} that
$$\alpha = 1-\dfrac{f}{n}\left(1+\dfrac{2\mu}{\gamma}\right). $$
Substituting from above in~\eqref{eqn:phi-t-pre-final} we obtain that
\begin{align}
    \phi_t \geq \alpha \, n \gamma \,\norm{x^t-x_{\H}}\, \left(\norm{x^t-x_{\H}}- \frac{4\mu f}{\alpha\gamma} \epsilon \right).
    \label{eqn:phi-t-final}
\end{align}
As it is assumed that $\alpha > 0$,~\eqref{eqn:phi-t-final} implies that for an arbitrary $\delta>0$,
\begin{align*}
    \phi_t \geq \alpha n \gamma \delta\left(\cfrac{4\mu f}{\alpha\gamma}\epsilon+\delta\right) > 0 ~ \text{ when } \norm{x^t-x_{\H}} \geq \cfrac{4\mu f}{\alpha\gamma}\epsilon+\delta.
\end{align*}
Hence, the proof.

\section{Appendix: Proof of Theorem~\ref{thm:conv-guarantee-alt}}
\label{appdx:conv-guarantee-2}

The result in Theorem~\ref{thm:conv-guarantee-alt} makes better use of the $2f$-redundancy property than the previous proof in Appendix~\ref{app:proof_1}. 
    The proof of Theorem~\ref{thm:conv-guarantee-alt} has a different second half from the proof in the Appendix~\ref{app:proof_1} after \eqref{eqn:cge_bnd_grd}. The remainder of the proof is replaced by the following analysis:\\

\hrule
~\\

Next, we show that for an arbitrary $\delta>0$, 
\begin{align}
    \phi_t\triangleq\iprod{x^t-x_{\H}}{\sum_{j=1}^{n-f}g_{i_j}^t} \geq \alpha n\gamma\delta\left(\dfrac{(1+2f)(n-2f)\mu}{ \alpha n \gamma}\epsilon+\delta\right)>0 ~ \text{ when } ~ \norm{x^t-x_{\H}} \geq \dfrac{(1+2f)(n-2f)\mu}{ \alpha n \gamma}\epsilon + \delta, \nonumber
\end{align}
where $\alpha$ is defined by \eqref{eqn:alpha-alt}.

Consider an arbitrary iteration $t$. Note that, as $\mnorm{\H} = n-f$, there are at least $n-2f$ agents that are common to both sets $\H$ and $\{i_1,...,i_{n-f}\}$. We let $\H^t$ be an arbitrary set such that $\H^t \subset \{i_1,...,i_{n-f}\} \cap \H$ with $\mnorm{\H^t}=n-2f$. The remaining set of agents $\B^t = \{i_1,...,i_{n-f}\} \setminus \H^t$ may contain up to $f$ Byzantine agents. Note that $\mnorm{\H^t} = n-2f $ and $\mnorm{\B^t} =f$. We have
\begin{equation}
    \phi_t=\iprod{x^t-x_{\H}}{\sum_{j\in\H^t}g_j^t}+\iprod{x^t-x_{\H}}{\sum_{k\in\B^t}g_k^t}.
    \label{eqn:phi-t-two-parts-alt}
\end{equation}
Consider the first term on the right-hand side of~\eqref{eqn:phi-t-two-parts-alt}. Note that
\begin{align*}
    \iprod{x^t-x_{\H}}{\sum_{j\in\H^t}g_j^t}&=\iprod{x^t-x_{\H}}{\sum_{j\in\H^t}g_j^t+\sum_{j\in\H\backslash\H^t}g_j^t-\sum_{j\in\H\backslash\H^t}g_j^t} \nonumber \\
    &=\iprod{x^t-x_{\H}}{\sum_{j\in\H}g_j^t}-\iprod{x^t-x_{\H}}{\sum_{j\in\H\backslash\H^t}g_j^t}.
\end{align*}
Recall that $g_j^t=\nabla Q_j(x^t)$, $\forall j\in\H$. Therefore,
\begin{align}
    \iprod{x^t-x_{\H}}{\sum_{j\in\H^t}g_j^t}=&\iprod{x^t-x_{\H}}{\sum_{j\in\H}\nabla Q_j(x^t)} -\iprod{x^t-x_{\H}}{\sum_{j\in\H\backslash\H^t}\nabla Q_j(x^t)}. \label{eqn:first_phi_1-alt}
\end{align}
~

\noindent Due to the strong convexity assumption (i.e., Assumption~\ref{assum:strongly-convex}), for all $x, \, y \in \W$,
\begin{equation}
    \iprod{x- y}{\nabla \sum_{j\in\H}Q_j(x)-\nabla\sum_{j\in\H} Q_j(y)} \geq \mnorm{\H}\, \gamma\norm{x-y}^2.
    \label{eqn:assum-strong-convex-restate-alt}
\end{equation}
By the projection on $\W$ in update rule \eqref{eqn:update}, $x^t\in\W$ for all $t$. By \eqref{eqn:x_H_W}, $x_{\H}$ is also in $\W$. As $x_{\H}$ is minimum point of $\sum_{j \in \H}Q_j(x)$, $\nabla \sum_{j \in \H} Q_j(x_{\H}) = 0$. Thus, by \eqref{eqn:assum-strong-convex-restate-alt} it follows that
\begin{align}
    \iprod{x^t-x_{\H}}{\sum_{j\in\H}\nabla Q_j(x^t)} & = \iprod{x^t-x_{\H}}{\nabla\sum_{j\in\H}Q_j(x^t)-\nabla\sum_{j\in\H}Q_j(x_{\H})} \geq \mnorm{\H} \, \gamma\norm{x^t-x_{\H}}^2.
    \label{eqn:inner-prod-h-alt}
\end{align}
Now, due to the Cauchy-Schwartz inequality and \eqref{eqn:lipschitz-distance-S1-complement} in Lemma~\ref{lemma:bounding-gradients-at-xH}, with $\H\backslash\H^t\subset\H$ and $\mnorm{\H\backslash\H^t}=f$, we have
\begin{align}
    \iprod{x^t-x_{\H}}{\sum_{j\in\H\backslash\H^t}\nabla Q_j(x^t)}& \leq\norm{x^t-x_{\H}}\cdot\norm{\sum_{j\in\H\backslash\H^t}\nabla Q_j(x^t)}\leq\norm{x^t-x_{\H}}\cdot(n-2f)\mu\epsilon.
    \label{eqn:inner-prod-h-ht-1-alt}
\end{align}
Substituting from~\eqref{eqn:inner-prod-h-alt} and~\eqref{eqn:inner-prod-h-ht-1-alt} in~\eqref{eqn:first_phi_1-alt} we obtain that
\begin{equation}
    \iprod{x^t-x_{\H}}{\sum_{j\in\H^t}g_j^t} \geq \gamma\mnorm{\H}\, \norm{x^t-x_{\H}}^2-(n-2f)\mu\epsilon\norm{x^t-x_{\H}}.
    \label{eqn:phi-t-first-part-alt}
\end{equation}
~

Next, we consider the second term on the right-hand side of~\eqref{eqn:phi-t-two-parts-alt}. From the Cauchy-Schwartz inequality, 
\begin{equation*}
    \iprod{x^t-x_{\H}}{g_k^t}\geq-\norm{x^t-x_{\H}}\, \norm{g_k^t}.
\end{equation*}
Substituting from~\eqref{eqn:phi-t-first-part-alt} and above in~\eqref{eqn:phi-t-two-parts-alt} we obtain that
\begin{equation}
    \phi_t\geq\gamma\mnorm{\H}\, \norm{x^t-x_{\H}}^2-(n-2f)\mu\epsilon\norm{x^t-x_{\H}}-\sum_{k\in\B^t}\norm{x^t-x_{\H}}\, \norm{g_k^t}.
    \label{eqn:phi-t-2-alt}
\end{equation}

Recall the definition of $\B^t$ and $\H^t$. We have $\mnorm{\B^t} = n-f-\mnorm{\H^t}=f$, and $\mnorm{\H\backslash\H^t} = n- f - \mnorm{\H^t}=f$. Also, due to the sorting of the gradients, for each $k \in \B^t$, there are two cases: 
\begin{enumerate}[label=(\roman*)]
    \item $k\in\B^t\cap(\H\backslash\H^t)$, therefore
    \begin{align}
        \norm{g_k^t} = \norm{\nabla Q_k(x^t)}.
    \end{align}
    \item $k\in \B^t\backslash(\H\backslash\H^t)$. Note that $\mnorm{\B^t\backslash(\H\backslash\H^t)}=\mnorm{(\H\backslash\H^t)\backslash\B^t}$. Therefore, there exists a distinct $j\in(\H\backslash\H^t)\backslash\B^t$ for each $k$ in case (ii) that 
    \begin{equation}
        \norm{g_k^t}\leq\norm{g_j^t}=\norm{\nabla Q_j(x^t)}. \label{eqn:k_B_j_H-0-alt}
    \end{equation}
\end{enumerate}
Combining the two cases, each $k\in\B^t$ can always be mapped to a different $j\in\H\backslash\H^t$, such that
\begin{equation}
    \norm{g_k^t}\leq\norm{g_j^t}=\norm{\nabla Q_j(x^t)}. \label{eqn:k_B_j_H-alt}
\end{equation}
Therefore,~\eqref{eqn:k_B_j_H-alt} implies that
\begin{align*}
    \sum_{k \in \B^t} \norm{g_k^t} \leq \sum_{j \in \H\backslash\H^t} \norm{\nabla Q_j(x^t)}.
\end{align*}
Substituting from above in~\eqref{eqn:phi-t-2-alt}, we obtain that
\begin{align*}
    \phi_t&\geq\gamma\mnorm{\H}\, \norm{x^t-x_{\H}}^2-(n-2f)\mu\epsilon\norm{x^t-x_{\H}}-\sum_{j\in\H\backslash\H^t}\norm{x^t-x_{\H}}\, \norm{\nabla Q_j(x^t)}.
\end{align*}
Substituting from the first half of \eqref{eqn:honest-bound-everywhere}, i.e., $\norm{\nabla Q_i(x)}\leq2(n-2f)\mu\epsilon+\mu\norm{x-x_{\H}}$, above, we obtain that
\begin{align*}
    \phi_t&\geq\gamma\mnorm{\H}\, \norm{x^t-x_{\H}}^2 - (n-2f)\mu\epsilon\norm{x^t-x_{\H}} - \mnorm{\H\backslash\H^t}\, \norm{x^t-x_{\H}}\, (2(n-2f)\mu\epsilon+\mu\norm{x^t-x_{\H}}) \nonumber\\
    & \geq \left(\gamma\mnorm{\H}-\mu\mnorm{\H\backslash\H^t}\right)\norm{x^t-x_{\H}}^2 - (n-2f)\mu\epsilon\norm{x^t-x_{\H}}- 2(n-2f)\mu\epsilon \mnorm{\H\backslash\H^t}\, \norm{x^t-x_{\H}}.
\end{align*}
As $\mnorm{\H} = n-f$ and $\mnorm{\H\backslash\H^t}= f$, the above implies that
\begin{align}
\begin{split}
\label{eqn:phi-t-pre-final-alt}
    \phi_t & \geq\left(\gamma(n-f)-\mu f\right)\norm{x^t-x_{\H}}^2-(n-2f)\mu\epsilon\norm{x^t-x_{\H}}-2(n-2f)\mu\epsilon f\, \norm{x^t-x_{\H}} \\
        & = \left(\gamma(n-f)-\mu f\right)\norm{x^t-x_{\H}}\, \left(\norm{x^t-x_{\H}}-\dfrac{(1+2f)(n-2f)\mu\epsilon}{\gamma(n-f)-\mu f} \right) \\
        & = n \gamma \, \left( 1 - \frac{f}{n} \left( 1 + \frac{\mu}{\gamma}\right)\right) \norm{x^t-x_{\H}} \, \left(\norm{x^t-x_{\H}} - \frac{(1+2f)(n-2f)\mu\epsilon}{n\gamma\left( 1 - \frac{f}{n}\left(1 + \frac{ \mu}{\gamma} \right)\right)} \right).
\end{split}
\end{align}
Recall from~\eqref{eqn:alpha-alt} that
$$\alpha = 1-\dfrac{f}{n}\left(1+\dfrac{\mu}{\gamma}\right). $$
Substituting from above in~\eqref{eqn:phi-t-pre-final-alt} we obtain that
\begin{align}
    \phi_t \geq \alpha \, n \gamma \,\norm{x^t-x_{\H}}\, \left(\norm{x^t-x_{\H}}- \frac{(1+2f)(n-2f)\mu}{\alpha n\gamma} \epsilon \right).
    \label{eqn:phi-t-final-alt}
\end{align}
As it is assumed that $\alpha > 0$,~\eqref{eqn:phi-t-final-alt} implies that for an arbitrary $\delta>0$,
\begin{align*}
    \phi_t \geq \alpha n \gamma \delta\left(\cfrac{(1+2f)(n-2f)\mu}{\alpha n\gamma}\epsilon+\delta\right) > 0 ~ \text{ when } \norm{x^t-x_{\H}} \geq \cfrac{(1+2f)(n-2f)\mu}{\alpha n\gamma}\epsilon+\delta.
\end{align*}
Hence, the proof.

\section{Appendix: Proof of Theorem~\ref{thm:conv-guarantee_cwtm}}
\label{app:proof_2}

In this section we present the proof of Theorem~\ref{thm:conv-guarantee_cwtm}.
Throughout this section we assume $f > 0$ to ignore the trivial case of $f = 0$. The proof closely follows that of Theorem~\ref{thm:conv-guarantee}, and we may borrow some notation and results directly from Appendix~\ref{app:proof_1}.\\

Recall from~\eqref{eqn:cwtm_gf} that for CWTM gradient-filter, for all $l \in \{1, \ldots, \, d\}$,
\begin{align}
    \gf\left(g^t_1, \ldots, \, g^t_n \right)[l] = \frac{1}{n-2f} \, \sum_{j=f+1}^{n-f}g_{i_j[l]}^t [l], \quad \forall t. \label{eqn:gf_cwtm_proof}
\end{align}
~

First, we show that $\sum_{j=f+1}^{n-f}g_{i_j[l]}^t [l]$ is finite for all $l$ and $t$. From~\eqref{eqn:honest-bound-everywhere} in Appendix~\ref{app:proof_1}, we know that if the $(2f, \, \epsilon)$-redundancy property and Assumption~\ref{assum:lipschitz} hold true then for each non-faulty agent $i\in\H$, 
\begin{equation}
    \norm{\nabla Q_i(x)} \leq 2n\mu\epsilon+\mu\norm{x-x_{\H}}. \label{eqn:pf_2_lip}
\end{equation}
The above implies that for all $i \in \H$, $l \in \{1, \ldots, \, d\}$ and $x$,
\begin{equation}
    \mnorm{\nabla Q_i(x)[l]} \leq 2n\mu\epsilon+\mu\norm{x-x_{\H}}.
    \label{eqn:honest-bound-everywhere_l}
\end{equation}
Recall that for all $l$ and $t$, 
\begin{equation*}
    g_{i_1[l]}^t[l] \leq \ldots \leq g_{i_{f + 1}[l]}^t[l] \leq \ldots \leq g_{i_{n-f}[l]}^t[l] \leq  \ldots \leq g_{i_n[l]}^t[l].
\end{equation*}
As there are at most $f$ Byzantine agents and $\mnorm{\H} = n-f$, for all $l$ and $t$ there exists a pair of non-faulty agents $\sigma^1_t[l], \,  \sigma^2_t[l] \in \H$ such that
\begin{equation}
    g_{i_{n-f}[l]}^t[l] \leq g_{i_{\sigma^1_t[l]}}^t[l], \text{ and } g_{i_{f+1}[l]}^t[l] \geq g_{i_{\sigma^2_t[l]}}^t[l].
    \label{eqn:honest-bound_l}
\end{equation}
As $g_j^t=\nabla Q_j(x^t)$ for all $j\in\H$, from~\eqref{eqn:honest-bound_l} we obtain that for all $j \in \{f+1, \ldots, n-f\}$, $l$ and $t$,
\begin{equation*}
    \mnorm{g_{i_j[l]}^t[l]} \leq \max \left\{ \mnorm{\nabla Q_{\sigma^1_t[l]}(x^t)[l]}, \, \mnorm{\nabla Q_{\sigma^1_t[l]}(x^t)[l]} \right\}.
\end{equation*}
Substituting from~\eqref{eqn:honest-bound-everywhere_l} above we obtain that for all $j \in \{f+1, \ldots, n-f\}$, $l$ and $t$,
\begin{equation*}
    \mnorm{g_{i_j[l]}^t[l]} \leq  2n\mu\epsilon+\mu\norm{x^t-x_{\H}}.
\end{equation*}
Therefore, owing to the triangle inequality,
\begin{equation}
    \mnorm{\sum_{j=f+1}^{n-f}g_{i_j[l]}^t [l]} \leq \sum_{j=f+1}^{n-f}\mnorm{g_{i_j[l]}^t [l]} \leq (n-2f)\left(2n\mu\epsilon+\mu\norm{x^t-x_{\H}}\right). \label{eqn:filtered-upperbound_l}
\end{equation}
Let $\Gamma = \max_{x \in \W} \norm{x - x_{\H}}$. As $\W$ is a compact set, $\Gamma < \infty$. Recall from the update rule~\eqref{eqn:update} that $x^t \in \W$ for all $t$. Thus, $\norm{x^t - x_{\H}} \leq \max_{x \in \W} \norm{x - x_{\H}} = \Gamma < \infty$. Substituting this in~\eqref{eqn:filtered-upperbound_l} implies that for all $l \in \{1, \ldots, \, d\}$,
\begin{equation}
    \mnorm{\sum_{j=f+1}^{n-f}g_{i_j[l]}^t [l]} \leq (n-2f) \left( 2n \mu \epsilon + \mu \Gamma\right) < \infty. 
    \label{eqn:cwtm_bnd_per_coordinate}
\end{equation}
Combining \eqref{eqn:cwtm_bnd_per_coordinate} with~\eqref{eqn:gf_cwtm_proof}, by the definition of Euclidean norm, we obtain that 
\begin{align}
    \norm{\gf \left(g^t_1, \ldots, \, g^t_n \right)}&= \left(\mnorm{{\gf \left(g^t_1, \ldots, \, g^t_n \right)[1]}}^2 + ... + \mnorm{\gf \left(g^t_1, \ldots, \, g^t_n \right)[d]^2}\right)^{1/2} \nonumber \\
    &=\left(\mnorm{\frac{1}{n-2f}\sum_{j=f+1}^{n-f}g_{i_j[1]}^t [1]}^2 + ... + \mnorm{\frac{1}{n-2f}\sum_{j=f+1}^{n-f}g_{i_j[d]}^t [d]}^2\right)^{1/2} \nonumber \\
    &=\left(\frac{1}{(n-2f)^2}\mnorm{\sum_{j=f+1}^{n-f}g_{i_j[1]}^t [1]}^2 + ... + \frac{1}{(n-2f)^2}\mnorm{\sum_{j=f+1}^{n-f}g_{i_j[d]}^t [d]}^2\right)^{1/2} \nonumber \\
    &\leq\left(\underbrace{\left( 2n \mu \epsilon + \mu \Gamma\right)^2+...+\left( 2n \mu \epsilon + \mu \Gamma\right)^2}_{d\textrm{ times}}\right)^{1/2}< \infty, \quad \forall t. \label{eqn:cwtm_bnd_grd}
\end{align}
~

Now, consider an arbitrary iteration $t$ and $l \in \{1, \ldots, \, d\}$. From prior works on CWTM gradient-filter for the scalar case~\cite{su2016fault}, i.e., when $d = 1$, we know that trimmed mean of the $l$-th elements of the gradients lies in the convex hull of $l$-th elements of the non-faulty agents' gradients in set $\H$. Specifically,
\begin{align}
    \min_{i \in \H}g^t_{i}[l] \leq \gf\left(g^t_1, \ldots, \, g^t_n \right)[l] \leq \max_{i \in \H} g^t_{i}[l]. \label{eqn:cwtm_bnd_1}
\end{align}
Obviously,
\begin{align}
    \min_{i \in \H}g^t_{i}[l] \leq  \frac{1}{\mnorm{\H}} \sum_{i \in \H} g^t_{i}[l] \leq \max_{i \in \H} g^t_{i}[l]. \label{eqn:cwtm_bnd_2}
\end{align}
Therefore, from~\eqref{eqn:cwtm_bnd_1} and~\eqref{eqn:cwtm_bnd_2} we obtain that
\begin{align*}
    \mnorm{\gf\left(g^t_1, \ldots, \, g^t_n \right)[l] - \frac{1}{\mnorm{\H}} \sum_{i \in \H} g^t_{i}[l]} \leq \max_{i \in \H} g^t_{i}[l] - \min_{i \in \H} g^t_i[l]. 
\end{align*}
As $\max_{i \in \H} g^t_{i}[l] - \min_{i \in \H} g^t_i[l] = \max_{i, \, j \in \H} \mnorm{g^t_{i}[l] - g^t_{j}[l]}$, and $g^t_i = \nabla Q_i(x^t)$ for all $i \in \H$, the above can be re-written as follows.
\begin{align}
    & \mnorm{\gf\left(g^t_1, \ldots, \, g^t_n \right)[l] - \frac{1}{\mnorm{\H}} \sum_{i \in \H} \nabla Q_i(x^t)[l]} 
        \leq \max_{i, \, j \in \H} \mnorm{\nabla Q_i(x^t)[l] - \nabla Q_j(x^t)[l]}.  \label{eqn:cwtm-avg}
\end{align}
Note that for any two $i, \, j \in \H$,
\begin{align}
    \mnorm{\nabla Q_i(x^t)[l] - \nabla Q_j(x^t)[l]} \leq \norm{\nabla Q_i(x^t) - \nabla Q_j(x^t)}. \label{eqn:sep_bnd_1}
\end{align}
Substituting from Assumption~\ref{assum:corr_grad}, 
\[\norm{\nabla Q_i(x) - \nabla Q_j(x)} \leq \lambda \max \left\{\norm{\nabla Q_i(x)}, \norm{\nabla Q_j(x)} \right\},\] in~\eqref{eqn:sep_bnd_1} we obtain that
\begin{align}
    \mnorm{\nabla Q_i(x^t)[l] - \nabla Q_j(x^t)[l]} \leq \lambda \max \left\{\norm{\nabla Q_i(x^t)}, \, \norm{\nabla Q_j(x^t)} \right\}. \label{eqn:sep_bnd_2}
\end{align}
Substituting from~\eqref{eqn:pf_2_lip} above we obtain that
\begin{align}
    \mnorm{\nabla Q_i(x^t)[l] - \nabla Q_j(x^t)[l]} \leq \lambda \left( 2n\mu \, \epsilon+\mu\norm{x^t-x_{\H}} \right). \label{eqn:sep_bnd_3}
\end{align}
Finally, substituting from~\eqref{eqn:sep_bnd_3} in~\eqref{eqn:cwtm-avg} we obtain that, for all $l$,
\begin{align*}
    & \mnorm{\gf\left(g^t_1, \ldots, \, g^t_n \right)[l] - \frac{1}{\mnorm{\H}} \sum_{i \in \H} \nabla Q_i(x^t)[l]} 
    \leq \lambda \left( 2n\mu \, \epsilon+\mu\norm{x^t-x_{\H}} \right). 
\end{align*}
As $\norm{x} = \sqrt{\sum_{l = 1}^d \mnorm{x[l]}^2}$ for $x \in \R^d$, the above implies that
\begin{align}
    & \norm{\gf\left(g^t_1, \ldots, \, g^t_n \right) - \frac{1}{\mnorm{\H}} \sum_{i \in \H} \nabla Q_i(x^t)}  \leq \sqrt{d} \lambda \left( 2n\mu \, \epsilon+\mu\norm{x^t-x_{\H}} \right). \label{eqn:cwtm-avg-2}
\end{align}
Now, note that
\begin{align}
     & \gf\left(g^t_1, \ldots, \, g^t_n \right) = \frac{1}{\mnorm{\H}} \sum_{i \in \H} \nabla Q_i(x^t)
     + \left( \gf\left(g^t_1, \ldots, \, g^t_n \right) - \frac{1}{\mnorm{\H}} \sum_{i \in \H} \nabla Q_i(x^t) \right).  \label{eqn:cwtm-cwtm-avg}
\end{align}
Recall from Theorem~\ref{thm:upper-bound-D} that $\phi_t$, for each $t$, is defined to be
\begin{align*}
    \phi_t = \iprod{x^t-x_{\H}}{\gf\left(g^t_1, \ldots, \, g^t_n \right)}.
\end{align*}
Substituting from~\eqref{eqn:cwtm-cwtm-avg} above we obtain that
\begin{align}
    \phi_t & = \iprod{x^t - x_{\H}}{\frac{1}{\mnorm{\H}} \sum_{i \in \H} \nabla Q_i(x^t)} +  \iprod{x^t - x_{\H}}{\gf\left(g^t_1, \ldots, \, g^t_n \right) - \frac{1}{\mnorm{\H}} \sum_{i \in \H} \nabla Q_i(x^t)}. \label{eqn:phi_t_cwtm_1}
\end{align}
~

Recall from Assumption~\ref{assum:strongly-convex} that $Q_{\H}(x) = (1/\mnorm{\H})\sum_{i \in \H}Q_i(x)$. Thus, the first term on the right-hand side of~\eqref{eqn:phi_t_cwtm_1}, 
\begin{align*}
    \iprod{x^t - x_{\H}}{\frac{1}{\mnorm{\H}} \sum_{i \in \H} \nabla Q_i(x^t)} = \iprod{x^t - x_{\H}}{\nabla Q_{\H}(x^t)}.
\end{align*}
Substituting from the Assumption~\ref{assum:strongly-convex} above, and recalling that $\nabla Q_{\H}(x_{\H}) = 0$, we obtain that
\begin{align}
    \iprod{x^t - x_{\H}}{\frac{1}{\mnorm{\H}} \sum_{i \in \H} \nabla Q_i(x^t)}  \geq \gamma \, \norm{x^t - x_{\H}}^2. \label{eqn:first_phi_cwtm}
\end{align}
Next, we consider the second term on the right-hand side of~\eqref{eqn:phi_t_cwtm_1}. From Cauchy-Schwartz inequality,
\begin{align*}
    & \iprod{x^t - x_{\H}}{\gf\left(g^t_1, \ldots, \, g^t_n \right) - \frac{1}{\mnorm{\H}} \sum_{i \in \H} \nabla Q_i(x^t)}  \nonumber \\
    \geq& - \norm{x^t - x_{\H}} \norm{\gf\left(g^t_1, \ldots, \, g^t_n \right) - \frac{1}{\mnorm{\H}} \sum_{i \in \H} \nabla Q_i(x^t)}.
\end{align*}
Substituting from~\eqref{eqn:cwtm-avg-2} above we obtain that
\begin{align}
    & \iprod{x^t - x_{\H}}{\gf\left(g^t_1, \ldots, \, g^t_n \right) - \frac{1}{\mnorm{\H}} \sum_{i \in \H} \nabla Q_i(x^t)} \nonumber \\
    & \geq - \sqrt{d} \lambda \, \norm{x^t - x_{\H}}  \left( 2n\mu \, \epsilon+\mu\norm{x^t-x_{\H}} \right). \label{eqn:second_phi_cwtm}
\end{align}
Substituting from~\eqref{eqn:first_phi_cwtm} and~\eqref{eqn:second_phi_cwtm} in~\eqref{eqn:phi_t_cwtm_1} we obtain that
\begin{align}
\begin{split}
    \phi_t & \geq \gamma \, \norm{x^t - x_{\H}}^2 - \sqrt{d} \lambda \, \norm{x^t - x_{\H}}  \left( 2n\mu \, \epsilon+\mu\norm{x^t-x_{\H}} \right) \label{eqn:phi_t_cwtm_2} \\
    & = \left(\gamma -  \sqrt{d} \lambda \mu \right) \norm{x^t - x_{\H}} \left( \norm{x^t - x_{\H}} - \frac{2 \sqrt{d} n \mu \lambda}{(\gamma - \sqrt{d} \mu \lambda)} \, \epsilon \right).
\end{split}
\end{align}
Therefore, if, for an arbitrary $\delta > 0$, 
\begin{align*}
    \norm{x^t - x_{\H}} \geq \frac{2 \sqrt{d} n \mu \lambda}{(\gamma - \sqrt{d} \mu \lambda)} \, \epsilon  + \delta
\end{align*}
then
\begin{align*}
    \phi_t &\geq \left(\gamma -  \sqrt{d} \lambda \mu \right) \left( \frac{2 \sqrt{d} n \mu \lambda}{(\gamma - \sqrt{d} \mu \lambda)} \, \epsilon  + \delta \right) \, \delta = \left( 2 \sqrt{d} n \mu \lambda \, \epsilon +  \left(\gamma -  \sqrt{d} \lambda \mu \right) \, \delta \right) \, \delta.
\end{align*}
Hence, the proof.
\section{Details on the Numerical Experiments}
\label{apdx:experiments}

In this section, we present the details of the simulation results to empirically compare the approximate fault-tolerance achieved by the aforementioned gradient-filters; CGE and CWTM, extending Section~\ref{sec:experiments}. For the simulation, we consider the problem of distributed linear regression, which is a special distributed optimization problem with quadratic cost functions~\cite{gupta2019byzantine}. 

\subsection{Problem description}
We consider a synchronous server-based system, as shown in Figure~\ref{fig:sys}, wherein $n=6$, $d=2$, and $f=1$. Each agent $i\in \{1, \, \ldots, \, n\}$ has a data point represented by a triplet $(A_i, \, B_i, \, N_i)$ where $A_i$ is a $d$-dimensional row vector, $B_i \in \R$ as the response, and a noise value $N_i \in \R$. Specifically, for all $i \in \{1, \, \ldots, \, n\}$, 
\begin{equation}
    B_i=A_ix^*+N_i ~ \text{ where } ~x^*=\begin{pmatrix}1\\1\end{pmatrix}.
\end{equation}
The collective data is represented by a triplet of matrices $(A, \, B, \, N)$ where the $i$-th row of $A$, $B$, and $N$ are equal to $A_i$, $B_i$ and $N_i$, respectively. The specific values are as follows.
\begin{equation}
    A=\begin{pmatrix}
        1 & 0 \\
        0.8 & 0.5 \\
        0.5 & 0.8 \\
        0 & 1 \\
        -0.5 & 0.8 \\
        -0.8 & 0.5 
    \end{pmatrix}, ~ 
    B=\begin{pmatrix}
        0.9108 \\  1.3349 \\ 1.3376 \\ 1.0033 \\ 0.2142 \\ -0.3615
    \end{pmatrix}, ~ \text{ and }
    N=\begin{pmatrix}
        -0.0892 \\0.0349 \\0.0376\\0.0033\\-0.0858\\-0.0615
    \end{pmatrix}.
\end{equation}
It should be noted that
\begin{align}
    B = A x^{*} + N. \label{eqn:set_equations}
\end{align}
We let $A_S$, $B_S$ and $N_S$ represent matrices of dimensions $\mnorm{S} \times 2$, $\mnorm{S} \times 1$ and $\mnorm{S} \times 1$ obtained by stacking the rows $\{A_i, \, i \in S \}$, $\{B_i, \, i \in S \}$ and $\{N_i, \, i \in S\}$, respectively, in the increasing order. From~\eqref{eqn:set_equations}, observe that for every non-empty set $S$,
\begin{align}
    B_S = A_S x^{*} + N_S. \label{eqn:set_equations-2}
\end{align}
Recall from basic linear algebra that if $A_S$ is full-column rank, i.e., $\rank{A_S} = d = 2$ then $x^*$ is the unique solution of the set of equations in~\eqref{eqn:set_equations-2}. Note that for every set $S$ with $\mnorm{S} \geq n-2f = 6 - 2 = 4$, the matrix $A_S$ is full rank. Specifically,
\begin{align}
    \rank{A_S} = d = 2, \quad \forall S \subseteq \{1, \ldots, \, 6\}, ~ \mnorm{S} \geq 4. \label{eqn:exp_full_rank}
\end{align}
~

In this particular distributed optimization problem, each agent $i$ has a quadratic cost function defined to be 
\[Q_i(x)=(B_i-A_ix)^2, \quad \forall x \in \R^2.\] 
For an arbitrary non-empty set of agents $S$, we define 
\begin{align}
Q_S(x) = \sum_{i\in S}Q_i(x) = \sum_{i \in S}\left(B_i-A_ix \right)^2 = \norm{B_S-A_Sx}^2, \quad \forall x \in \R^2. \label{def:exp_Q_S}
\end{align}
As matrix $A_S$ is full rank for every $S$ with $\mnorm{S} \geq 4$, to minimize $Q_S(x)$ we have \cite{rencher2002methods}
\begin{align}
    \arg \min_{x \in \R^2} Q_S(x) = \arg \min_{x \in \R^2} \norm{B_S-A_Sx}^2 = \left(A_S^TA_S\right)^{-1}A_S^TB_S, \label{eqn:solve_Q_S}
\end{align}
where $A_T$ is the transpose of the matrix $A$.
Therefore, $Q_S(x)$ has a unique minimum point when $\mnorm{S} \geq 4$. Henceforth, we write notation $\arg \min_{x \in \R^2}$ simply as $\arg \min$, unless otherwise stated.

\subsection{Simulations}
\label{sub:simulations}
\comment{(113) and similar places below are updated. The calculations are (weirdly) correct, but only the descriptions here were wrong.\\Figures 2 and 3 are updated.\\TODO: update descriptions and interpretations of Fig. 2 and 3}
Due to the rank condition~\eqref{eqn:exp_full_rank}, the agents' cost functions satisfy the {\em $(2f, \, \epsilon)$-redundancy} property, stated in Definition~\ref{def:approx_red}, with $\epsilon = 0.0890$. The steps for computing $\epsilon$ are described below.
\begin{enumerate}
\setlength{\itemsep}{0.3em}
    \item For each set $S \subset\{1, \ldots, \, 6\}$ with $\mnorm{S} = n-f = 5$, compute $x_S=\left(A_S^TA_S\right)^{-1}A_S^TB_S$. Note that by~\eqref{eqn:solve_Q_S}, $x_S = \arg \min Q_S(x)$.
    
    \item For each set $S \subset\{1, \ldots, \, 6\}$ with $\mnorm{S} = n-f = 5$ do the following: 
    
    \begin{enumerate}
        \item For each set $\widehat{S} \subseteq S$ with $\mnorm{\widehat{S}} \geq n-2f = 4$, compute $x_{\widehat{S}}=\left(A_{\widehat{S}}^TA_{\widehat{S}}\right)^{-1}A_{\widehat{S}}^TB_{\widehat{S}}$. Note that by~\eqref{eqn:solve_Q_S}, $x_{\widehat{S}} = \arg \min Q_{\widehat{S}}(x)$.
        
        \item Compute 
        \[\epsilon_S=\max\limits_{\widehat{S}\subseteq S,\,\mnorm{\widehat{S}}\geq 4} \norm{x_S - x_{\widehat{S}}}.\]
        In this particular case, both the sets of minimum points $\arg \min Q_S(x)$ and $\arg \min Q_{\widehat{S}}(x)$ are singleton with points $x_S$ and $x_{\widehat{S}}$, respectively. Therefore, 
        \[\norm{x_S - x_{\widehat{S}}} = \dist{\arg \min Q_S(x)}{ \arg \min Q_{\widehat{S}}(x)}.\]
    \end{enumerate}
    
    \item In the final step, we compute 
    \[\epsilon=\max\limits_{\mnorm{S}=n-f}\epsilon_S.\]
\end{enumerate}
~

For each agent $i$, its cost function $Q_i(x)$ has Lipschitz continuous gradients, i.e., satisfy Assumption~\ref{assum:lipschitz}, with Lipschitz coefficient
\begin{align}
    \mu = \overline{v}_i \label{eqn:exp_lip}
\end{align}
where $\overline{v}_i$ denotes the largest eigenvalue of $A_i^TA_i$. Also, for every set of agents $S$ with $\mnorm{S} = n-f = 5$, their average cost function $(1/\mnorm{S}) Q_S(x)$ is strongly convex, i.e., satisfy Assumption~\ref{assum:strongly-convex}, with the strong convexity coefficient
\begin{align}
    \gamma = \frac{1}{\mnorm{S}}\underline{v}_S \label{eqn:exp_str}
\end{align}
where $\underline{v}_S$ is the smallest eigenvalue of $A_S^TA_S$. Derivations of~\eqref{eqn:exp_lip} and~\eqref{eqn:exp_str} can found in \cite[Section 10]{gupta2019byzantine}.


\subsection{Simulation}
In our experiments, we simulate the following fault behaviors for the faulty agent.
\begin{itemize}
\setlength{\itemsep}{0.3em}
    \item \textit{gradient-reverse}: the faulty agent \textit{reverses} its true gradient. Suppose the correct gradient of a faulty agent $i$ at iteration $t$ is $s_i^t$, the agent $i$ will send the incorrect gradient $g_i^t=-s_i^t$ to the server.
    \item \textit{random}: the faulty agent sends a randomly chosen vector in $\mathbb{R}^d$. In our experiments, the faulty agent in each iteration chooses i.i.d. Gaussian random vector with mean 0 and a isotropic covariance matrix with standard deviation of 200.
\end{itemize}

We simulate the distributed gradient-descent algorithm described in Section~\ref{sub:steps} by assuming agent $1$ to be Byzantine faulty. It should be noted that the identity of the faulty agent is not used in any way during the simulations. Here, the set of non-faulty agents is $\H = \{2, \ldots, \, 6\}$ and $\mnorm{\H} = n-f = 5$. Therefore, in this particular case $\H$ is the only set of $n-f$ non-faulty agents. From~\eqref{eqn:solve_Q_S}, we obtain that the minimum point of the aggregate cost function $\sum_{i \in \H}Q_i(x)$, denoted by $x_{\H}$, is equal to the solution of the following set of linear equations: 
$$B_\H=A_\H x_\H.$$
Specifically, $x_\H=\begin{pmatrix}1.0780\\ 0.9825\end{pmatrix}$. Also, note from our earlier deductions in~\eqref{eqn:exp_lip} and~\eqref{eqn:exp_str} that in this particular case, the non-faulty agents' cost functions satisfy Assumptions~\ref{assum:lipschitz} and~\ref{assum:strongly-convex} with $\mu=1$ and $\gamma=0.356$, respectively.\\

{\bf Parameters:} We use the following parameters for implementing the algorithm. In the update rule~\eqref{eqn:update}, we use step-size $\eta_t=1.5/(t+1)$ for iteration $t = 0, \, 1, \ldots$. Note that this particular step-size is diminishing and satisfies the conditions: $\sum_{t=0}^\infty\eta_t=\infty$ and $\sum_{t=0}^\infty\eta_t^2=3\pi^2/8<\infty$ (see~\cite{rudin1964principles}). We assume the convex compact $\W \subset \R^d$ to be a 2-dimensional hypercube $[-1000,1000]^2$. Note that $x_{\H} \in \W$, i.e., Assumption~\ref{assum:compact} holds true. In all the simulation results presented below, the initial estimate $x^0 = (0,0)^T$. \\

In every execution, we observe that the iterative estimates produced by the algorithm practically converge after $400$ iterations. Thus, to measure the approximate fault-tolerance achieved by the different gradient-filter, i.e., CGE and CWTM, we define the output of the algorithm to be $x_{\textrm{out}} = x^{500}$. The outputs for the two gradient-filters, under different faulty behaviors, are shown in Table~\ref{tab:results}. Note that $\dist{x_\H}{x_{\mathrm{out}}}=\norm{x_\H-x_{\mathrm{out}}}$. The results for the case when the faulty agent sends {\em random} faulty gradients are only shown for a randomly chosen execution. \\

{\bf Conclusion:} As shown in Table~\ref{tab:results}, in all executions, the distances between $x_\H$ the output of the algorithm $x_{\textrm{out}}$ in case of both CGE and CWTM gradient-filters are smaller than $\epsilon$. For the said executions, we plot in Figure~\ref{fig:fault-comparison} the values of the aggregate cost function $\sum_{i \in \H}Q_i(x^t)$ (referred as {\em loss}) and the approximation error $\norm{x^t - x_{\H}}$ (referred as {\em distance}) for iteration $t$ ranging from $0$ to $1500$. We also show the plots of the fault-free distributed gradient-descent (DGD) method where the faulty agent is omitted, 
and the DGD method without any gradient-filter when agent $1$ is Byzantine faulty. The details for iteration $t$ ranging from 0 to 80 are also highlighted in Figure~\ref{fig:fault-comparison-detail}. 
\section{Empirical results on machine learning tasks}
\label{appdx:exp-learning}

In this section, we present some experiment results applying our algorithm with DGD proposed in Section~\ref{sub:steps} on distributed machine learning tasks where in the system there are Byzantine faulty agents. The applicability of our algorithm with to distributed learning problems is discussed in Section~\ref{sub:app}. Furthermore, our algorithm can be adapted to use distributed stochastic gradient descent (D-SGD) instead of DGD. Suppose each agent $i$ samples $b$ data points in each iteration $t$ from its local data generating distribution $\mathcal{D}_i$, and let those data points be $\boldsymbol{z}_i^t=\{z_{i_1}^t,...,z_{i_b}^t\}$. Each agent computes its stochastic gradient $g_i^t$ by
\[g_i^t=\frac{1}{b}\sum_{z\in\boldsymbol{z}_i^t}\nabla \ell(x^t; z),\]
and sends it to the server. Thus, $g_i^t$ is expected to be the gradient of cost function $Q_i(x)$ at $x^t$, since
\[\mathbb{E}_{\boldsymbol{z}_i^t}\frac{1}{b}\sum_{z\in\boldsymbol{z}_i^t}\ell (x; z)=\frac{1}{b}\sum_{z\in\boldsymbol{z}_i^t}\mathbb{E}_{z\sim\mathcal{D}_i} \ell (x; z)=\frac{1}{b}\sum_{z\in\boldsymbol{z}_i^t}Q_i(x)=Q_i(x).\]
Therefore, the correctness of our algorithm holds in expectation. \\

It is worth noting that to distributed machine learning problems, it is difficult to compute the exact approximation parameters $\epsilon$, or to ensure the cost functions hold the convexity and smoothness assumptions (i.e., to compute the value of $\mu$ and $\gamma$), as we did in Section~\ref{sec:experiments} or Appendix~\ref{apdx:experiments}. However, the algorithm and our theoretical analysis are still worth considering, since prior research has pointed out that cost functions of many machine learning problems are strongly-convex in the neighborhood of local minimizers \cite{bottou2018optimization}. We conduct these experiments to show empirically the broader applicability of our algorithm, and that the redundancy in cost functions we discussed in this paper indeed exists in real-world scenarios.\\

In our experiments, we simulate a multi-agent distributed learning system in the server-based architecture using multiple threads. There is one thread representing the server, and others each representing one agent. The inter-thread communication is handled via a message passing interface (MPI). The simulator is built in Python using PyTorch \cite{paszke2019pytorch} and MPI4py \cite{dalcin2011parallel}. The simulator is deployed on a Google Cloud Platform cluster with 14 vCPUs and 100 GB memory. \\

We conduct our experiments on two benchmark datasets, MNIST \cite{bottou1998online} and Fashion-MNIST \cite{xiao2017fashion}. Both datasets are monochrome image-classification tasks, and comprise of 60.000 training and 10,000 testing data points, with each image of size $28\times28$. The training and testing data points in each dataset are evenly divided into 10 non-overlapping classes. For each dataset, we train a benchmark neural network Lenet \cite{lecun1998gradient} with 431,080 learnable parameters, i.e., $d=431,080$. \\

In each of our experiments, we simulate a multi-agent system with $n=10$ agents. Out of $n$ agents, we randomly select $f=3$ to be Byzantine faulty. For faulty agents in our experiments, we consider two types of faults, listed as follows:
\begin{itemize}[nosep]
    \item \textbf{Label-flipping} (LF): we scramble the classification labels of data possessed by designated faulty agents. Specifically, suppose a data point of a faulty agent has a label $y\in\{0,...,9\}$, it is changed to $\widetilde{y}=9-y$.
    \item \textbf{Gradient-reverse} (GR): the faulty agents reverse their true gradients. Specifically, suppose the correct gradient of a faulty agent $i$ at iteration $t$ is $s_i^t$, the agent $i$ will send the incorrect gradient $g_i^t=-s_i^t$ to the server, the same as it is defined in Section~\ref{sec:experiments} and Appendix~\ref{apdx:experiments}.
\end{itemize}
We also choose the following parameters: batch size $b=128$, step-size $\eta=0.01$. In each experiment, the training dataset of 60,000 data points are randomly and evenly divided into 10 subsets, one for each agent. \\

\begin{figure}[t]
    \centering
    \includegraphics[width=.75\linewidth]{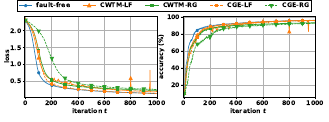}
    \caption{\small{\it The cross-entropy loss and model accuracy, versus the number of iterations in the algorithm, using our algorithm with D-SGD on MNIST with $n=10$ and $f=3$. 
    The two experiments using CWTM are in \em{solid} plots, and the two using CGE are in \em{dashed} plots. The two experiments against LF faults are plotted in \emph{yellow}, while the two against RG are plotted in \emph{green}. 
    We also show the performance of fault-free D-SGD method where the faulty agent is omitted 
    in {\em blue solid} plots.}}
    \label{fig:mnist}
\end{figure}

\begin{figure}[t]
    \centering
    \includegraphics[width=.75 \linewidth]{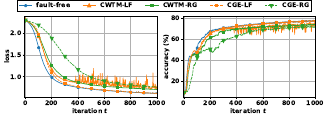}
    \caption{\small{\it The cross-entropy loss and model accuracy, versus the number of iterations in the algorithm, using our algorithm with D-SGD on Fashion-MNIST with $n=10$ and $f=3$. 
    The two experiments using CWTM are in \em{solid} plots, and the two using CGE are in \em{dashed} plots. The two experiments against LF faults are plotted in \emph{yellow}, while the two against RG are plotted in \emph{green}. 
    We also show the performance of fault-free D-SGD method where the faulty agent is omitted 
    in {\em blue solid} plots.}}
    \label{fig:fashion-mnist}
\end{figure}

For each dataset, we compare the two gradient filters discussed in Section~\ref{sec:grad_filters}, CGE and coordinate-wise trimmed mean (CWTM), against two types of faulty agents, LF and GR. To make the results comparable, the random seed is fixed across executions, so that the data points distributed to agents and the selected faulty agents are fixed. The experiments are also compared by a fault-free setting, where the would-have-been faulty agents do not participate the computation. The performance of our algorithm is measured by cross-entropy loss and model accuracy at each step. The results are shown in Figure~\ref{fig:mnist} for MNIST and Figure~\ref{fig:fashion-mnist} for Fashion-MNIST. Since there is a clear trend of converging by the end of 1,000 iterations, we only show performance of the first 1,000 iterations in both figures. \\

As is shown in the two figures, the losses in all our experiments are converging to within a close range of the fault-free results. Still, the performance of the two gradient filters vary when facing different fault types. The differences between performance in presence of Byzantine faulty agents and fault-free case is more significant comparing to numerical experiments in Section~\ref{sec:experiments}, which could be the result of larger value of $\epsilon$ and higher value of $f/n$. Specifically, the performance of CWTM when facing LF faults tabulates during the training process -- which is a possible situation according to our theoretical analysis, since the theoretical results only guarantees that the iterative estimates converges a bounded area around the true minimum. The differences in accuracies between gradient filters and fault-free case are less significant, possibly due to the nature of the loss functions in the two machine learning tasks. Overall, these empirical results indicate that our algorithm can be deployed to solve real-world problem, and the redundancy in cost functions indeed exists in some machine learning problems. 

\end{document}